\documentclass[preprint, 3p,12pt]{elsarticle}
\usepackage{amssymb}
\usepackage{amsmath}
\usepackage{yhmath}
\usepackage{subcaption} 

\usepackage{amsmath}
\usepackage{graphicx}

\usepackage{bbm}
\usepackage{graphicx}
\usepackage{hyperref}
\usepackage{cancel}
\hypersetup{
    colorlinks=true,
    linkcolor=blue,
    filecolor=magenta,      
    urlcolor=cyan,
    pdftitle={Overleaf Example},
    pdfpagemode=FullScreen,
    }
\usepackage{enumitem}
\usepackage[normalem]{ulem}

\makeatletter
\def\namedlabel#1#2{\begingroup
    #2%
    \def\@currentlabel{#2}%
    \phantomsection\label{#1}\endgroup
}
\makeatother

\usepackage{float} 
\usepackage[ruled]{algorithm2e}
\usepackage[noend]{algorithmic}
\usepackage{yhmath}

\def\cF{{\mathcal F}}

\def\sF{{\mathsf{F}}}

\def \st{\mathsf t}

\def \cN {\mathcal N}

\def\sfi{\mathsf{i}}

\def \x {\mathbf x}

\def \y {\mathbf y}

\def \I{\mathbf I}
\def \0{\mathbf 0}

\def \cF{\mathcal F}

\def \st{\mathsf{t}}

\def \bP{\mathbb P}

\def \Q{\mathrm Q}
\def \bN{\mathbb N}
\def\md{{\mathrm d}}

\def\b0{\mathbf 0}
\def \Rb {\mathbb R}

\def \bR {\mathbb R}
\def \D{D}
\def \S{\mathrm{S}}
\def \U{\mathrm U}
\def \L{\text L}

\def \cP{\mathcal P}
\def\bsDelta{\triangle}

\def \bb1{\mathbbm{1}}
\def \abar{\overline{a}}
\def \cbar{\overline{c}}
\def \Abar{\overline{A}}
\def\sfsm{\mathsf{sm}}

\def\sinc{\mathrm{sinc}}

\def \bT{\mathbb T}

\def \cG{\mathcal G}

\def \y{\mathbf{y}}

\def\Unif{\mathsf{Unif}}
\def\Id{\mathsf{Id}}
\def \Pb{\mathbb P}

\def \I {\mathrm I}

\def \sDir{\mathsf{Dir}}
\def \sCat{\mathsf{Cat}}
\def \sfD{\mathsf{D}}
\def \sfd{\mathsf{d}}
\def \sfn{\mathsf{n}}
\def \per{\mathrm{per}}

\def\se{\mathrm{se}}

\usepackage{centernot}

\def \argmin{\mathrm{arg\,min}}

\def\sJ{{\mathsf{J}}}

\def\cF{{\mathcal F}}

\def\sF{{\mathsf{F}}}

\def \bE{\mathbb E}

\def \sW{\mathsf{W}}
\def \sSW{\mathsf{SW}}
\def\sfDir{\mathsf{Dir}}
\def\sF{\mathsf{F}}

\def\Jvor{{J_\mathsf{vor}}}

\def\sfGamma{\mathsf{Gamma}}

\def \sfw{\mathsf{w}}
\def\sfM{\mathsf{M}}

\def \b{\mathrm{b}}

\def\th{\text{th}}
\def\sfy{\mathsf{y}}
\usepackage{xcolor}
\usepackage{bbm}
\usepackage{amsmath,scalerel}
\DeclareMathOperator*{\Bigcdot}{\scalerel*{\cdot}{\bigodot}}

\def\Javg{J_{\mathsf{avg}}}
\def\sfH{\mathsf{H}}

\def\finite{\mathsf{finite}}
\def\sfC{\mathsf{C}}
\def \sfCbar{\overline{\mathsf{C}}}

\definecolor{darkred}{rgb}{.7,0,0}
\definecolor{darkgreen}{rgb}{0,0.6,0}
\definecolor{darkblue}{rgb}{0,0,0.7}
\definecolor{darkmagenta}{rgb}{0.55,0,0.55}
\definecolor{darkteal}{rgb}{0,0.45,0.55}
\definecolor{darkorange}{rgb}{1,0.40,0}

\newcommand{\dad}{\mathsf{datadistribution}}
\newcommand{\gem}{\mathsf{gendistribution}}
\def\genbox#1#2#3#4#5#6{% #1=0/1, #2=color, #3=shape, #4=raise, #5=width, #6=width/2
    \leavevmode\raise#4bp\hbox to#5bp{\vrule height#5bp depth0bp width0bp
    \pdfliteral{q .5 w \csname #2COLOR\endcsname\space RG
                       \csname #3PDF\endcsname{#5}{#6} S Q
             \ifx1#1 q \csname #2COLOR\endcsname\space rg 
                       \csname #3PDF\endcsname{#5}{#6} f Q\fi}\hss}}
\usepackage[noend]{algorithmic}

\usepackage{amsthm}
\newtheorem{theorem}{Theorem}[section]
\newtheorem{lemma}[theorem]{Lemma}

\newtheorem{remark}[theorem]{Remark}

\usepackage{parskip}
\makeatletter
\def\ps@pprintTitle{%
  \let\@oddhead\@empty
  \let\@evenhead\@empty
  \def\@oddfoot{\hfil\textit{\today}}%
  \let\@evenfoot\@oddfoot
}
\makeatother
\usepackage{tikz}
\begin{document}

% \newgeometry{top=1.5cm,bottom=2cm}

\usetikzlibrary{shapes.geometric, arrows.meta, decorations.pathreplacing, positioning, calc, matrix}
\vspace*{-2.5cm}
\begin{frontmatter}

\title{\LARGE Distributional Inverse Homogenization}

% \author[1]{Author\corref{cor1}\fnref{fn1}}
% \cortext[cor1]{Corresponding author}
% \fntext[fn1]{-- was supported through --.}
% \affiliation[1]{organization={Department and Organization},%Department and Organization
%             addressline={--}, 
%             city={--},
%             postcode={--}, 
%             % state={},
%             country={--}}

% % % Order and Inclusion TBD %%
\author[1]{Arnaud Vadeboncoeur \corref{cor1}\fnref{fn1}} %% Author name
% \ead{av537@cam.ac.uk}
\author[1,2]{Mark Girolami \fnref{fn2}} %% Author name
% \ead{mag92@cam.ac.uk}
\author[3]{Kaushik Bhattacharya \fnref{fn3}}
% \ead{bhatta@caltech.edu}
\author[4]{\\Andrew M. Stuart \fnref{fn3}} %% Author name
% \ead{astuart@caltech.edu}

\cortext[cor1]{Corresponding author}
\fntext[fn1]{AV (av537@cam.ac.uk): supported through the EPSRC ROSEHIPS grant EP/W005816/1.}
\fntext[fn2]{MG (mag92@cam.ac.uk): supported by a Royal Academy of Engineering Research Chair and UKRI grants EP/X037770/1, EP/Y028805/1,
EP/W005816/1, EP/V056522/1, EP/V056441/1. }
\fntext[fn3]{KB (bhatta@caltech.edu ): supported by a US Office of Naval Research DDCR MURI (award N00014-23-1-2654).}
\fntext[fn4]{AMS (astuart@caltech.edu): supported by a Department of Defense (DoD) Vannevar Bush Faculty Fellowship (award N00014-22-1-2790) and US Office of Naval Research DDCR MURI (award N00014-23-1-2654).}
% AMS was supported by a Department of Defense Vannevar Bush
% Faculty Fellowship and by the SciAI Center, funded by the Office of Naval Research (ONR), under grant N00014-23-1-2729.}

% %% Author affiliation
\affiliation[1]{organization={Department of Engineering, University of Cambridge},%Department and Organization
            addressline={7a JJ Thomson Ave}, 
            city={Cambridge},
            postcode={CB3 0FA}, 
            % state={},
            country={UK.}}
\affiliation[2]{organization={The Alan Turing Institute},%Department and Organization
            addressline={96 Euston Rd.}, 
            city={London},
            postcode={NW1 2DB}, 
            % state={},
            country={UK.}}
\affiliation[3]{organization={Mechanical and Civil Engineering,  California Institute of Technology},%Department and Organization
            addressline={1200 E California Blvd}, 
            city={Pasadena},
            postcode={CA 91125}, 
            % state={CA},
            country={US.}}
\affiliation[4]{organization={Computing and Mathematical Sciences,  California Institute of Technology},%Department and Organization
            addressline={1200 E California Blvd}, 
            city={Pasadena},
            postcode={CA 91125}, 
            % state={CA},
            country={US.}}

% %% Abstract
\vspace{-1.5em}
\begin{abstract}
    For many materials, macroscopic mechanical behavior is determined by an intricate microstructure.
    Understanding the relation between these two scales helps scientists and engineers design better materials. 
    The relation which maps microstructure to bulk material properties can be understood via the well-established theory of homogenization.
    However inverting the homogenization process, to recover microstructural information from measured macroscopic properties, is fraught with difficulties because of the averaging processes that
    underlie homogenization.
    Therefore, scientists and engineers usually need recourse to more invasive, often highly localized, investigations to estimate the microstructure.
    In this work, we develop a noninvasive methodology by which one can leverage large collections of measured bulk material properties to infer information about the \textit{statistics} of microstructure at a global level.
    We call this, \textit{distributional inverse homogenization}.
    We study this problem in one and two dimensions, considering both 
    periodic and stochastic homogenization. We demonstrate the methodology
    in the context of 2D Voronoi constructions and underpin the observed
    empirical success with theory in 1D. We also show how the natural spatial 
    variability of microstructure can be exploited to gather data that enables distributional inversion.
    And we concurrently learn a surrogate model, approximating the homogenization map, that
    accelerates the resulting computations in this setting.
    The work formulates a new class of inverse problems, bridging ideas from
    probability and homogenization to facilitate the inference
    of microstructural material variability from macroscopic  measurements.
\end{abstract}

\begin{keyword}
Homogenization, Inverse Problems, Distributional Inference.
\end{keyword}

\end{frontmatter}

% \clearpage
\tableofcontents

\clearpage

\section{Introduction}

The theory of homogenization can be used to define a map from microscopic material properties to the resulting macroscopic material properties. Inverting this map is difficult, because it involves a complex
averaging procedure, meaning that multiple microstructures map onto the same macroscopic response. However, inferring information about the microstructure from macroscopic properties, obviating the need for intrusive methods such as etching and microscopy, is potentially of great interest. Whilst inverting the homogenization map is severely ill-posed, we show in this paper that inverting for the \emph{statistics} of the microstructure is a tractable problem. We refer to this as \textit{distributional inverse homogenization}.

Statistical characterization of microstructure has many applications. In alloy development, one may wish to understand the impact of certain processes on the distribution of grain size, shape, orientation~\cite{rollett2007three}. In metamaterials, regularity of periodic microstructure is paramount for quality control~\cite{catalanotti2016generation}. Natural and engineered cellular materials such as wood and foam are highly heterogeneous; characterizing this inherent variability can help gauge suitability of production batches for industrial use~\cite{Gibson_2003, fratzl2007nature}. In plastics, the length of polymer chains and crystallinity determine many mechanical properties~\cite{gent1970hypothetical}. In concrete, void sizes and geometric disposition affect both mechanical properties and electrolyte solution transport behavior, affecting in turn the strength and longevity of structures~\cite{martin2000study, martin2001numerical, martin1999thesis}.
In such applications statistical information is highly relevant to understanding how manufacturing controls material and how operational conditions may affect materials at a microstructural level.

In this work, we frame the task of inferring microstructure from a generative modelling perspective. That is, given expert knowledge, one poses a statistical generative model for a posited {microstructural class}. Such a model class can be hypothesized on the basis of a combination of experience, first principles and collection of small numbers of microscale images.
It may help the reader to consider the following example: posit a model class for a material as that of a polycrystaline microstructure, of $M$ crystal types, generated via a Voronoi diagram with a given number of uniformly distributed nucleation sites. The (random) map taking the properties of the $M$ crystal types to the random Voronoi diagram of the microstructure is a physically interpretable generative model.
The hypothesized microstructural generative model, for given sets of physically interpretable parameters, generates a distribution on microstructure. One then relates the sampled microstructures from the hypothesized model to macroscopic material properties using the theory of homogenization. We are then left with the task of calibrating the physically interpretable model parameters for our hypothesized model class to a dataset of macroscopic material properties from the material specimen(s) of interest. This model calibration/data-fitting stage is performed through matching
the distribution of the homogenized coefficients sampled from the generative model with the distribution of observed macroscopic responses. This is the procedure that we refer to as distributional inverse homogenization.

We develop our methodology in the setting of both periodic and stochastic homogenization. In the former, one a assumes a regular repeating microstructure extending over the entire domain to be homogenized; in the later, one assumes instead that the distribution over the microstructure is given by a random process whose $n$-point statistics are invariant to translation and that any single random realization of the field is statistically representative of any other random realization. In both cases, homogenization boils down to mapping the material configuration in a representative computational \textit{cell} to a single summary coefficient. This coefficient can be anisotropic, even if the microstructural material constituent properties are isotropic, when the underlying geometry of the material layout is anisotropic. Central to our investigation is the relation between 
microstructural constituent
material properties, the distribution of material volume fractions, their microscale geometric organization, and the resulting distribution on bulk (homogenized) properties. As will be seen, the distribution on macroscopic material properties for randomization of the microstructure is extremely information rich. Distributional inverse homogenization exploits this information.

In this paper we focus on homogenizing operators for scalar solution fields in one and two spatial dimensions. These are representative of physical applications such as steady-state heat flow and electrostatics.
We confine our attention to scalar problems in one and two spatial dimensions because of analytical tractability (one dimension) and straightforward but informative computations (Voronoi microstructures in two dimensions); this leads to a straightforward exposition of the ideas. However, we expect the work herein to generalize to distributional inverse homogenization for elasticity in two and three spatial dimensions, where a 4-tensor field characterizes the microstructure. Indeed investigation of this inverse problem in
the context of elasticity constitutes an interesting avenue for further work.

\begin{figure}[t]
    \centering
    \resizebox{0.8\textwidth}{!}{
\begin{tikzpicture}[font=\sffamily]

\node (left_grid) {
    \begin{tabular}{cc}
        \includegraphics[width=3.2cm]{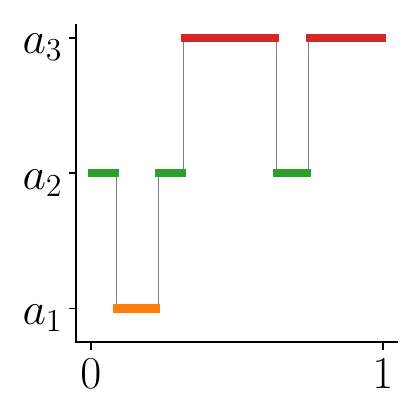} &
        \includegraphics[width=3.2cm]{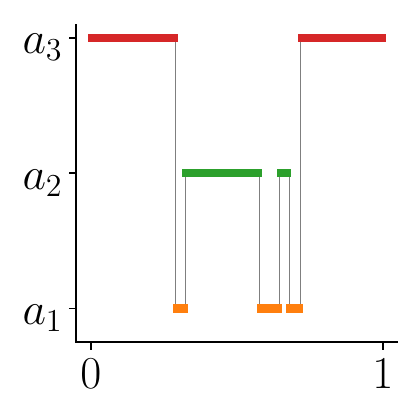} \\
        \includegraphics[width=3.2cm]{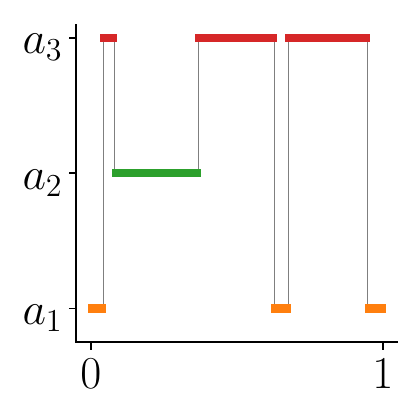} &
        \includegraphics[width=3.2cm]{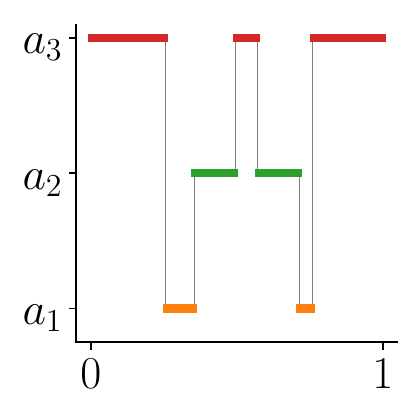}
    \end{tabular}
};

\node[below=0.2cm of left_grid] (dots1) {\Huge $\dots$};
\node[above=0.2cm of left_grid] (same_vf) {Same Vol. Frac.};

\draw [decorate, decoration={brace, amplitude=10pt, mirror, raise=4pt}, thick]
    ($(dots1.south west) + (-1,0)$) -- ($(dots1.south east) + (1,0)$);

\node[below=1.cm of dots1] (compressed1) {
    \includegraphics[width=4.cm]{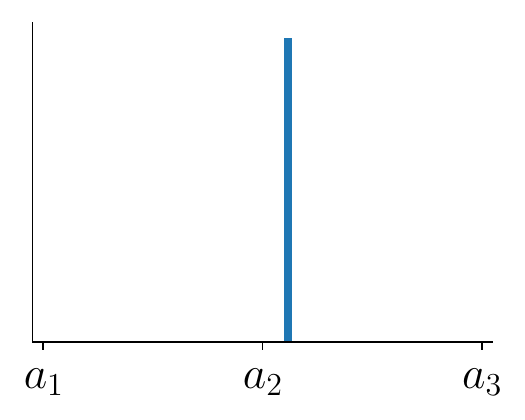}
};

\begin{scope}[xshift=8cm]
    \node (right_grid) {
        \begin{tabular}{cc}
            \includegraphics[width=3.2cm]{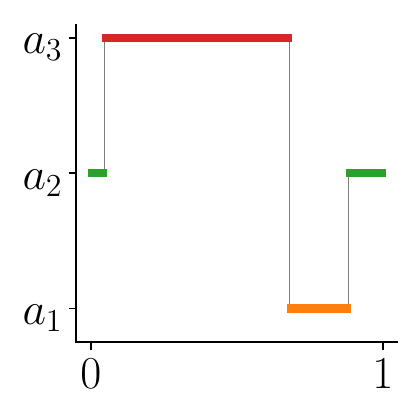} & \includegraphics[width=3.2cm]{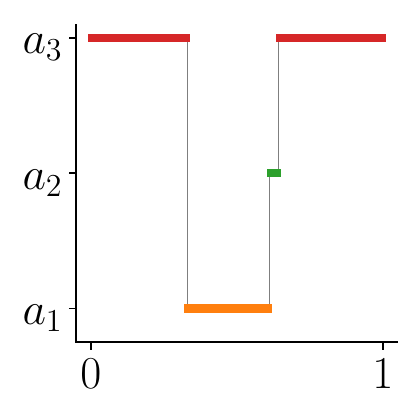} \\
            \includegraphics[width=3.2cm]{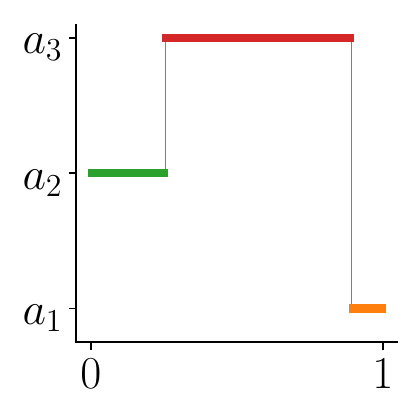} & \includegraphics[width=3.2cm]{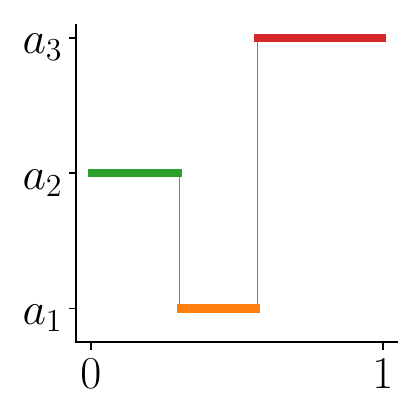}
        \end{tabular}
    };

\node[below=0.2cm of right_grid] (dots2) {\Huge $\dots$};
\node[above=0.2cm of right_grid] (same_vf) {Random Vol. Frac.};

    \draw [decorate, decoration={brace, amplitude=10pt, mirror, raise=4pt}, thick]
        ($(dots2.south west) + (-1,0)$) -- ($(dots2.south east) + (1,0)$);

    \node[below=1.cm of dots2] (compressed2) {
        \includegraphics[width=4cm]{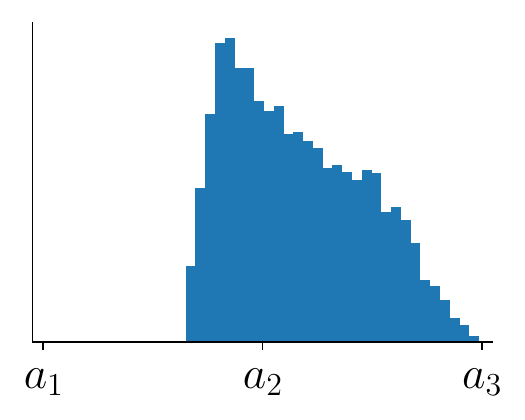}
    };
\end{scope}
\end{tikzpicture}
}
\caption{The left half of this figure illustrates why inverting the homogenization map for microstructure
is hard. The right half of the figure shows why inverting the homogenization map for statistics of
the microstructure is tractable. {Both left and right show materials comprised of the same
three constituent microstructures with $(a_1,a_2,a_3)=(1,2,3)$.} 
On the left each constituent occupies the same volume fraction in
each realization of the material, and hence all map to
a single bulk property (displayed, lower left). On the right the volume fractions are chosen at random from
a statistical distribution; this distribution may be estimated from the (displayed, lower right) distribution on
bulk properties. The particular distribution of volume fractions in the right half of this figure is described and used in~\ref{sec:proof_claim_ident}.}  
\label{fig:1D_explanation_varVolFrac}
\end{figure}

\subsection{Contributions and Outline}\label{sec:contrib_outline}
We now summarize the main contributions of this work.
\begin{enumerate}
    \item[\namedlabel{con:C1}{(C1)}] Inferring a unique microstructure from measurement of a bulk material property is not well posed. We demonstrate however, both with theory and numerical experiment, that inferring the statistics of microstructure, from measurements of a bulk material properties, is well-posed.
    \item[\namedlabel{con:C2}{(C2)}] We demonstrate the effectiveness and practical relevance of inverting for the statistics of microstructure in a suite of numerical experiments focused on Voronoi microstructures in the periodic and stochastic homogenization setting in two dimensions. {The periodic i.i.d. case is extended to the setting of elastic homogenization.}
    \item[\namedlabel{con:C3}{(C3)}] We develop a novel surrogate learning strategy to overcome computational challenges arising in distributional inversion for stochastic homogenization, making the methodology practical.
    \item[\namedlabel{con:C4}{(C4)}] We develop statistical models for locally periodic and locally stationary-ergodic Voronoi fields to approximate microstructural variations in large material specimens. Using this construction, we show how distributional inversion can exploit the natural spatial variability of materials to statistically characterize microstructure.
\end{enumerate}

Figure~\ref{fig:1D_explanation_varVolFrac} illustrates \ref{con:C1}.
The basic approach that we adopt in this paper can be described at a high level as follows.
Let $\dad$ and $\gem$ denote two probability distributions. The first is defined by the collection of
measurements at the bulk (homogenized) scale. The second is defined by a generative model that generates
microstructures randomly and maps them to bulk properties through forward homogenization (periodic or
stochastic). The generative process is parameterized by $\theta$, a parameter which controls the generation of
microstructures. Let $\sfD$ denote some form of distance measure between probability distributions.
Then the methodology underlying the computations in \ref{con:C2}--\ref{con:C4} is to solve an
optimization problem for parameter $\theta^*$ that minimizes the dissimilarity between these distributions:
\begin{equation}
    \label{eq:optp}
    \theta^\star=\argmin_{\theta}\, \sfD\bigl(\dad,\gem(\theta)\bigr).
\end{equation}
Explaining the details of the two distributions $\dad$ and $\gem$ and the choice of $\sfD$ defines
the basic methodology defined and deployed in this paper. The remainder of this section is organized as follows: Subsection~\ref{sec:litrev} reviews related work on the topics of generative modelling, distributional inference and homogenization; Subsection~\ref{sec:notation} highlights important notation.
In Section~\ref{sec:background} we provide background on homogenization and distributional inversion; Section~\ref{sec:methodology} then presents the proposed methodological contribution of this work, addressing \ref{con:C1} and explaining \eqref{eq:optp} in detail; Section~\ref{sec:1D_Per_Hom} is devoted to the 1D component of \ref{con:C1}. Section~\ref{sec:2D_per_hom} extends numerical investigation to the 2D periodic case, Section~\ref{sec:stoch_hom} in turn focuses on 2D stochastic homogenization, and Section~\ref{sec:elast} extends some of results to the case of elasticity; together these sections address the 2D component of \ref{con:C1}, together with \ref{con:C2} and \ref{con:C3}.
Contribution \ref{con:C3} amounts to learning a judicious approximation of the forward homogenization map that makes evaluation of $\gem(\theta)$ orders of magnitude faster.

\subsection{Related Works}\label{sec:litrev}

In this section we review relevant literature. We first discuss recent developments in generative modelling which provides the tools necessary for our work. This is followed by discussion of distributional inversion. Finally we discuss select works in microstructure design, making links between material science and various  objective functionals used for distributional inversion. 

\subsubsection{Generative Modelling}
In recent years, statistical inference and machine learning has focused considerable attention on tasks related to modelling datasets at the distributional level. Many works in this direction focus on generating new samples from a distribution implied by a dataset~\cite{bond2021deep, songscore, goodfellow2014generative}. A subset of these methods use divergences and metrics on the space of probability distributions which are well behaved when comparing empirical distributions: mixtures of Dirac measures. Such metrics/divergences include Wasserstein distances~\cite{panaretos2019statistical}, Energy Distance~\cite{sejdinovic2013equivalence, szekely2013energy}, Maximum Mean Discrepancy (MMD)~\cite{gretton2012kernel} and Sliced-Wasserstein~\cite{bonneel2015sliced} ($\mathsf{SW}$). The $\sSW$ distance, which plays an important role in this paper, has many variants~\cite{flamary2024pot, flamary2021pot, kolouri2019generalized, chenaugmented, deshpande2019max, nguyen2023energy, nguyendistributional}. Some works make use of generative modelling in the context of physical simulation~\cite{gao2024generative} or for solving inverse problems~\cite{goh2019solving, glyn2025primer}. However, these works do not target distributional inversion.

\subsubsection{Distributional Inversion}
The concept of distributional inversion as applied here was introduced and developed,
primarily for applications in the physical systems in~\cite{akyildiz2025efficient, vadeboncoeur2025efficient}.
In these works, problems typically approached via hierarchical Bayes~\cite{gelman1995bayesian} and empirical Bayes~\cite{robbins1992empirical} are re-interpreted as distributional inversion tasks. 
The resulting task requires a very large number of forward PDE solves over a relatively small
set of parameter values. The methodology introduced in~\cite{akyildiz2025efficient, vadeboncoeur2025efficient} establishes an approach to concurrently learn surrogate models of the forward
solve, needed at each of the optimization algorithm defining distributional inversion. Such concurrently learned surrogates are also explored in~\cite{zhang2026bilo}.
Related to the presented work are~\cite{li2025inverse, li2024stochastic, li2025least, bernton2019parameter}. Relevant application of distributionally inferred priors can be found in~\cite{vadeboncoeur2025geometric}.

\subsubsection{Microstructure Inverse Design and Indirect Investigation}
We now highlight works in the context of homogenization where objectives related to ours are explored. Of particular relevance to our work are~\cite{cherkaev2001inverse, cherkaev1998inverse}, in which the authors develop methodology for recovering spectral representations of structural information of microstructure for two-phase materials using electromagnetic measurement.
In~\cite{quey2018optimal}, design and microstructural description questions are explored by optimizing polycrystalline structure to match desired distributional properties, such as for grain size. More broadly the tools in~\cite{quey2022neper, quey2011large, sundararaghavan2006design, sternfels2011stochastic} may be of great use in answering such questions. We also highlight~\cite{al2007statistical}, which develops related methodology to understand microstructure and the associated local stress fields. In~\cite{sigmund1994materials} the author presents methodology for obtaining desired bulk material properties by designing microstructure. Questions related to forward UQ and random microstructure are explored in~\cite{yi2025cooperative}. Other noteworthy works focus on microstructural investigation and design via machine learning~\cite{rixner2022self, senthilnathan2024comparison, liu2015predictive, chatzopoulos2024physics, sundararaghavan2005classification}, some also use generative modelling methodology~\cite{zang2026design}.

\subsection{Notation}\label{sec:notation}
This work is naturally expressed in terms of the notation of probability theory, nevertheless we prioritize wherever possible notation from mechanics. We now provide an introduction to relevant notation used throughout the paper.

We denote by $v_i$ the $i^\th$ element, $i=1, \hdots, n,$ of an ordered $n-$tuple $v$.
We define functions either by specifying domain and co-domain as $f:X\rightarrow Z$, or, sometimes more conveniently, by specifying $f:x\mapsto z$ where it should be understood from context what is the implied domain and co-domain. On occasion we also overload notation for functions: we include the (co-)domain in the definition of a function so that a function denoted with the same symbol but with different (co-)domain should be understood as a different (likely very related) function.

We define the simplex 
\begin{align}
    \bsDelta_{M-1} = \{\rho\in \bR^{M}:\sum_{m=1}^M \rho_m=1, \rho_m\geq0 \text{ for } m=1,\hdots,M\}.
\end{align}
One important application of the simplex is that elements in it define a discrete probability measure on $M$ discrete elements: each $\rho_i$ is the
probability of element $i$. The simplex is also useful for describing volume fractions which,
like the probabilities, must add to $1$ and be non-negative.

For exposition, let $(X, \cF)$ be a measurable space and $\bP, \bP'$ be probability measures.
We denote superscripts in parenthesis when the values these objects take are random draws:
$v^{(i)}\sim\bP$ is the $i^\th$ random draw from $\bP$.
We use ``$\sim$'' both to mean ``drawn from'', and ``distributed according to'' depending on there being a superscript in parenthesis on the variable.
Given function $f$ and probability measure $\bP$, the pushforward $f_\#\bP$ is defined by the identity
 $f_\#\bP(B)=\bP(f^{-1}(B))$, holding for any $B\in\cF$ and measurable $f$;
a random $z^{(\Bigcdot)}\sim f_\#\bP$ is drawn by first sampling $x^{(\Bigcdot)}\sim\bP$, then setting $z^{(\Bigcdot)}=f(x^{(\Bigcdot)})$.

To draw a sample from the convolution of probability measures, denoted $z\sim(\bP\ast\bP')$, one samples $x^{(\Bigcdot)}\sim\bP$ and $x'^{(\Bigcdot)}\sim\bP'$ and sets $z^{(\Bigcdot)}=x^{(\Bigcdot)}+x'^{(\Bigcdot)}$.
By $\bP_n\xrightarrow[]{}\bP$ we mean converges in distribution.
We denote {empirical expectations} with $\widehat{\bE}$.
From context it should be clear if (sub/super)scripts denotes labels or other operations like exponentiation. 
The symbol $e^m$ is a vector of all $0$'s with a 1 in position $m$. We use $\langle\Bigcdot, \Bigcdot\rangle$ to denote the Euclidean inner product.

For a map $f:X\rightarrow Z$ to be \textit{injective} means that $f(x_1)=f(x_2)$ automatically implies that $x_1=x_2$, i.e. the map is invertible for elements in its image.
We say that identifying $x$ from $f(x)$ is identifiable if $f$ is injective.

In this work we make use of two statistical approaches to modelling random microstructure: (i) \textit{independent identically distributed} (i.i.d.) models, whereby a random realization of a microstructure contains no information pertaining to any other realization from the model;  and (ii) \textit{statistical copulas}, for
which it is possible specify the one- and two-point statistics of random spatial functions, and measurements of material properties at different locations are
statistically identical and correlated, but decorrelate on lengthscales defined
by the copula construction. In this work, the generative models used for the inference task will always be of type (i). However we apply our methodology to synthetic data of both types (i) and (ii). Data of type (i) enables us to conduct proof-of-concept
studies; data of type (ii) enables us to mimic realistic data acquisition scenarios. The construction of these two statistical models are detailed in Sections~\ref{sec:2D_per_hom}-\ref{sec:stoch_hom}.

\section{Background}\label{sec:background}
In this section we give a brief overview of homogenization for scalar elliptic operators, both for periodic and stationary coefficient fields. We then introduce distributional inversion, and finalize this section with a note on surrogate-model-enabled acceleration to amortize the compute cost of homogenization in the context of distributional inversion.

\subsection{Homogenization of Scalar Elliptic Operators}
The exposition herein is largely based on~\cite{bensoussan2011asymptotic,blanc2016some, pavliotis2008multiscale}.
We are interested in problems of the form
\begin{subequations}\label{eq:darcy_mscale}
\begin{alignat}{2}
    -\nabla\cdot(A^\varepsilon\nabla u^\varepsilon) &= f, \; &&\mathrm{for}\;x \in \D\subset\bR^d,\\
    u^\varepsilon&=0,\; &&\mathrm{for}\;x\in \partial \D,
\end{alignat}
\end{subequations}
where $A^\varepsilon(x)=A(x/\varepsilon)$, for small $\varepsilon$, and $A:\Q\rightarrow \sfM_d$ where $\sfM_d$ is the space of real, symmetric, and uniformly elliptic $d\times d$ matrices; $\Q$ is the ``material cell'' whose definition depends on the type of homogenization procedure used to model the dependence between the material micro-structure and corresponding bulk properties. We assume the forcing $f$ to be bounded. As $A^\varepsilon$ varies on a much faster scale than does the solution $u^\varepsilon$, one can 
employ homogenization to identify a constant coefficient $\Abar$, on $D$, for which the solution to~\eqref{eq:darcy_mscale} is well approximated by solution $u$ of the PDE
\begin{subequations}\label{eq:darcy_mscale2}
\begin{alignat}{2}
    -\nabla\cdot(\Abar\nabla u) &= f, \; &&\mathrm{for}\;x \in \D\subset\bR^d,\\
    u&=0,\; &&\mathrm{for}\;x\in \partial \D.
\end{alignat}
\end{subequations}
The type of homogenization used will depend what structure of $A^\varepsilon$ can be exploited for averaging. In this work we study periodic and stochastic homogenization.

\subsubsection{Periodic Homogenization}\label{sec:per_hom}
In periodic homogenization, we make the identification $\Q:=\bT^d_{[0,1)}$, so $A:\Q\rightarrow \sfM_d$ is 1-periodic. 
The homogenized coefficient $\Abar$ corresponding the cell micro-structure $A$ is
\begin{align}
    \overline{A} = \int_{\Q}\left(A(y)+A(y)\nabla\chi(y)^\top\md y\right),
\end{align}
where the corrector function $\chi:\Q\rightarrow\bR^d$ is in $H_{\mathrm{per}}^1$ and weakly satisfies the cell problem
\begin{align}
    -\nabla_y\cdot(\nabla_y\chi A^\top)=\nabla_y\cdot A^\top, \quad\quad  0 = \int_{\Q}\chi(y)\, \md y.
\end{align}
Note that the map $A \mapsto \chi$ may be viewed as a complex nonlocal averaging procedure; the homogenization map is then defined by using $\chi$ and $A$ in a simple integration over the material cell.  It is useful to define the homogenization map 
\begin{align}\label{eq:sfFdagger}
    \sF^\dagger:A\mapsto \chi\mapsto\Abar.
\end{align} 
This map may be readily approximated, for example using finite elements to solve for $\chi$, and quadrature to form $\Abar.$  The latter map induces a complicated nonlinear
averaging of $A$. If $A$ is chosen at random from a probability distribution, parameterized by $\theta$, then (approximate) evaluation of $\sF^\dagger$ defines $\gem(\theta)$, appearing in the optimization
problem \eqref{eq:optp}.

\subsubsection{Stochastic Homogenization}\label{sec:background:stoch_hom}
Let $(\Omega,\mathcal{F}, \Pb_\omega)$ be a probability triple. 
In stochastic homogenization the cell microstructure $A$ is now
defined on the whole of $\bR^d$, and depends on a randomly chosen parameter in $\Omega.$ Specifically we take $A:\bR^d\times\Omega\rightarrow \sfM_d$. Furthermore we assume that the field is stationary-ergodic:
 $\forall k\in\mathbb{Z^d}$, $A(x+k, \omega)=A(x, h_k\omega)$, where $h_k$ is an ergodic group action~\cite{blanc2016some}. Then the homogenized coefficient $\Abar$ is well-defined through an almost
 sure limiting procedure.
To define $\Abar$, we approximate the unbounded domain $\bR^{d}$ by a family of finite domains, and limits taken in the size of this domain. The resulting methodology is also the basis of computational methods for stochastic homogenization.  

We introduce the truncated domain $\Q_\sfn=(0,\sfn)^d$, with $\sfn$ large and $A_\sfn=A|_{\Q_\sfn}$. For given $\omega \in \Omega$ we may then define
\begin{align}\label{eq:stoch_hom_avg}
    \overline{A}_\sfn(\omega) = \int_{\Q_\sfn}\left(A_\sfn(y, \omega)+A_\sfn(y, \omega)\nabla\chi_\sfn(y,\omega)^\top \right)\md  y,
\end{align}
where $\chi_\sfn:\bT_{[0,\sfn)}^d\rightarrow\bR^d$ is periodic and viewed as an element in $H_{\sfn\mathrm{-per}}^1$ which weakly satisfies the $\sfn$-cell problem
\begin{align}\label{eq:stoch_hom_corrector}
    -\nabla_y\cdot(\nabla_y\chi_\sfn A^\top_\sfn)=\nabla_y\cdot A^\top_\sfn, \quad\quad  0 = \int_{\Q_\sfn}\nabla\chi_\sfn(y, \omega)\, \md y.
\end{align}
Here, as in the periodic case, one can use, for example, finite elements to obtain the solution of the corrector problem and then quadrature to define $\overline{A}_\sfn(\omega).$  We denote by $\sF_\sfn^\dagger$ the map from $A_\sfn$ to $\Abar_\sfn$, analogously to~\eqref{eq:sfFdagger}.

The homogenized coefficient $\Abar$ is obtained as the almost sure limit $\sfn\rightarrow\infty$ of $\Abar_\sfn(\omega)$. We can define the stochastic homogenization  map $\cG^\dagger:\bP_\omega\mapsto\Abar$ because $\bP_\omega$ contains all information of $A$ which is not lost in the almost sure limiting process. In practice, however, we work with finite $\sfn$,
assuming that $\sfn$ is large enough so that $\Q_\sfn$  itself is large with respect to typical
lengthscales of $A_\sfn$.
Using the fact that $A_\sfn$ is approximately stationary-ergodic in $\Q_\sfn$,   
we see that $\sF_\sfn^\dagger$ thus implements an effective averaging procedure over both variation in
$\bR^d$ and in $\Omega$. 
To reduce the variance in estimating $\Abar$ from finite $\sfn$ it is common to average over many random realizations of $\Abar_\sfn(\omega)$ via 
\begin{align}\label{eq:stoch_hom_Monte_Carlo}
    \Abar_\finite=\frac{1}{\Javg}\sum_{j=1}^{\Javg} \Abar_{\sfn}^{(j)}, \quad \Abar_\sfn^{(j)}\sim \sF_\sfn^\dagger\bigl(A_\sfn(\Q_\sfn,\Bigcdot)\bigr)_\#\bP_\omega.
\end{align}
The notation $\sF_\sfn^\dagger\bigl(A_\sfn(\Q_\sfn,\Bigcdot)\bigr)$ 
expresses the fact that $\sF_\sfn^\dagger$ acts on
$A_\sfn(y,\omega)$ as a function of $y \in \Q_\sfn$, for fixed $\omega$, to produce a constant (with respect to $y$) object; and that the resulting function of $\omega$ alone can be used to pushforward the underlying distribution on $\omega.$
The precise definition of $\Abar$ is recovered by taking the large $(\sfn,\Javg)$ limit of $\Abar_\finite$ as defined
in the preceding identity. 
In the remainder of this paper, whenever $\cG^\dagger$ is invoked, one should assume the computation is performed via~\eqref{eq:stoch_hom_Monte_Carlo} to obtain $\Abar_\finite\approx\Abar$.
There is a tradeoff in approximation efficiency between the size of $\sfn$ and $\Javg$~\cite{gloria2012numerical, gloria2012optimal, gloria2011optimal}. We have worked with choices
of the pair which work well empirically.

\subsection{Distributional Inversion}\label{sec:background:dist_inv}

In distributional inversion -- equipped with some forward model $G^\dagger$ --  we assume to have access to data of the form
\begin{subequations}
\begin{align}
    \sfy^{(n)}&=G^\dagger(z^{(n)})+\xi^{(n)},\\
    \xi^{(n)}&\sim\bP_\xi:=\cN(0, \sigma^2 \I),\\
    z^{(n)}&\sim\bP_z^\dagger,
\end{align}
\end{subequations}
with $\sfy^{(n)}\in\bR^{d_\sfy}$. (Other noise models for $\xi^{(n)}$ could be used, but we choose this
specific one for expository simplicity.)
Probability measure $\bP_z^\dagger$ is the object of interest, and we would like to infer it from the data $\sfy^{(n)}.$ 
We collect data $\{\sfy^{(n)}\}_{n=1}^N$ into an empirical distribution 
\begin{align} \label{eq:emp}
    \bP^N_\sfy=\frac{1}{N}\sum_{n=1}^N\delta_{\sfy^{(n)}};\quad \delta_{\sfy^{(n)}}(B)=\begin{cases}
    1,\; \sfy^{(n)}\in B,\\
    0,\; \sfy^{(n)} \notin B.
\end{cases}
\end{align}
This distribution is simply a collection of points; to sample from $\bP_\sfy^N$ one simply selects points at random.

\subsubsection{Optimization Problem}
 We assume to have a class of parametrized probability measures $\cP_{\Theta}$ and a map $\bP_z:\Theta\rightarrow\cP_{\Theta}$. 
The inference task is defined by the optimization problem 
\begin{align}\label{eq:measure^match_obj}
    \theta^\star\in\underset{\theta\in\Theta}{\argmin}\; \sfD\bigl(\bP^N_y, \bP_\xi * (G^\dagger)_\#\bP_z(\theta)\bigr).
\end{align}
Here $\sfD:\cP\times\cP\rightarrow\bR_{\geq0}$ is a divergence between probability measures which is well defined under empiricalization. Recall from Section~\ref{sec:notation} that ``$\ast$'' denotes the convolution operator between distributions (we add noise to the output of the generative model according to the observational noise distribution, $\bP_\xi$). 
Also recall from Section~\ref{sec:notation} that ``$\#$'' denotes the pushforward of the distribution on the right, under the map on the left. Computing $\sfD\bigl(\bP^N_y, \bP_\xi * (G^\dagger)_\#\bP_z(\theta)\bigr)$ involves three steps: (i) sampling a collection of i.i.d. realization of the form $z^{(\Bigcdot)}\sim\bP_z(\theta)$, applying $G^\dagger(z^{(\Bigcdot)})$ to them (the pushforward), and adding noise $\xi^{(\Bigcdot)}\sim\bP_\xi$ (the convolution); (ii) selecting a collection of data from $\bP^N_y$; (iii) evaluating $\sfD$ for the two collections of particles.

If the true $\bP_z^\dagger$ is obtained from the map $\bP_z$ by choosing $\theta^\dagger \in \Theta$ and if $N$ is large then it is natural to expect $\theta^\star \approx \theta^\dagger$. This is because, if 
$G^\dagger$ does not compress variations in the input, then
$\bP_\sfy^N$ will contains information about the choice of $\theta.$

In this work we employ, for $\sfD$, the Sliced-Wasserstein metric, defined by
\begin{align} \label{eq:sliced}
    \sSW_2^2(\bP, \bP') = \int_{\mathbb{S}^{d-1}}\sW_2^2(P_{\varrho\,\#}\bP, P_{\varrho\,\#}\bP')\md \varrho,
\end{align}
where $\mathbb{S}^{d-1}$ is the sphere embedded in $\bR^d$, and $P_\varrho(\,\cdot\,)=\langle \varrho, \,\cdot\,\rangle$. Numerically, the integral is approximated with $N_\varrho$ Monte-Carlo samples
\begin{subequations}\label{eq:emSW}
\begin{align}
    \widehat{\sSW_2^2}(\bP, \bP')&= \frac{1}{N_\varrho}\sum_{i=1}^{N_\varrho} \sW_2^2(P_{\varrho^{(i)}\,\#}\bP, P_{\varrho^{(i)}\,\#}\bP');\\ \varrho^{(i)}&=\tilde{\varrho}^{(i)}/\|\tilde{\varrho}^{(i)}\|,\quad \tilde{\varrho}^{(i)}\sim\cN(0, \I_{d_y}).
\end{align}
\end{subequations}
This metric is widely used because the Monte Carlo approximation requires evaluation only of one dimensional Wasserstein distances, which are analytically tractable. In contrast the Wasserstein distance is hard to compute in higher dimensional spaces or when there is large amounts of data. 
Thus, assuming $G^\dagger$ and $\bP_z$ are differentiable, we can approximate~\eqref{eq:measure^match_obj} via standard gradient descent through stochastic optimization methods such as Adam~\cite{kingma2014adam}.

\subsubsection{Surrogate Acceleration}\label{sec:background:surrogate}

One might be interested in replacing the computation of $G^\dagger$ using a surrogate model. This serve two purposes: i) it amortizes the cost of the forward map; and ii) it enables the differentiation of possibly ``black-box''\slash discontinuous forward maps $G^\dagger$.  Ideally, we would like our surrogate model, $G^\phi$ with parameters $\phi$, to be accurate for $z\sim\bP_z^\dagger$. As $\bP_z^\dagger$ is not known, we instead choose $\phi$ so that $G^{\phi}(z)\approx G^\dagger(z)$ for $z\sim\bP_z(\theta')$ with $\theta'$ an estimate of $\theta^\star$.
We formalize this as a bilevel optimization problem seeking
\begin{subequations}
\begin{align}\label{eq:surrogate_active}
    \theta^\star&\in\underset{\theta\in\Theta}{\argmin}\; \sfD(\bP^N_y, ( G^{\phi^\star{(\theta)}})_\#\bP_z(\theta)),\\ 
    \text{such that}\;\phi^\star{(\theta)} &\in \underset{\phi}{\argmin}\;\bE_{z\sim\bP_z(\theta)}\left[\|G^\phi(z) - G^\dagger(z)\|^2_2\right].\label{eq:surrogate_active_constraint}
\end{align}
\end{subequations}
This methodology is developed in depth in~\cite{akyildiz2025efficient, vadeboncoeur2025efficient, Zhangetal2026, zhang2026bilo} with various implementation strategies and approximations involving physics residuals or non-differentiable ``black-box'' models.
If one were interested in using surrogate acceleration in the periodic case with neural operators, it is straightforward to combine~\cite{Bhattacharyaetal2024} and~\cite{vadeboncoeur2025efficient, akyildiz2025efficient}. This work will explore a different surrogate modelling strategy in the stochastic homogenization case, using simple feedforward multilayer perceptron neural networks.

\section{Methodology}\label{sec:methodology}
\begin{figure}[t]
\begin{center}
\resizebox{0.8\textwidth}{!}{
\begin{tikzpicture}[
    >=Stealth, 
    thick,
    large arrow/.style={->, >={Stealth[scale=1.5]}} 
]

    \draw (0,0) rectangle (4,4);
    \node at (2, 4.5) {\Large \textbf{Specimen}};
    \node at (11.8, 4.8) {\textbf{Data Distribution}};

    \newcounter{sqcount}
    \setcounter{sqcount}{0}
    \foreach \y in {3.5, 2, 0.5} {
        \foreach \x in {0.5, 1.5, 3.5} {
            \stepcounter{sqcount}
            \node[draw, minimum size=0.4cm] (sq-\thesqcount) at (\x, \y) {};
        }
    }

    \node at (2.5, 3.5) {$\dots$}; \node at (2.5, 0.5) {$\dots$};
    \node at (0.5, 2.75) {$\vdots$}; \node at (1.5, 2.75) {$\vdots$}; \node at (3.5, 2.75) {$\vdots$};
    \node at (0.5, 1.25) {$\vdots$}; \node at (1.5, 1.25) {$\vdots$}; \node at (3.5, 1.25) {$\vdots$};

    \node[draw=none, inner sep=0pt, minimum size=2cm] (vor1) at (0.5,-2.5) {
        \includegraphics[width=2.8cm, height=2cm]{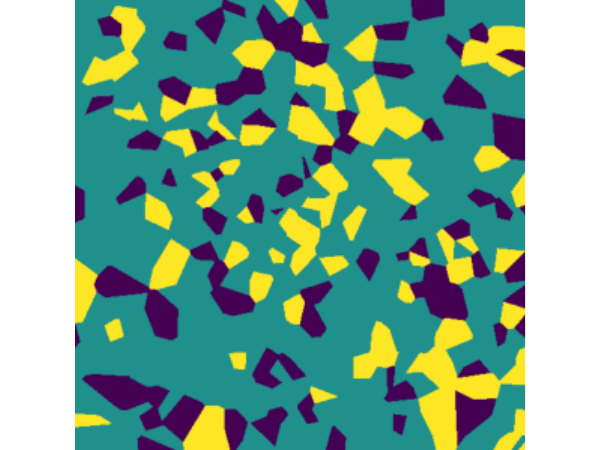}
    };
    \node[draw=none, inner sep=0pt, minimum size=2cm] (vor2) at (3.0,-2.5) {
        \includegraphics[width=2.8cm, height=2cm]{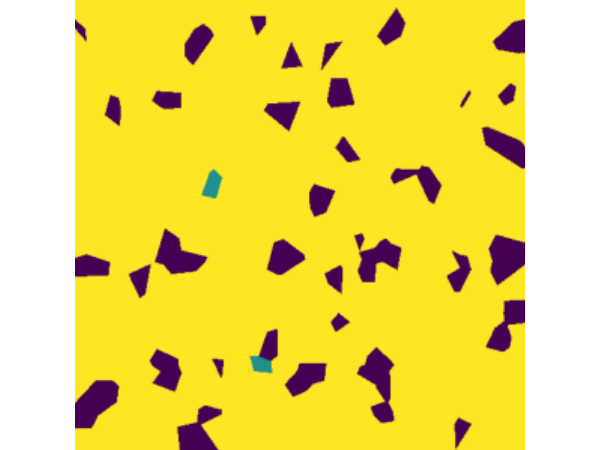}
    };
    \node[draw=none, inner sep=0pt, minimum size=2cm] (vor3) at (5.5,-2.5) {
        \includegraphics[width=2.8cm, height=2cm]{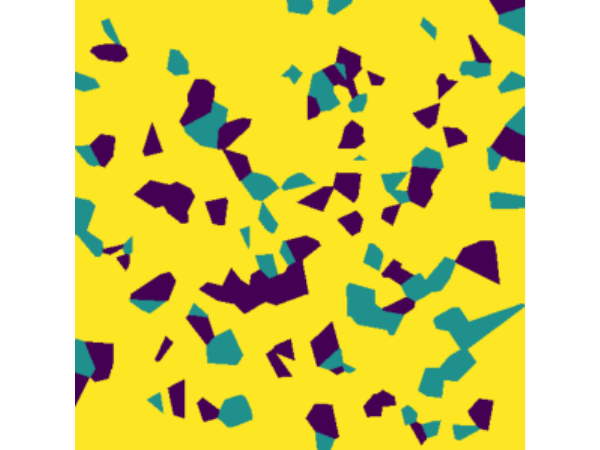}
    };

    \draw[gray] (sq-7.south west) -- ([xshift=.4cm]vor1.north west); 
    \draw[gray] (sq-7.south east) -- ([xshift=-.4cm]vor1.north east);
    
    \draw[gray] (sq-8.south west) -- ([xshift=.4cm]vor2.north west); 
    \draw[gray] (sq-8.south east) -- ([xshift=-.4cm]vor2.north east);

    \draw[gray] (sq-9.south west) -- ([xshift=.4cm]vor3.north west); 
    \draw[gray] (sq-9.south east) -- ([xshift=-.4cm]vor3.north east);

    \node[draw=none, inner sep=0pt, minimum size=5cm] (A22) at (11.5, 1.7) {
        \includegraphics[width=5.5cm, height=5cm]{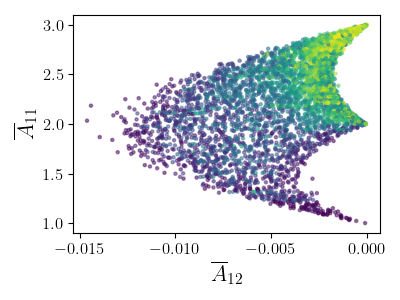}
    };

    \draw[large arrow, ultra thick, red] (sq-2.east) to [out=20, in=160] ([yshift=1.6cm]A22.west);
    \draw[large arrow,ultra thick, red] (sq-3.east) to [out=10, in=170] ([yshift=0.9cm]A22.west);
    \draw[large arrow, ultra thick, red] (sq-5.east) to [out=-10, in=190] ([yshift=-0.5cm]A22.west);
    \draw[large arrow, ultra thick,red] (sq-6.east) to [out=-20, in=200] ([yshift=-1.3cm]A22.west);

    \node[font=\large] at (6.4, 2.25) {\Large \textbf{Forward Hom.}};

    \draw[large arrow, blue, ultra thick] ([yshift=-0.5cm]A22.south) to [out=-90, in=0] 
        node[below right, pos=0.4, black] {\Large \textbf{Dist. Inv. Hom.}} (7, -5.5);

    \node[anchor=north, align=center] (matbox) at (3.0, -3.6) {
        \begin{tikzpicture}
            \draw[->, red, ultra thick, >=Stealth] (0,0) -- (-1.5,1);
        \end{tikzpicture} 
        \begin{tikzpicture}
            \draw[->, red, ultra thick, >=Stealth] (0,0) -- (0,1);
        \end{tikzpicture} 
        \begin{tikzpicture}
            \draw[->, red, ultra thick, >=Stealth] (0,0) -- (1.5,1);
        \end{tikzpicture} \\
        \textbf{Volume Fractions:} $\boldsymbol{\rho(x) \sim \Lambda(\alpha;\ell, \nu,\S)}$ \\
        \textbf{Materials:} $\boldsymbol{(a_1, \dots, a_m)}$
    };
\end{tikzpicture}
}
\end{center}
\caption{Distributional inverse homogenization schematic workflow in the locally stationary-ergodic microstructure setting described in Section~\ref{sec:loc_se_stoch}. Here $(\alpha,a)$ are the sought parameters, where $(\alpha, \ell, \nu)$ parametrize the copula description of volume fractions, $\rho$ is a spatially varying volume fraction, and $a$ are the material properties.}
\label{fig:graph_abs_stoch_hom}
\end{figure}
%%%%%%%%%%%%%%%%%%%%%%%%%%%%%%%%%%%%%%%%%%%%%%%%%%%%%
%%%%%%%%%%%%%%%%%%%%%%%%%%%%%%%%%%%%%%%%%%%%%%%%%%%%%
%%%%%%%%%%%%%%%%%%%%%%%%%%%%%%%%%%%%%%%%%%%%%%%%%%%%%

This section brings together the ideas from the preceding section to define
the proposed methodology for distributional inverse homogenization. We make the
idea encapsulated in \eqref{eq:optp} concrete.
The implications of the proposed approach, these being mathematical, numerical, and practical, will be explored in later sections. 

We start by assuming that there exists a data-generating distribution on material microstructure $\bP_A^\dagger.$ We observe, by way of macroscopic measurement, a collection of bulk material properties from a generating model
\begin{subequations}
\begin{align}
    \Abar^{(n)} &= \sfH^\dagger(A^{(n)})+\xi^{(n)},\\
    A^{(n)}&\sim\bP^\dagger_A,\\
    \xi^{(n)}&\sim\bP_\xi.
\end{align}
\end{subequations}
Here $\sfH^\dagger$ is the homogenization map taking a physical microstructure of a specimen to a macroscopically measurable bulk material property. 
The probability measure $\bP_\xi$  describes noise entering the measurements.
As described in abstract in \eqref{eq:emp}, we collect data 
$n=1, \hdots, N$ into an empirical distribution
\begin{align}
    \bP^{^N}_{\Abar} = \frac{1}{N}\sum_{n=1}^N\delta_{\overline{A}^{\,(n)}}.
\end{align}
We are interested in inferring aspects of the distribution on microstructure $\bP^\dagger_A$ by way of parametric approximation. 

\textbf{Distributional Inverse Periodic Homogenization}
Recall map $\sF^\dagger$ given by \eqref{eq:sfFdagger}. As explained in the general setting in Section~\ref{sec:background:dist_inv}), we directly posit a map $\theta\mapsto\bP_A(\theta)$ such that $A\sim\bP_A(\theta)$, and minimize
\begin{align}\label{eq:objective_per}
\mathsf{J_p}(\theta) = \mathsf{D}\left(\bP^{^N}_{\Abar}\,,\; \bP_\xi * (\sF^\dagger)_\#\bP_A(\theta)\right).\tag{Per. Inv. Hom. Objective}    
\end{align}
For example in our 2D periodic homogenization examples we take $\bP_A(\theta)$ to be a generative model for Voronoi diagrams with random seed locations where each Voronoi seed is deterministically assigned a material property. Furthermore, the Voronoi diagram may have a material dependent additive weighting.
The goal of distributional inversion is to determine the statistics of the Voronoi diagrams.
This idea is explored in Section~\ref{sec:2D_per_hom}. If we now choose $\sfD$ to be the sliced-Wasserstein metric \eqref{eq:sliced}, then we have defined a specific instance of \eqref{eq:optp}.

\textbf{Distributional Inverse Stochastic Homogenization}
Similarly to the periodic case, we posit a parametric statistical model for $A_\sfn^{(n)}$ which is a map $\varpi^{(n)}\mapsto\bP_\omega(\varpi^{(n)})$; the distribution $\bP_\omega(\varpi^{(n)})$ subsumes all information of $A^{(n)}$, which according to exact stochastic homogenization is defined on an unbounded domain. We then define a map $\theta\mapsto\bP_\varpi(\theta)$ such that $\varpi^{(n)}\sim\bP_\varpi(\theta)$. This way, $\{A_\sfn^{(n, i)}\}_{i=1}^\infty$ for $A_\sfn^{(n, i)}\sim\bP_\omega(\varpi^{(n)})$ summarizes all information in $A^{(n)}$, yet $A_\sfn^{(n,i)}$  is defined on a bounded domain. We are now in a position to pose our objective: minimize
\begin{align}\label{eq:objective_stoch}
\mathsf{J_s}(\theta) = \mathsf{D}\left(\bP^{^N}_{\Abar}\,,\; \bP_\xi * (\cG^\dagger)_\#\bP_\varpi(\theta)\right).\tag{Stoch. Inv. Hom. Objective} 
\end{align}
In this paper, when studying 2D stochastic homogenization,
$\bP_\omega(\varpi^{(n)})$ is the generative model from which we can sample random Voronoi diagrams, with random seeding locations; each Voronoi cell is assigned a material property at random. The variables $\varpi^{(n)}$ are themselves drawn from a distribution with the parameters $\theta$ we seek to infer; $\varpi$ will represent a description of volume fraction per material type as well as the associated constituent material property. 
If we again choose $\sfD$ to be the sliced-Wasserstein metric \eqref{eq:sliced}, then we have defined a specific instance of \eqref{eq:optp}.

To implement \eqref{eq:objective_per} and \eqref{eq:objective_stoch}, we use the approximate Sliced-Wasserstein distance~\eqref{eq:emSW} as our divergence $\sfD$, rather than the exact
Sliced-Wasserstein distance~\eqref{eq:sliced}. The methodology presented in this current form is defined for observational data $\smash{\{\Abar^{(n)}\}_{n=1}^N}$ which are i.i.d. Of course the homogenized coefficients are symmetric, so we only compute the sliced-Wasserstein distance with the three unique coefficient elements.
In this work we also explore a novel surrogate modelling direction specifically for stochastic 
homogenization, by concurrently learning an approximation to $\cG^\dagger$ given by \eqref{eq:stoch_hom_Monte_Carlo}, which we will call $\cG^\phi$. More details on this may be found in Section~\ref{sec:stoch_hom}. In later sections, we further extend our stochastic homogenization methodology to consider physically realistic scenarios where the data originates from a single large specimen queried at various locations, and is only approximately i.i.d. Here, homogenization approximately holds due to local conditions. Figure~\ref{fig:graph_abs_stoch_hom} is graphical summary of that particular workflow in the stochastic setting. More on this in later sections.

\section{1D Homogenization}\label{sec:1D_Per_Hom}

We now explore certain mathematical properties, in one spatial dimension, of  the problem posed in Section~\ref{sec:methodology}. We take a perspective that subsumes both periodic and stochastic homogenization into a single framework, in this 1D setting.
Subsection~\ref{sec:first_dist_simplex} constructs an explicit example and a nontrivial inverse homogenization problem concerning distributions on volume fractions of $M$ materials, which we prove to be identifiable.
Subsection~\ref{sec:1Ddirichlet} explores a more practically relevant class of distributional inverse homogenization problems using Dirichlet distributed volume fractions. We prove identifiability
in a specific limiting regime, of both material properties and parameters of the distribution and, in Subsection~\ref{ssec:labelNE}, provide numerical evidence demonstrating that this
theorem extends beyond the limiting case that our theory covers.
This addresses Contribution~\ref{con:C1}.

In one spatial dimension, periodic homogenization of a periodic  field, $a_\per(\Bigcdot)$, has closed form~\cite{pavliotis2008multiscale}
\begin{align}\label{eq:1D_per_homog}
    \bar{a}_\per=\left(\int_\bT a_\per(y)^{-1}\md y\right)^{-1}.
\end{align}
Stochastic homogenization of a stationary-ergodic field, $a_\se(\Bigcdot)$, also has closed form~\cite{kozlov1980averaging} given by
\begin{align}
    \bar{a}_\se=\lim_{L\rightarrow\infty}\left(\frac{1}{2L}\int_{-L}^L a_\se(y, \omega)^{-1}\md y\right)^{-1} \text{for } \bP_\omega-a.e.\; \omega.
\end{align}
This limiting procedure in $L$ is what we approximate with the large $(\sfn,\Javg)$ limit in
\eqref{eq:stoch_hom_Monte_Carlo}.

In the periodic homogenization case, assume the piecewise constant structure
\begin{align}\label{eq:gen_1D_per}
    a_\per(y) &= \sum_{m=1}^M a_j\mathbbm{1}_{\Q_m}(y),
\end{align}
where $\{\Q_m\}_{m=1}^M$ is a partition of $\bT$ with $a_1>\hdots>a_M>0$.
In the stochastic homogenization case, assume
\begin{align}\label{eq:gen_1D_stoch}
        a_\se(y, \omega) &= \sum_{m=1}^M a_j\mathbbm{1}_{\mathrm{U}_m(\omega)}(y),
\end{align}
where $\{\mathrm{U}_m(\omega)\}_{m=1}^M$ is a partition of $\bR$  with $a_1>\hdots>a_M>0$.
Now, in both cases assume
$\rho=(\rho_1, \hdots, \rho_M)\in\bsDelta_{M-1}$ is the volume fraction (of $\Q_m$ or $\U_m$) associated to the material coefficient $a_m$.
We define $a=(a_1, \hdots,a_M)$ and note that we may then write the following simple formula
for the homogenized coefficient $\overline{a}$, valid in both periodic and stochastic homogenization:
\begin{align} \label{eq:to}
    \overline{a}=\langle a^{-1}, \rho\rangle^{-1}.
\end{align}
That is to say, in both periodic and stochastic 1D homogenization the homogenized coefficient is the volume fraction weighted harmonic mean.
In the rest of this section we use the 1D simple-coefficient homogenization map
\begin{align} \label{eq:stnal}
        \sF(\rho;a)&= \left(\sum_{m=1}^Ma_m^{-1}\rho_m\right)^{-1}.
\end{align}

\subsection{Volume Fractions and a First Distribution on the Simplex}\label{sec:first_dist_simplex}

Recall Figure~\ref{fig:1D_explanation_varVolFrac} and in particular recall that the left-hand side
illustrates the fact, apparent from \eqref{eq:to}, that in the 1D setting multiple materials have the same
homogenized response -- if the volume fractions are fixed then the harmonic average is the same regardless of how the materials are arranged, as formula \eqref{eq:to} shows.
We develop here our first construction that illustrates the right-hand side of
Figure~\ref{fig:1D_explanation_varVolFrac}. We employ a distribution on the simplex of volume fractions
and show that it is possible to recover parameters of this distribution.
\vspace{1em}
\begin{theorem}\label{claim:exist_identif} Let $M=2$, fix $a \in (0,\infty) \times (0,\infty)$ 
and assume that $\rho_1 \sim \Unif(b-l, b+l)$ with $b \ge l>0$ and $b+l \le 1$; necessarily
$\rho_2 =1-\rho_1$ to ensure the volume fractions add to unity.
Then $b,l$ are determined uniquely by the distribution of 
homogenized material property $\sF(\rho;a).$
\end{theorem}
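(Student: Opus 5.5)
The plan is to exploit the fact that, for $M=2$, the homogenization map \eqref{eq:stnal} is a strictly monotone function of the single scalar $\rho_1$. The law of $\sF(\rho;a)$ is then simply a reparametrization of the law of $\rho_1$, and a uniform law is pinned down by its support.

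First, I make the implicit nondegeneracy explicit. Following the convention $a_1>a_2>0$ stated before \eqref{eq:gen_1D_per}, I assume $a_1\neq a_2$. If $a_1=a_2$, the map $\sF$ is constant and the claim fails trivially, so this case must be excluded. Set $c:=a_1^{-1}-a_2^{-1}\neq 0$. Substituting $\rho_2=1-\rho_1$ into \eqref{eq:stnal} gives
\begin{align*}
\sF(\rho;a)=g(\rho_1),\qquad g(t):=\bigl(a_2^{-1}+c\,t\bigr)^{-1},\quad t\in[0,1].
\end{align*}
On $[0,1]$ the quantity $a_2^{-1}+ct$ is a convex combination of $a_1^{-1}$ and $a_2^{-1}$, so it is strictly positive. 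Hence $g$ is continuous and strictly monotone (increasing if $c<0$, decreasing if $c>0$). It is a bijection of $[0,1]$ onto the interval with endpoints $a_1$ and $a_2$, with explicit continuous inverse $g^{-1}(s)=(s^{-1}-a_2^{-1})/c$.

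Second, I transfer identifiability through $g$. Suppose two parameter pairs $(b,l)$ and $(b',l')$ produce the same law of $\sF(\rho;a)$, that is, $g_\#\Unif(b-l,b+l)=g_\#\Unif(b'-l',b'+l')$. Since $a$ is fixed and known, $g$ is known, so I can push both sides forward by the measurable map $g^{-1}$. Because $g^{-1}\circ g=\Id$ on $[0,1]$, this gives $\Unif(b-l,b+l)=\Unif(b'-l',b'+l')$. The hypotheses $b\ge l>0$ and $b+l\le 1$ guarantee that these supports lie in $[0,1]$, where $g$ is defined and invertible.

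Third, two uniform laws coincide only if their supports coincide, so $b-l=b'-l'$ and $b+l=b'+l'$. Equivalently, compare the mean $b$ and the variance $l^2/3$, using $l>0$. Either way, $(b,l)=(b',l')$.

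The argument also yields an explicit recovery formula. If $F_{\min}$ and $F_{\max}$ denote the essential infimum and supremum of the observed homogenized coefficient, then $\{b-l,\,b+l\}=\{g^{-1}(F_{\min}),\,g^{-1}(F_{\max})\}$.

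I expect no substantial analytic obstacle. The only delicate point is bookkeeping: the statement implicitly requires $a_1\neq a_2$ and relies on $a$ being known. I would state both assumptions clearly at the start of the proof, and justify that the pushforward by the continuous inverse $g^{-1}$ is legitimate on the support.
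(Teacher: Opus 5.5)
Your argument is correct, but it takes a different route from the paper. The paper obtains Theorem~\ref{claim:exist_identif} as the $M=2$ case of Theorem~\ref{th:identif_unif_simp}, in three steps:
\begin{itemize}
\item it passes to $\cbar=\abar^{-1}$;
\item it notes that $\cbar$ is affine in $\rho_{1:M-1}$, so the law of $\cbar$ is a convolution of shifted and scaled uniforms;
\item it invokes Lemma~\ref{lem:identif_sums_uniforms}, a characteristic-function argument: the zeros of $\prod_j \sinc(2\pi l_j\xi)$ identify the widths, and the remaining phase $e^{-i2\pi\xi\sum_j b_j}$ identifies the sum of the shifts.
\end{itemize}
You instead use the fact that for $M=2$ the map $\rho_1\mapsto\sF(\rho;a)$ is itself injective. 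Your $g^{-1}$ is exactly the composition $\abar\mapsto\cbar\mapsto\rho_1$, so you push the law straight back to $\rho_1$ and read off $(b,l)$ from the support, with no Fourier analysis.

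What your route buys:
\begin{itemize}
\item a short, self-contained, elementary proof;
\item an explicit recovery formula via the essential infimum and supremum;
\item a stronger conclusion: the law of $\sF(\rho;a)$ determines the entire law of $\rho_1$, not just the parameters within the uniform family.
\end{itemize}
You are also right to flag $a_1\neq a_2$. The statement as written omits it. The paper's proof relies on it silently too: if $c_1=c_2$, the scaled uniform collapses to a point mass and no width can be recovered.

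What the paper's route buys is the extension to $M>2$. There $\rho_{1:M-1}\mapsto\cbar$ is a linear functional of several variables and is no longer injective, so pointwise inversion is impossible and one must genuinely deconvolve. Correspondingly, only the widths and a single linear combination of the shifts are identifiable, and your argument does not extend to that setting.
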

Thus we can recover the distribution of the volume fractions of the microstructure from the distribution
of the bulk properties. Theorem \ref{claim:exist_identif} can be extended to $l=0$ where it is simply the statement that in one dimension the volume fractions are identifiable from a single measurement of bulk properties if the constituent materials are known. Theorem \ref{claim:exist_identif} is a special case of a more general result, applying for any integer $M \ge 2$, and given in  \ref{sec:proof_claim_ident}.

\subsection{The Dirichlet Distribution}\label{sec:1Ddirichlet}
\begin{figure}[t] 
    \centering
    \begin{subfigure}[b]{0.48\textwidth}
        \centering
        \includegraphics[width=\linewidth]{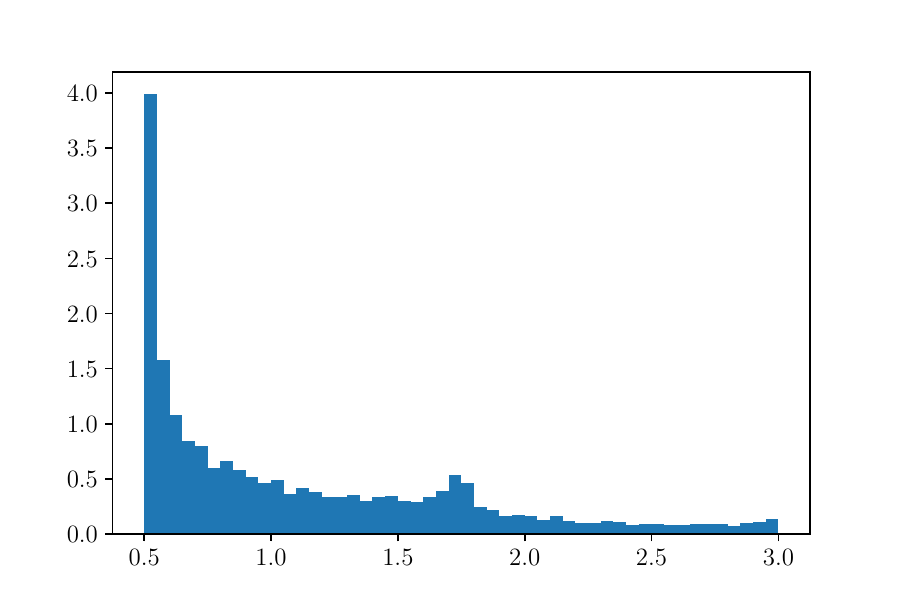}
        \caption{Hist $\overline{a}^{\,(n)}$}
        \label{fig:Diri1D_demo:hist}
    \end{subfigure}
    \hfill
    \begin{subfigure}[b]{0.48\textwidth}
        \centering
        \includegraphics[width=\linewidth]{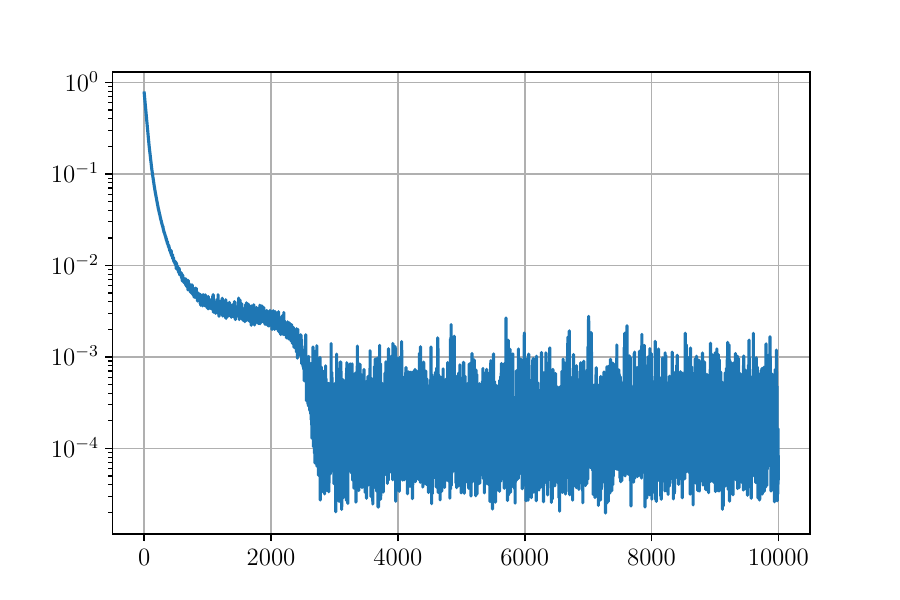}
        \caption{Objective}
        \label{fig:loss}
    \end{subfigure}
    \vspace{1em}
    \begin{subfigure}[b]{0.48\textwidth}
        \centering
        \includegraphics[width=\linewidth]{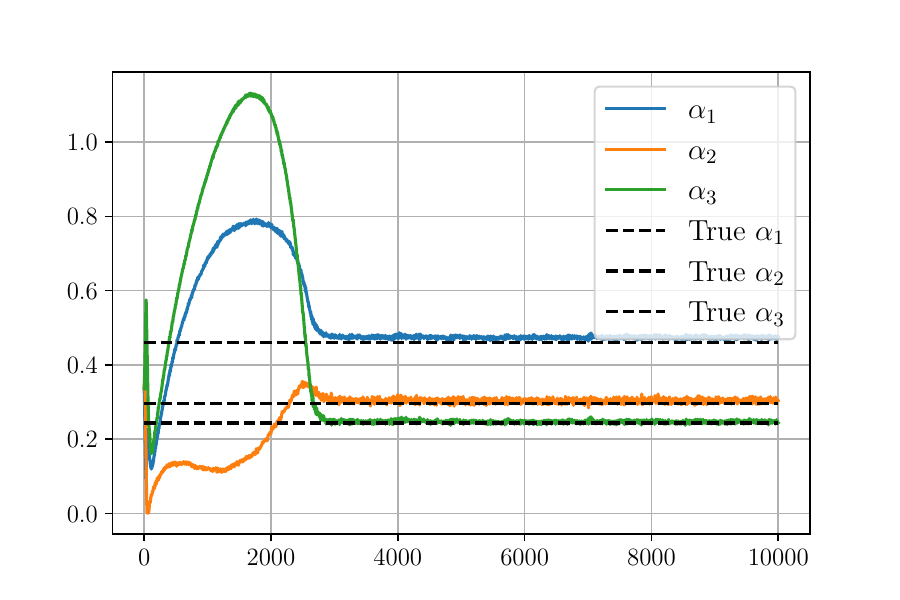}
        \caption{$\alpha$}
        \label{fig:alpha}
    \end{subfigure}
    \hfill
    \begin{subfigure}[b]{0.48\textwidth}
        \centering
        \includegraphics[width=\linewidth]{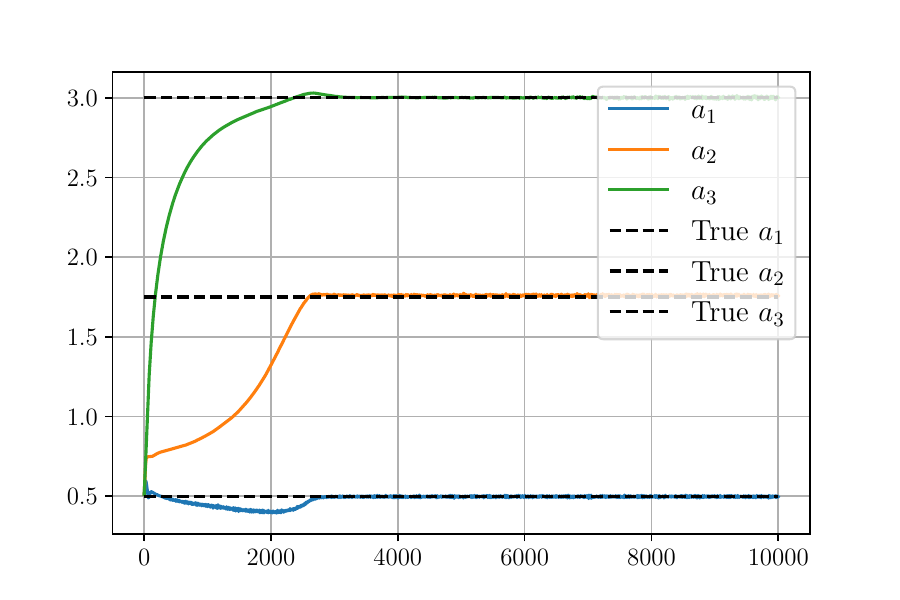}
        \caption{$a$}
        \label{fig:cs}
    \end{subfigure}
    \caption{(a) shows the histogram of the observed homogenized coefficient dataset, (b) the decreasing Objective, (c) the estimated volume fraction parameters of the microstructural generative model during minimization Objective, (d) the estimated constituent material parameters of the microstructural generative model.}
    \label{fig:Diri1D_demo}
\end{figure}
\begin{figure}[h!] 
    \centering
    \hspace{-1em}
    \begin{subfigure}[b]{0.42\linewidth}
        \centering
        \includegraphics[width=\linewidth]{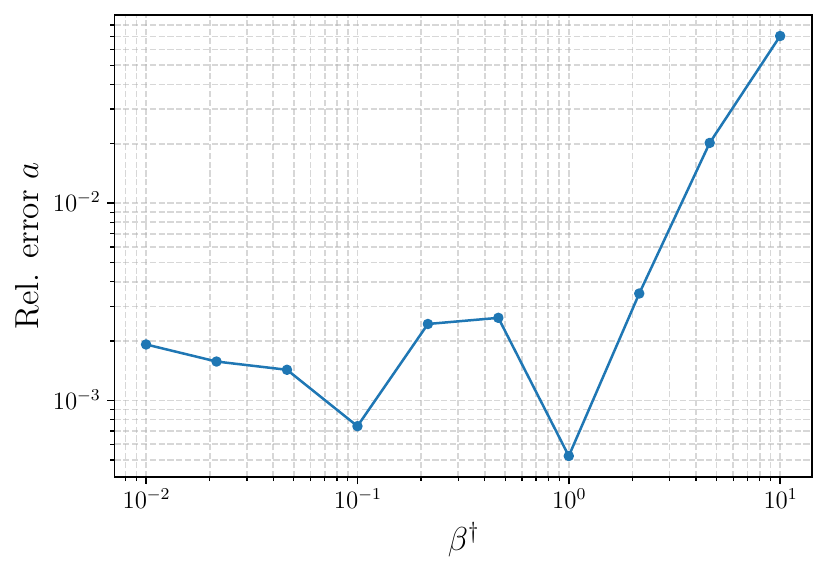}
        \caption{Relative Error $a^\star$}
        \label{fig:alpha}
    \end{subfigure}
    \begin{subfigure}[b]{0.42\linewidth}
        \centering
        \includegraphics[width=\linewidth]{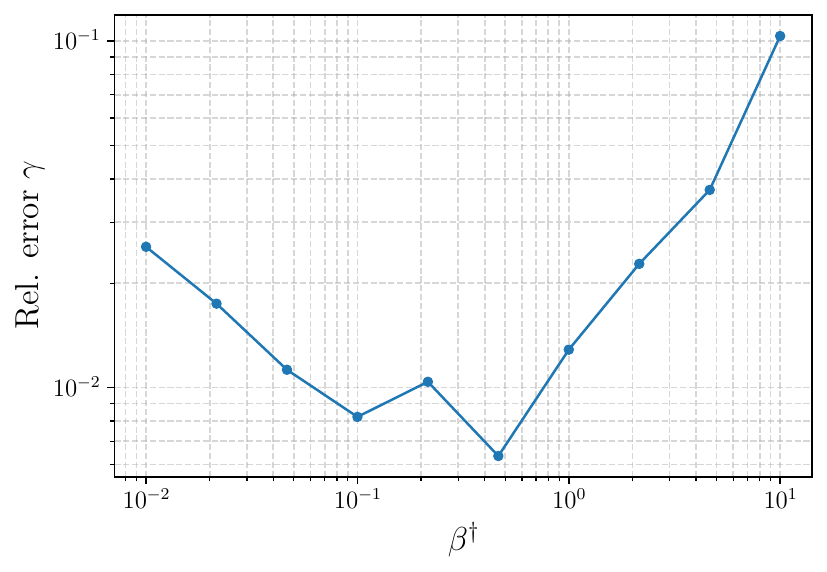}
        \caption{Relative Error $\gamma^\star$}
        \label{fig:alpha}
    \end{subfigure}
        \begin{subfigure}[b]{0.42\linewidth}
        \centering
        \includegraphics[width=\linewidth]{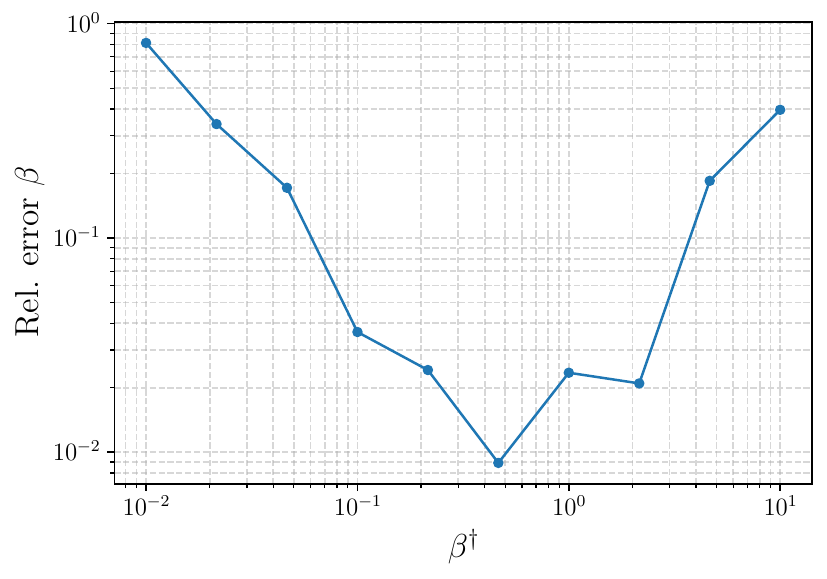}
        \caption{ Relative Error $\beta^\star$}
        \label{fig:hist}
    \end{subfigure}
    \begin{subfigure}[b]{0.42\linewidth}
        \centering
        \includegraphics[width=\linewidth]{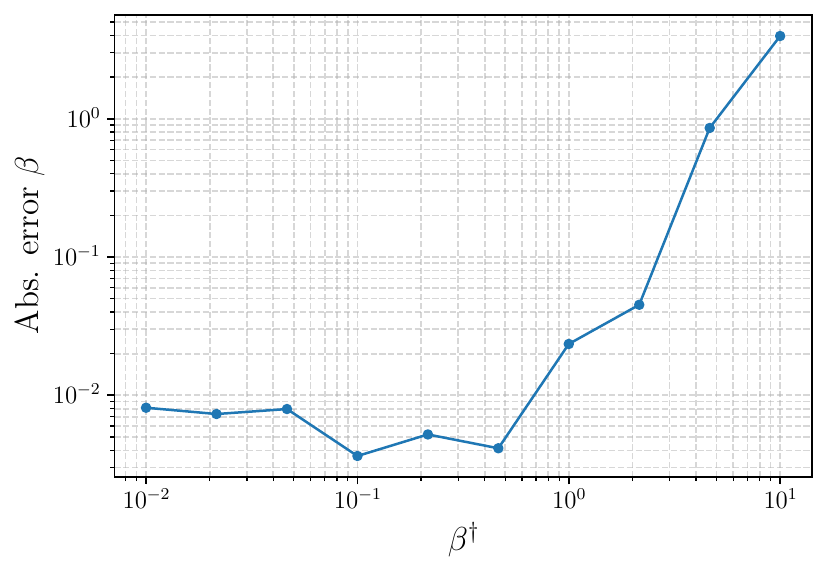}
        \caption{Absolute Error $\beta^\star$}
        \label{fig:loss}
    \end{subfigure}
    \caption{Errors in estimating $\beta, \gamma, a,$ as compared to ground truth, from distributional inversion, 1D periodic.}
    \label{fig:errorsDiri1D_varyBeta}
\end{figure}
The construction of a probability measure on $\bsDelta_{M-1}$ used in the preceding subsection, and in
\ref{sec:proof_claim_ident}, is straightforward as it uses only uniform distributions; but it is
somewhat unnatural. The canonical distribution on the simplex $\bsDelta_{M-1}$ is the Dirichlet distribution, which we study in this subsection. We describe
a setting in which the parameters of this distribution and the microstructural material properties
are jointly identifiable. The following subsection provides numerical evidence beyond the particular
parameter limit in which we prove the result.

The Dirichlet distribution is a a multivariate generalization of the beta distribution, and
may also be derived as a normalized  multivariate version of the Gamma distribution. The Gamma distribution ensures positivity and the
normalization enforces that draws are in the simplex. The density of the Dirichlet distribution, over variable $\rho \in \bsDelta_{M-1}$ and parameterized by Euclidean
vector $\alpha=(\alpha_1, \cdots, \alpha_M) \in (0,\infty)^M$, is given by
\begin{align}\label{eq:Dirich_density}
    \sfDir(\rho;\alpha) = \frac{1}{B(\alpha)}\prod_{m=1}^M \rho_m^{\alpha_m-1},\quad\quad
    B(\alpha) = \frac{\prod_{m=1}^M\Gamma(\alpha_m)}{\Gamma\left(\sum_{m=1}^M\alpha_m\right)}.
\end{align}
Here $\Gamma$ is the Gamma function.
We use $\sDir(\rho;\alpha)$ as a distribution on material volume fractions; combined with material constants $a$, and because 1D homogenization is invariant to material layout,~\eqref{eq:to}, it is a generative model for 1D material microstructure with parameters $(\alpha, a).$ 
Note that since the $\alpha_m$ are positive and finite in number we may write
$\alpha:=\beta\gamma$ for a scalar parameter $\beta$ and vector $\gamma\in\bsDelta_{M-1}.$ We introduce the notation $\sDir(\rho;\gamma, \beta):=\sDir(\rho;\gamma\beta)$ (and often drop argument $\rho$). Our second claim about identifiability of $\gamma$ concerns the limit $\beta\rightarrow0$.
It is proved in~\ref{sec:1DDirichlet_identifiability}. \ref{sec:1DDirichlet_identifiability} also contains Lemma \ref{lem:limits_dirichlet}
which characterizes the behaviour of $\sDir(\gamma, \beta)$ in the limits 
$\beta\rightarrow0$ and $\beta\rightarrow \infty.$

\vspace{1em}
\begin{theorem}\label{claim:1DDirichlet_identifiability}
    Consider the distributional inverse homogenization problem in the restricted case where $\beta\rightarrow0$. Then this problem is jointly identifiable for the parameter of the Dirichlet distribution, $\gamma \in \bsDelta_{M-1}$ 
    and the material properties $a \in (0,\infty)^M$ for $a_j\neq a_i$, $i\neq j,$ and $ 1\leq i,j\leq M.$
\end{theorem}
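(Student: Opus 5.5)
The plan is to identify the limiting data distribution explicitly as $\beta\rightarrow0$ and show it is a finite mixture of Dirac masses whose atoms are the $a_m$ and whose weights are the $\gamma_m$. From that the parameters can be read off directly.

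\textbf{Step 1: the Dirichlet limit.} I would show $\sDir(\gamma,\beta)\rightarrow\sum_{m=1}^M\gamma_m\delta_{e^m}$ weakly as $\beta\rightarrow0$; this is the $\beta\rightarrow0$ part of Lemma \ref{lem:limits_dirichlet}. A self-contained route is via moments. For $\rho\sim\sDir(\gamma,\beta)$ one has
\begin{align*}
\bE[\rho_m]=\gamma_m,\qquad \bE[\rho_m^2]=\frac{\gamma_m(\beta\gamma_m+1)}{\beta+1}.
\end{align*}
Hence $\bE[\rho_m(1-\rho_m)]=\gamma_m(1-\gamma_m)\beta/(\beta+1)\rightarrow0$. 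Since $\rho_m(1-\rho_m)\ge0$ on the simplex, every weak limit point (the simplex is compact, so limit points exist) is supported on $\{\rho\in\bsDelta_{M-1}:\rho_m\in\{0,1\}\ \forall m\}=\{e^1,\dots,e^M\}$. The mean identity $\bE[\rho_m]=\gamma_m$ passes to the limit because $\rho_m$ is bounded and continuous. This forces the mass at $e^m$ to be $\gamma_m$, so the limit is unique and the whole family converges.

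\textbf{Step 2: push forward through homogenization.} Since $a\in(0,\infty)^M$, the map $\rho\mapsto\sF(\rho;a)$ in \eqref{eq:stnal} is continuous and bounded on $\bsDelta_{M-1}$, with values in $[\min_m a_m,\max_m a_m]$. By the continuous mapping theorem,
\begin{align*}
\sF(\cdot;a)_\#\sDir(\gamma,\beta)\;\rightarrow\;\sum_{m=1}^M\gamma_m\delta_{\sF(e^m;a)}=\sum_{m=1}^M\gamma_m\delta_{a_m}.
\end{align*}
If observational noise is included, the limit is convolved with $\bP_\xi=\cN(0,\sigma^2\I)$, with $\sigma$ known. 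Convolution with a Gaussian is injective on probability measures, because the Gaussian characteristic function never vanishes. So the noiseless limit is recovered from the noisy one and no information is lost.

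\textbf{Step 3: identifiability.} Suppose $(\gamma,a)$ and $(\gamma',a')$ produce the same limiting law $\mu=\sum_m\gamma_m\delta_{a_m}=\sum_m\gamma'_m\delta_{a'_m}$. Because the $a_m$ are pairwise distinct, the atoms of $\mu$ are exactly the $a_m$ with $\gamma_m>0$, and the mass at atom $a_m$ is exactly $\gamma_m$. Using the labelling convention $a_1>\dots>a_M$ from \eqref{eq:gen_1D_per}, matching ordered atoms and their masses gives $a=a'$ and $\gamma=\gamma'$. This ordering convention removes the only other ambiguity, namely permuting material labels together with the corresponding $\gamma_m$.

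\textbf{Main obstacle.} The main obstacle is not technical but one of precise formulation. First, ``identifiable in the limit'' must be interpreted as injectivity of $(\gamma,a)\mapsto\lim_{\beta\rightarrow0}$ of the data law; I would state this explicitly. Second, a component with $\gamma_m=0$ has no atom, so its $a_m$ cannot be recovered. The statement should therefore be read with $\gamma$ in the interior of $\bsDelta_{M-1}$ (all $\gamma_m>0$), and I would add that hypothesis or remark on it. With those points fixed, Steps 1--3 are short. Step 1 is the only genuine computation, and the second-moment argument avoids any delicate analysis of the Dirichlet density near the vertices.
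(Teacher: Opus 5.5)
Your proposal is correct and takes the same route as the paper. Lemma \ref{lem:limits_dirichlet} gives $\sDir(\gamma,\beta)\rightarrow\sum_{m}\gamma_m\delta_{e^m}$, and Lemma \ref{lem:inj_at_beta0} then reads $\gamma$ and $c=a^{-1}$ off the atoms and weights of the pushed-forward mixture. Your second-moment proof of the $\beta\rightarrow0$ limit is more rigorous than the paper's density-concentration claim, which only holds for sets bounded away from the vertices. Your explicit use of the ordering $a_1>\dots>a_M$ and of $\gamma_m>0$ also makes precise points the paper leaves implicit.
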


\subsection{Numerical Experiments} \label{ssec:labelNE}
We now present 1D numerical experiments demonstrating identifiability in the Dirichlet setting.
These suggest identifiability beyond the narrow setting of Theorem~\ref{claim:1DDirichlet_identifiability}. We first demonstrate convergence plots for the distributional inversion problem, then we explore the parameter recovery as a function of the magnitude of $\beta>0$.

\textbf{Data and Inference:} 
Figure~\ref{fig:Diri1D_demo:hist} shows the dataset histogram of the of homogenized coefficients for Dirichlet distributed volume fractions. Here, the number of materials considered, $M=3$, is fixed, and the number of data points in the empirical data distribution is $N=10^4$. The values for $a^\dagger=(0.5, 1.75, 3.)$ and $\alpha^\dagger=(0.46, 0.30, 0.24)$.

\textbf{Results:}
Figs~\eqref{fig:Diri1D_demo}(b-d)  show the decreasing~\eqref{eq:objective_per}, and the convergence of Dirichlet parameters and material properties.

These results provide further evidence for the well-posedness of distributional inversion problems of this form,
beyond the setting of the somewhat restrictive claims. In the next two sections we develop our numerical studies further to explore 2D material fields described by Voronoi diagrams.\\

\begin{remark} We briefly discuss the role of optimization in parameter identification as the problems
considered are typically nonconvex.
    In Figure~\ref{fig:errorsDiri1D_varyBeta} we test the accuracy of recovery of the Dirichlet parameters, parametrized with $\beta\gamma=\alpha$, and the material coefficients. In test where the ground truth value of $\beta$ is small we observe occasional convergence to local minima; to mitigate this effect we use ten random restarts and select the run with least final value of~\eqref{eq:objective_per} for reporting.
We observe the relative error in recovering $\beta$ increases when its magnitude decreases. The choice
of metric or divergence explored to define our objection will also have bearing on the optimization; however
we do not explore alternatives to sliced-Wasserstein in this paper.
\end{remark}

\section{2D Periodic Homogenization}\label{sec:2D_per_hom}
In this section, we apply distributional inverse homogenization to problems in periodic homogenization. We first look
at a class of i.i.d. random microstructures generated via Voronoi tesselation in
periodic homogenization. We then explore a practically relevant locally periodic Voronoi construction based on statistical copulas; we infer distributional information from spatially decorrelated measurements of bulk material properties. This section, together 
with Section~\ref{sec:stoch_hom}, addresses Contributions~\ref{con:C2} and~\ref{con:C4}. In what follows we use the terms ``Voronoi seed'' and ``nucleation sites'' interchangeably.
\vspace{1em}
\begin{remark}
    In conducting the experiments that follow, in this section~\ref{sec:2D_per_hom} and in Section~\ref{sec:stoch_hom},  nonconvexity is evidenced by the presence of 
    occasional local minima. Studying the properties and reasons for the existence of such local minima is a problem of interest in its own right, but not one we focus on in this paper. To address
    nonconvexity, we perform five random restarts and return the outcome giving the lowest objective value (best data-fit) for the last $100$ iterations.
\end{remark}\vspace{1em}

\begin{remark}
    The off-diagonal element of the constant homogenized material field is typically on a different scale to the diagonal elements. Hence, when computing the data-fit objective we scale each component by the standard deviation of that component in the data set to equally weigh contributions from the three elements. This is done both for \eqref{eq:objective_per} and \eqref{eq:objective_stoch}.
\end{remark}

\subsection{Periodic i.i.d. Microstructures}\label{sec:2D_per_hom_i.i.d.}
\begin{figure}[t] 
    \centering
    \begin{subfigure}[b]{0.32\linewidth}
        \centering
        \includegraphics[width=\linewidth]{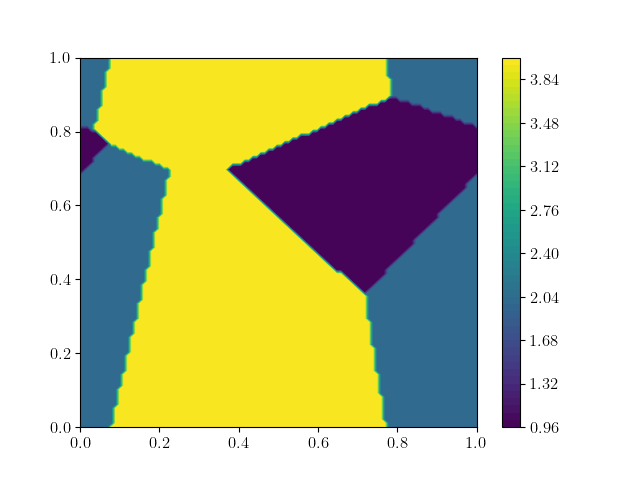}
        \caption{Random periodic cell $A^{(1)}_{11}(y)$}
        \label{fig:alpha}
    \end{subfigure}
        \begin{subfigure}[b]{0.32\linewidth}
        \centering
        \includegraphics[width=\linewidth]{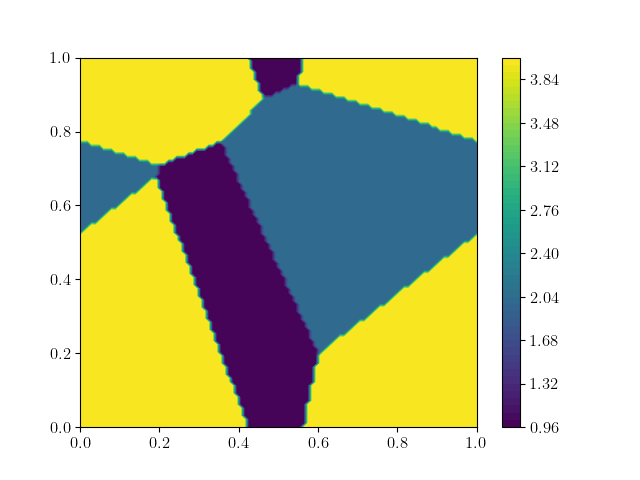}
        \caption{Random periodic cell $A^{(2)}_{11}(y)$}
        \label{fig:hist}
    \end{subfigure}
    \begin{subfigure}[b]{0.32\linewidth}
        \centering
        \includegraphics[width=\linewidth]{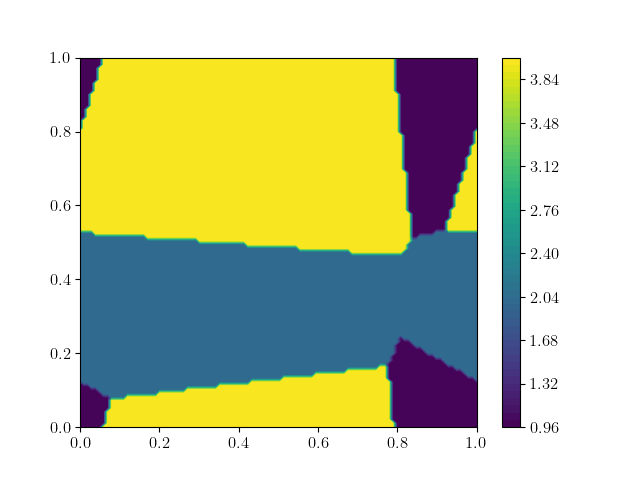}
        \caption{Random periodic cell $A^{(3)}_{11}(y)$}
        \label{fig:loss}
    \end{subfigure}
    \caption{Randomly generated periodic microstructure for $(w, a)$ randomizing over Voronoi seed locations in the cell, $\Q$.}
    \label{fig:Per_2D_rnd_A11_samples}
\end{figure}
We approach this section by first discussing the model setup, followed by the synthetic data, and finally the results.

\textbf{Setup:} In this section we consider i.i.d. data and an i.i.d. data generating model. Building on the previous section where we studied 1D materials, we seek to jointly infer information pertaining to volume fractions and material properties. As we are in 2D, we apply this methodology to material property fields described by Voronoi diagrams, as these are widely used in generative modelling for materials.

For the Voronoi description of a material field, it is not obvious how to directly specify the volume fractions in a cell. One could resort to optimizing Voronoi seed centers to tune the volume fractions shared by the $M$ material types, however we choose a different approach. We cast the problem as the task of inferring additive Voronoi seed weights, $w$, which can be used to model shifts in crystal nucleation start times.
This construction is also called a Laguerre tessellation~\cite{quey2018optimal}.
We define a partition $\bigcup_{m=1}^M\Q_m=\Q$, where
\begin{subequations}\label{eq:per_vor}
\begin{align}
    \Q_m(v; w) &= \{y\in\Q: \,\sfd_\per(y, v_m)^2-w_m\leq \,\sfd_\per(y, v_j)^2-w_j, \forall j\neq m\},\\
    \sfd_\per(y,y')&= \sqrt{\sum_{i=1}^d(y_i-y_i')^2\wedge(1-|y_i-y'_i|)^2}, \label{eq:d_per}
\end{align}
\end{subequations}
where $\wedge$ is a binary function returning the least of the two elements being compared. 
Thus $\sfd_\per$ is a  metric on the torus -- it measures distance accounting for periodicity.
We sample a candidate material field from our microstructure generative model by assigning scalar material constant $a_m$ to the $m^\text{th}$ region of a random Voronoi tesselation, $\Q_m$.
Specifically,
\begin{subequations}\label{eq:model_i.i.d._per}
\begin{align}
    v^{(m)}&\sim\Unif(\Q),\label{eq:model_i.i.d._per_Unif}\\
    A(y; a, w)&=\I_2\sum_{m=1}^M a_m\bb1_{\Q_m(v^{(1:M)};w)}(y),\label{eq:per_vor_Afield_multWeight}
\end{align}
\end{subequations}
% where $\I_2$ is the $2\times 2$ identity matrix.
% Thus we assign material property $a_m$ to Voronoi
% cell $\Q_m.$ 
where $v^{(m)}$ are uniformly distributed random nucleation sites. We refer to generative model~\eqref{eq:model_i.i.d._per} as the periodic i.i.d. model.
Note that the size of $\Q_m$, and so the volume fraction associated to material $m$, will be affected by the weight $w_m$, the relation to the weights of associated with cells defined by other nearby nucleation sites, and the location of randomly generated nucleation sites.
Here, parameter $w$ plays the role of $\rho$ in Section~\ref{sec:1D_Per_Hom}.
We seek to infer $w$ and the material properties $a$ for fixed distribution on the location of the Voronoi nucleation sites,~\eqref{eq:model_i.i.d._per_Unif}.  Thus we define
$\theta=(w, a)$ and note that we have a map $\theta \mapsto \bP_{\,\overline{A}}.$ The source of stochasticity yielding an information rich distribution on $\Abar$ comes from the randomization over the Voronoi seeds $v^{(1:M)}$ for each $\Abar$. In this setup, there is an additive lack of identifiability with regards  to the start times; the Voronoi diagram is invariant to constant shifts in the collection of lag times (physically, it does not matter if all nucleations are delayed by the same amount). And so, for inference we fix $w_1=0$ and infer the other $\{w_m\}_{m=2}^M$.
\vspace{1em}

\begin{remark}
In the Voronoi construction~\eqref{eq:model_i.i.d._per}, for any material nucleation time shifts $0\leq w_1<\hdots<w_m<\hdots<w_M$, there is positive probability that a randomly generated diagram yields a volume fraction, for any material $m<M$, of zero. 
    In practice we notice that if the difference in shift times is small enough with respect to the cell size, so that zero volume fractions are rare occurrence, we maintain identifiability of $(w,a)$.
\end{remark}
\vspace{1em}

{
\begin{remark}\label{rem:HS_Bounds_Per}
    In homogenization, Hashin-Shtrikman bounds are of general interest. These specify the upper and lower extremal values of isotropic homogenized coefficients that can be obtained with any geometric configuration of material phases for given material volume fractions and constituent material property. Under the weighted Voronoi periodic construction outlined in this section, and if $a_M=\max(a^\dagger)$ with $w_M$ the largest Voronoi weight, an upper bound on the distribution of homogenized coefficients can be obtained (it is achieved when all nucleation sites are sufficiently close to $v^{(M)}$). Then, the periodic cell will contain only material $a_M$ and hence the homogenized coefficient with be $\Abar=\I_2a_M$. Figures~\ref{fig:Per_2D_AddW_Hists_scatter} (d)-(f) show this upper bound is approximately achieved by a random sample, noting the data generating $a^\dagger=(1,2,4)$ which implies a homogenized coefficient upper bound $a^+=4$. Obtaining general upper and lower bounds on the distribution of homogenized coefficients is less straightforward and we leave this line of investigation to future work.
\end{remark}
}

We will generate from the i.i.d. model and we will attempt to estimate the crystal growth rates and material properties $(w, a)$ by minimizing~\eqref{eq:objective_per}. To deploy gradient based minimization on this objective function requires computation of gradients of $\sJ_\mathsf{p}$, and hence of $\Abar$, with respect to $w\in\theta$. As it is, the generative model~\eqref{eq:model_i.i.d._per}  is not differentiable with respect to $w$. Thus we consider a continuous relaxation of~\eqref{eq:per_vor_Afield_multWeight}. 
To do this, specify $\tau$, a relaxation parameter
and define the softmax function acting on an $M$-dimensional vector $\sfsm:\bR^M\times \bR_+\rightarrow\bR^M_+$ is
\begin{align}\label{eq:softmax}
    \sfsm(z; \tau)_i=\frac{\exp{(z_i/\tau)}}{\sum_{j=1}^m \exp{(z_j/\tau)}}.
\end{align}
We call the relaxed random field $\widetilde{A}$, it is sampled via
\begin{subequations}\label{eq:tildeA}
\begin{align}
    v^{(m)}&\sim\Unif(\Q),\\
    \widetilde{A}(y; w, a, \tau) &= \I_2\left\langle a,\, \sfsm\left(\sfd_\per(y, v^{(1:M)})^2-w;\tau)\right)\right\rangle.\label{eq:cont_relax_A_crystalLag}
\end{align}
\end{subequations}
The softmax function ensures that for every sub-Voronoi-cell some amount of information from each material parameter, $(w, a)$, is present; this allows automatic differentiation through this model to estimate $(w,a)$.
As $\tau\rightarrow0$, $\widetilde{A}(y;w,a,\tau)\rightarrow A(y;w,a)$ and $\nabla_{(w,a)} A(y;w,a,\tau)$ is well behaved for $\tau>0.$

\begin{figure}[t] 
    \centering
    \hspace{1.1em}
    \begin{subfigure}[b]{0.31\linewidth}
        \centering
        \includegraphics[width=\linewidth]{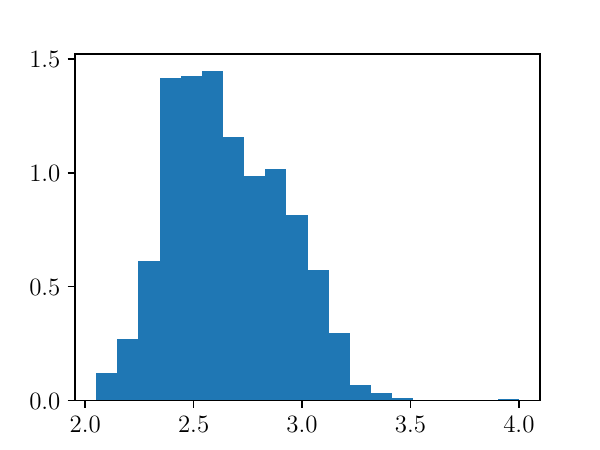}
        \caption{$\bP_{\Abar_{11}}$}
        \label{fig:alpha}
    \end{subfigure}
        \begin{subfigure}[b]{0.31\linewidth}
        \centering
        \includegraphics[width=\linewidth]{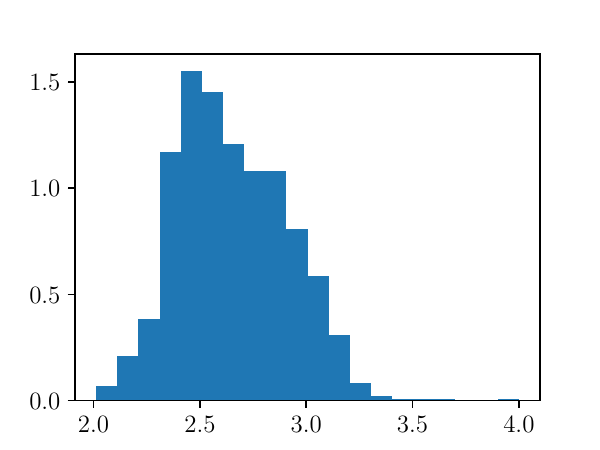}
        \caption{$\bP_{\Abar_{22}}$}
        \label{fig:hist}
    \end{subfigure}
    \begin{subfigure}[b]{0.31\linewidth}
        \centering
        \includegraphics[width=\linewidth]{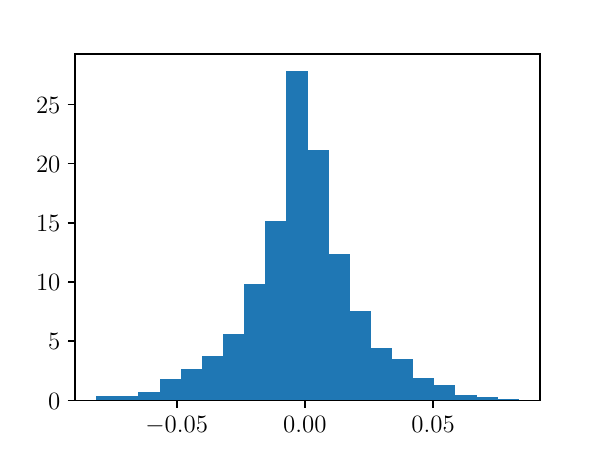}
        \caption{$\bP_{\Abar_{12}}$}
        \label{fig:loss}
    \end{subfigure}
    \begin{subfigure}[b]{0.31\linewidth}
        \centering
        \includegraphics[width=\linewidth]{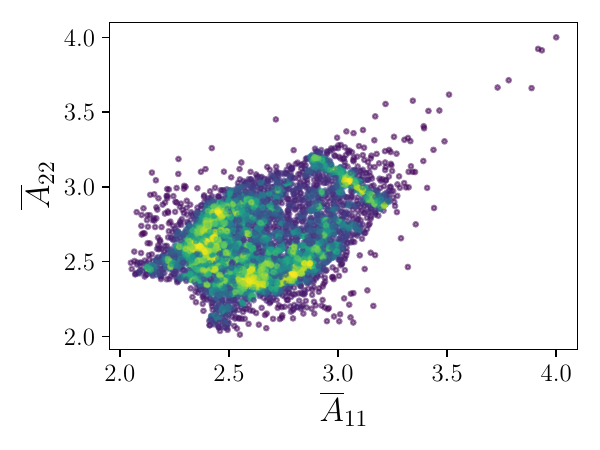}
        \caption{$\bP_{\Abar_{11}, \Abar_{22}}$}
        \label{fig:loss}
    \end{subfigure}
        \begin{subfigure}[b]{0.31\linewidth}
        \centering
        \includegraphics[width=\linewidth]{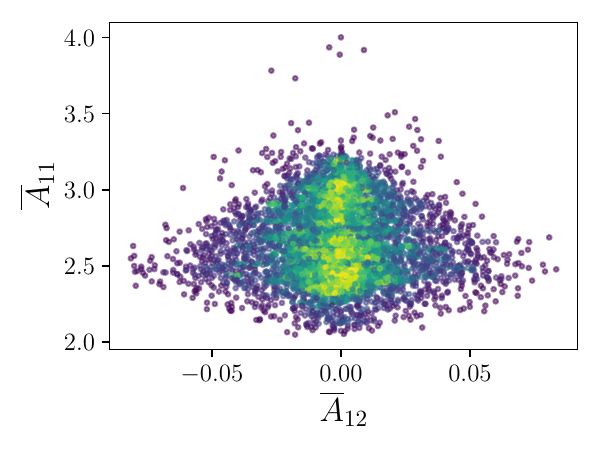}
        \caption{$\bP_{\Abar_{11}, \Abar_{12}}$}
        \label{fig:alpha}
    \end{subfigure}
        \begin{subfigure}[b]{0.31\linewidth}
        \centering
        \includegraphics[width=\linewidth]{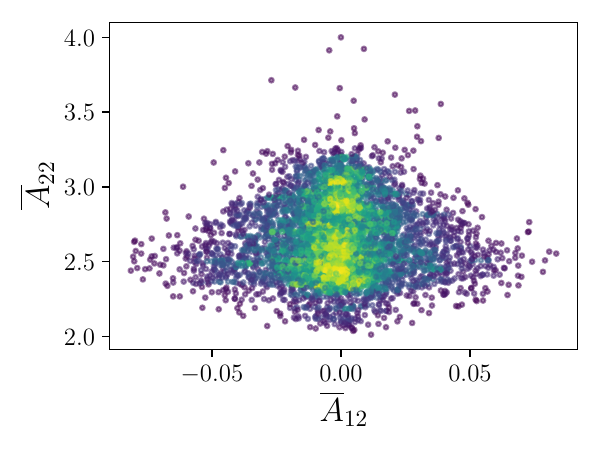}
        \caption{$\bP_{\Abar_{22}, \Abar_{12}}$}
        \label{fig:2D_Per_hists}
    \end{subfigure}
    \caption{Histograms and scatter plots of the observed data-distribution. In (d)-(f) the color is representative the local density of particles computed via the mean 10-neighbor distance to nearby particles.}
    \label{fig:Per_2D_AddW_Hists_scatter}
\end{figure}

\begin{figure}[t] 
    \centering
    % \vspace{1em}
    \begin{subfigure}[b]{0.32\linewidth}
        \centering
        \includegraphics[width=\linewidth]{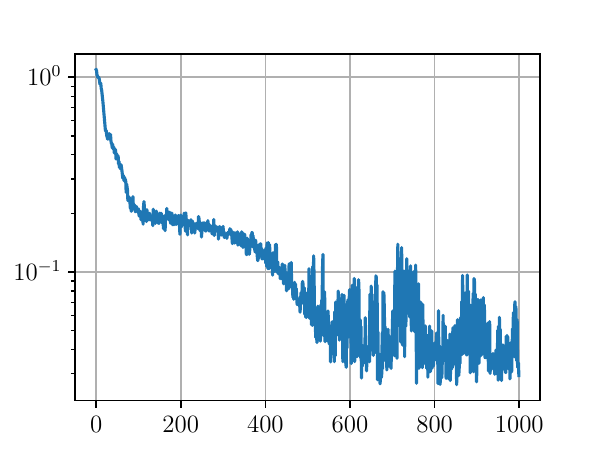}
        \caption{Objective}
        \label{fig:alpha}
    \end{subfigure}
        \begin{subfigure}[b]{0.32\linewidth}
        \centering
        \includegraphics[width=\linewidth]{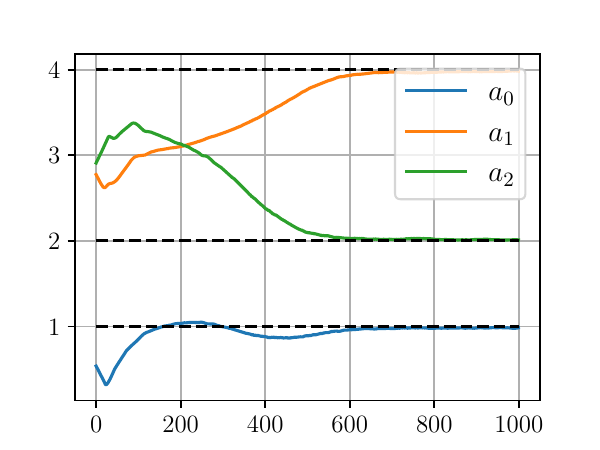}
        \caption{$a$}
        \label{fig:hist}
    \end{subfigure}
    % \hfill
    % \hspace{-1em}
    \begin{subfigure}[b]{0.32\linewidth}
        \centering
        \includegraphics[width=\linewidth]{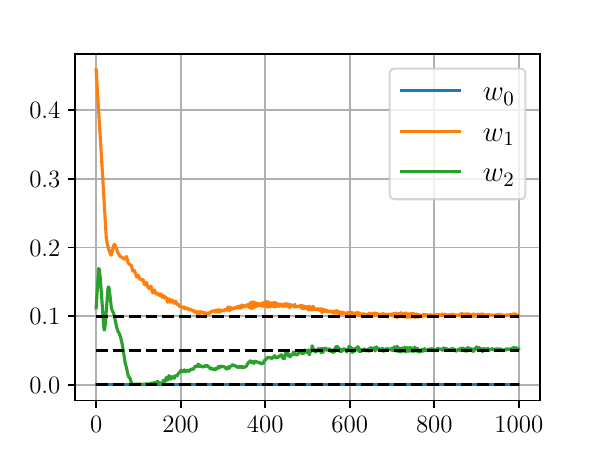}
        \caption{$w$}
        \label{fig:loss}
    \end{subfigure}
    \caption{Convergence of~\eqref{eq:objective_per} and $(w,a)$ for the i.i.d. periodic microstructure model}
    \label{fig:Per_2D_AddW_conv}
\end{figure}

\textbf{Data and Inference:}
We use $N=5\cdot10^3$ data samples. We set $a^\dagger=(1, 2, 4),$ $w^\dagger=(0,0.05, 0.1).$ To generate data we set $\tau$ to $10^{-20}$ to have sharp numerically discontinuous Voronoi cells. We do not include noise in this experiment.
Figure~\ref{fig:Per_2D_rnd_A11_samples} shows three randomly generated periodic Voronoi diagrams for the parameter settings $(w^\dagger, a^\dagger)$. Figure~\ref{fig:Per_2D_AddW_Hists_scatter} shows the histograms of the data distribution $\bP^{^N}_{\Abar}$ as scatter plots of combinations of the three data components. We fix the computational mesh to be $100\times100$ when representing $A(y)$ and in solving for the corrector function $\chi$ with finite elements in $\sF^\dagger$.

\textbf{Results:}
For inference we set $\tau$ to $10^{-3}$ to allow gradients to flow through the Voronoi construction and estimate $w$. 
Figure~\ref{fig:Per_2D_AddW_conv} shows the convergence of the objective, and $(w, a)$ against $(w^\dagger, a^\dagger).$ We run $5$ random restarts with $(w,a)$ independently drawn from $\Unif(0.01, 0.5)$ and $\Unif(0.5, 3)$ respectively, and independently componentwise.
The relative error on the recovered $a$ is $0.40\%$ and for $w$ it is $2.88\%$.

{
\vspace{1em}
% \subsubsection{Measurement }
\begin{remark}
   In practice, it may not be feasible to measure all three homogenized coefficient values $\Abar_{11}, \Abar_{22}, \Abar_{12}$, but only some subset of these.
    % We now report the parameter estimation errors when considering observations of only subsets of the three mentioned coefficient values. 
    With the same setup as described in this subsection, Table~\ref{tab:relative_errors} reports the relative errors in the recovery of $a$ and $w$ for various combinations of parameters observations, listed on the left. From these results, it can be inferred that in the prescribed setting, not every component marginal contains the same amount of information about $a,w$ and in practice, one may wish to prioritize observing certain components to maximize data efficiency. Nevertheless, observing more components yields better recovery of the ground truth.
    % %
    % $\Abar_{11}, \Abar_{22}$, the relative error in estimating $w$ is $3.15\%$, and for $a$ it is $0.836\%$;
    % %
    % when observing $A_{11}, A_{12}$ the relative error in estimating $w$ is $3.96\%$, and for $a$ it is $0.996\%$;
    % %
    % when observing $A_{11}$ the relative error in estimating $w$ is $9.07\%$, and for $a$ it is $3.16\%$;
    % %
    % when observing $A_{12}$ the relative error in estimating $w$ is $91.07\%$, and for $a$ it is $21.41\%$;
\begin{table}[t]
\centering
\caption{{Relative estimation errors for $w$ and $a$ based on observed homogenization components}}
\label{tab:relative_errors}
% \color{darkblue}
\begin{tabular}{lll}
% \hline
\textbf{Observed Component} & \textbf{Rel. Error in $w$} & \textbf{Rel. Error in $a$} \\ \hline
\vspace{-0.8em}\\
$\Abar_{11}, \Abar_{22}, \Abar_{12}$ & 2.88\%                                & 0.40\%                               \\
$\Abar_{11}, \Abar_{22}$ & 3.15\%                                 & 0.836\%                                \\
$\Abar_{11}, \Abar_{12}$             & 3.96\%                                 & 0.996\%                                \\
$\Abar_{11}$                     & 9.07\%                                 & 3.16\%                                 \\
$\Abar_{12}$                     & 91.07\%                                & 21.41\%                                \\ \hline
\end{tabular}
\end{table}
\end{remark}
}

\subsection{Locally Periodic Microstructures}\label{sec:loc_per}

In this section we explore a more physically realistic data-acquisition setup. We imagine a single large specimen with a spatially varying locally periodic microstructure. That is, $A^\varepsilon(x/\varepsilon, x)$ is assumed to vary both on a small scale, of $\mathcal{O}(\varepsilon)$, and the macroscopic scale of $\D$, of $\mathcal{O}(1)$. In this setting, we show how distributional inversion can leverage natural spatial variations in microstructure for statistical characterization. To employ distributional inverse homogenization simply requires that the measurements of bulk material properties are at points
distant enough in the specimen to be approximately i.i.d.\,.
The generative model used, and in particular its dependence on the material properties $a$ and the additive Voronoi weights $w$, remain unchanged from Section~\ref{sec:2D_per_hom_i.i.d.}. 
\begin{figure}[t] 
    \centering
    % \vspace{1em}
    \hspace{-1em}
        \begin{subfigure}[b]{0.32\linewidth}
        \centering
        \includegraphics[width=\linewidth]{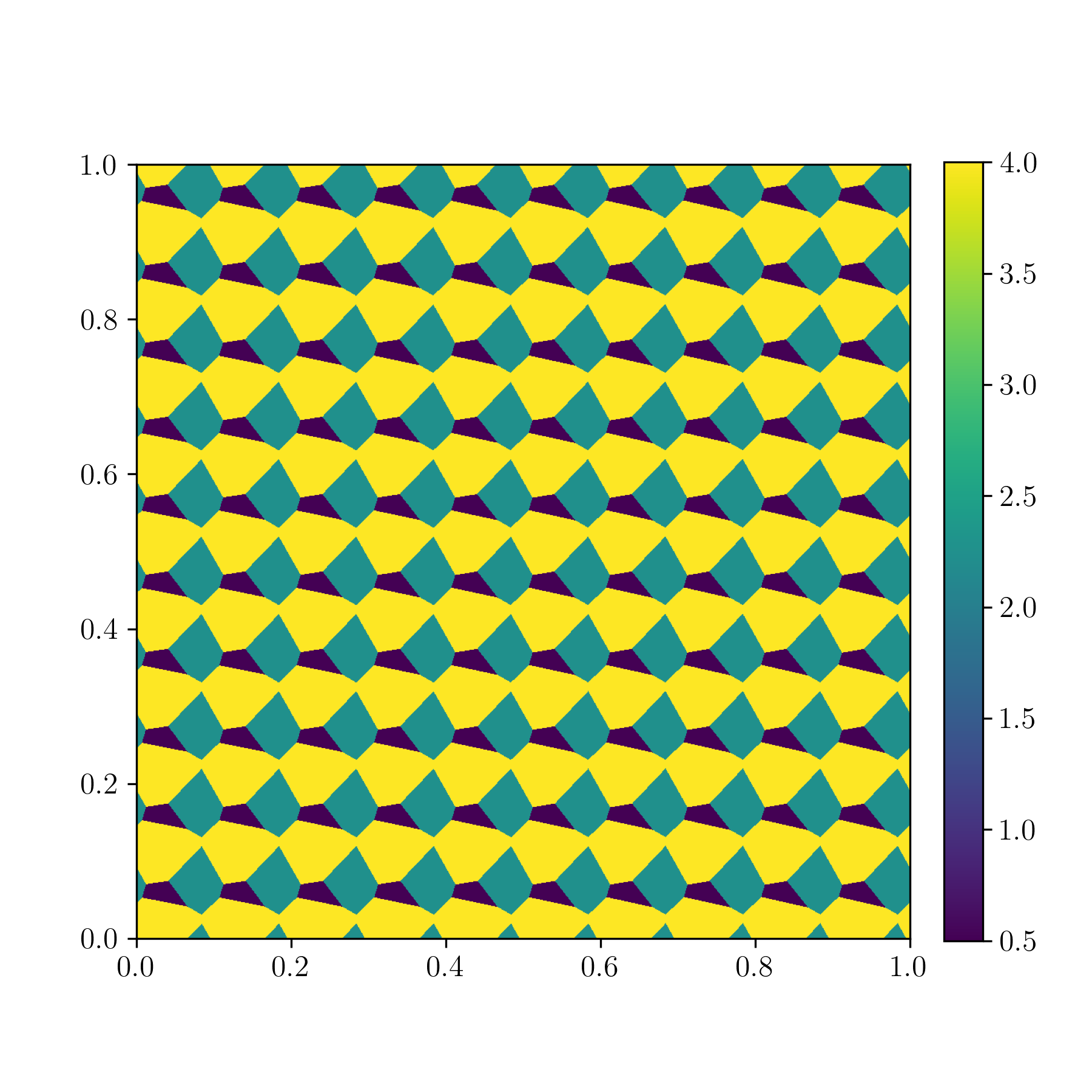}
        \caption{$10\times 10$ Voronoi, location 1.}
        \label{fig:hist}
    \end{subfigure}
    % \hfill
    % \hspace{-1em}
    \begin{subfigure}[b]{0.32\linewidth}
        \centering
        \includegraphics[width=\linewidth]{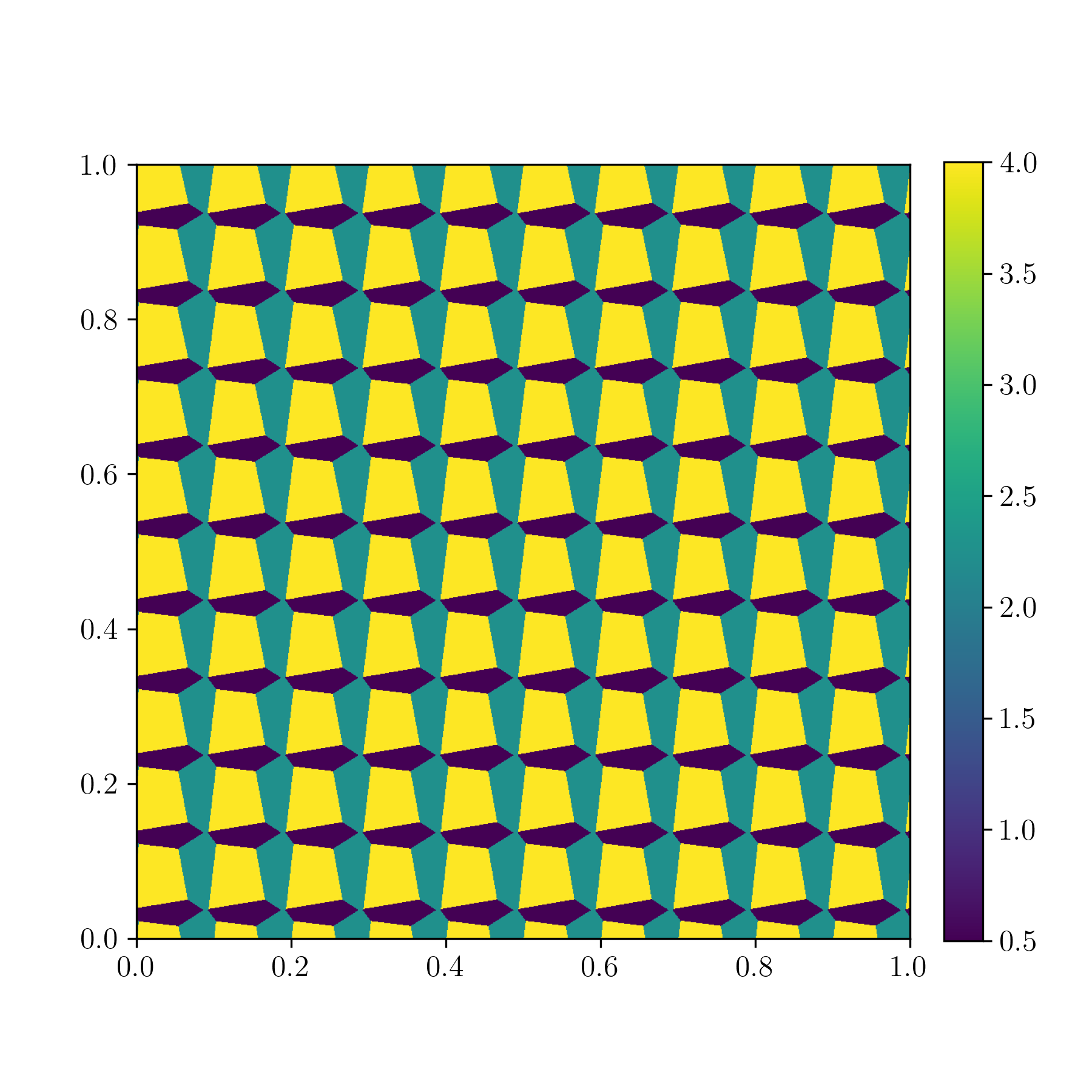}
        \caption{$10\times 10$ Voronoi, location 2.}
        \label{fig:loss}
    \end{subfigure}
    \begin{subfigure}[b]{0.32\linewidth}
        \centering
        \includegraphics[width=\linewidth]{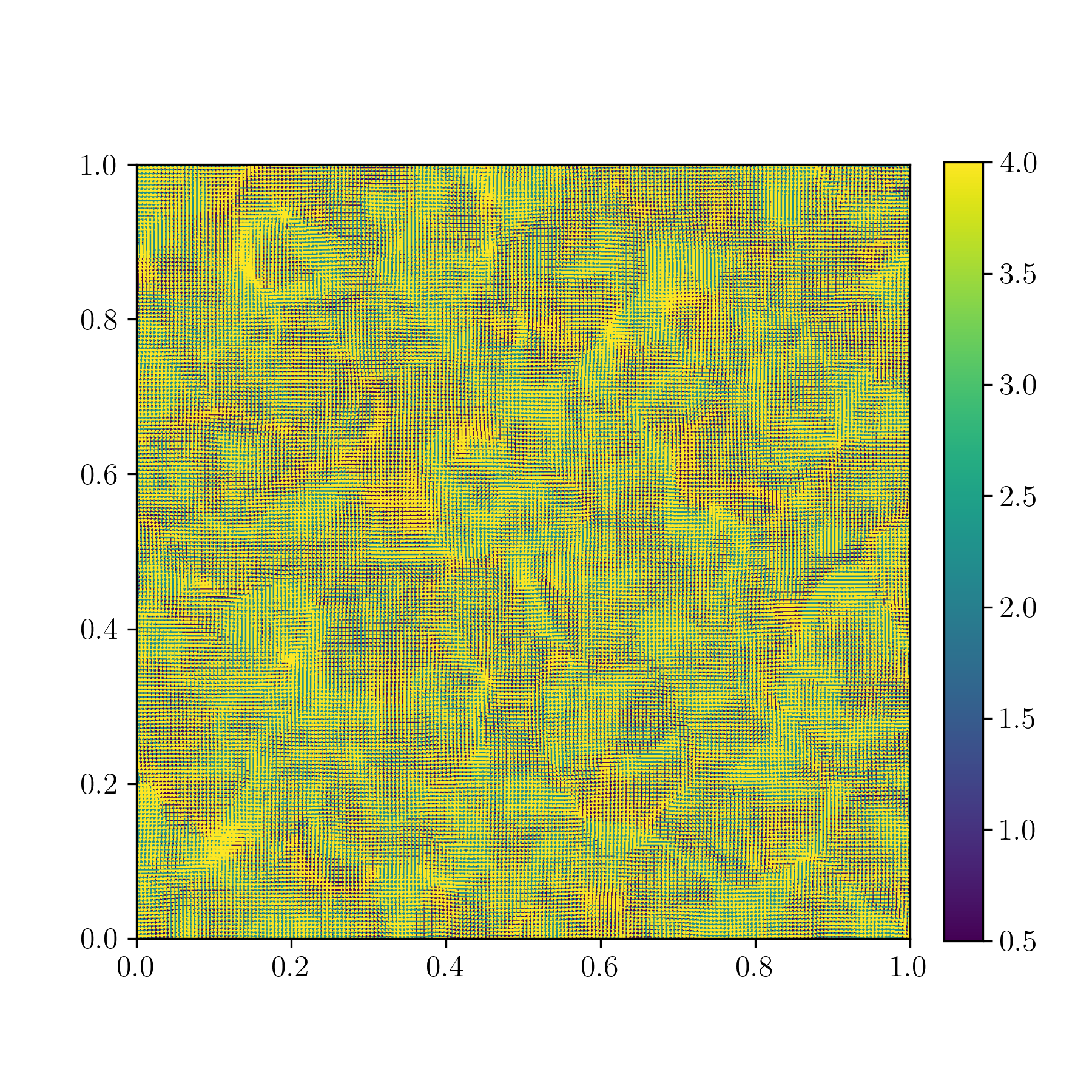}
        \caption{$8000\times8000$ Voronoi}
        \label{fig:loc_per_2D_vor_samples:c}
    \end{subfigure}
    \caption{Locally periodic Voronoi construction (a-b) $10\times 10$ Voronoi cells, (c) $8000\times8000$ locally periodic Voronoi cells (plot is 40 fold cell-level sub-sampled and 5-fold in-cell sub-sampled for display -- making spatial correlation appear to be on a shortened scale). The total specimen is $2\cdot10^4\times2\cdot10^3$ cells.}
    \label{fig:loc_per_2D_vor_samples}
\end{figure}

Here, in the way we generate the observed dataset, $\smash{\{\Abar^{(n)}\}_{n=1}^N}$, the Voronoi seeds are not randomly sampled on each cell, but rather, we generate the seeds in such a way such that contiguous cells have Voronoi centers in close but not identical layout. This results in a locally periodic microstructure. We will attempt to recover microstructural information based on homogenized measurements at locations in the large specimen which are far apart enough so that the data becomes close to i.i.d.\,.

\begin{figure}[t] 
    \centering
    \hspace{-1em}
        \begin{subfigure}[b]{0.32\linewidth}
        \centering
        \includegraphics[width=\linewidth]{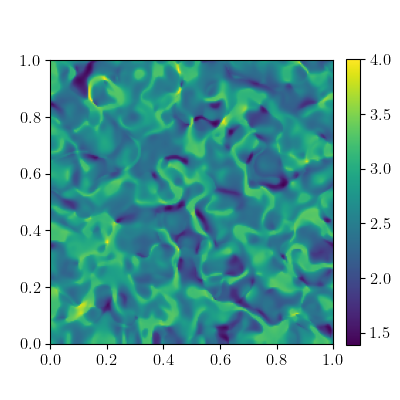}
        \caption{$\Abar_{11}(x)$}
        \label{fig:hist}
    \end{subfigure}
    \begin{subfigure}[b]{0.32\linewidth}
        \centering
        \includegraphics[width=\linewidth]{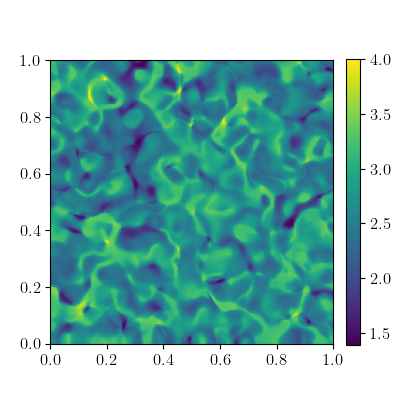}
        \caption{$\Abar_{22}(x)$}
        \label{fig:loss}
    \end{subfigure}
    \begin{subfigure}[b]{0.32\linewidth}
        \centering
        \includegraphics[width=\linewidth]{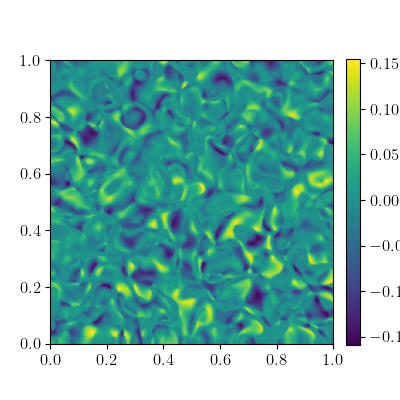}
        \caption{$\Abar_{12}(x)$}
        \label{fig:alpha}
    \end{subfigure}
    \caption{Homogenized locally periodic Voronoi construction, $\overline{A}:\S\rightarrow\sfM_2$ corresponding to applying periodic homogenization to every cell of Figure~\ref{fig:loc_per_2D_vor_samples:c}.}
    \label{fig:Homogenized_loc_per_2D_vor_samples}
\end{figure}

The construction of the locally periodic microstructure follows a \textit{copula} construction on the Voronoi seeds, where one can independently specify one- and two -point statistics: spatial correlation and point-wise marginals. It is straightforward to construct Gaussian random fields with specified one and two-point statistics~\cite{sklar1959fonctions, nelsen2006introduction} and such a field may be used to define a spatially correlated distribution for the Voronoi seeds.

We start by describing the Gaussian random field construction.
Let $\S =[0,\L]^2 =\bigcup_{j\in\mathcal{I}}\S_{j}$ a finite partition with each $\S_{j}$ a square of width $\mathsf{w}$ with  local origin at $p_{j}$. Draw $2M$ i.i.d. random functions $s\in C(\S;\bR)$ via 
\begin{align}\label{eq:gauss_meas_field}
    s^{(\Bigcdot)}\sim \cN(0, (\ell^{-2}\I-\Delta)^{-\nu}), \quad k={1,2},\quad m=1, \hdots, M.
\end{align}
Notation $s^{(m,k)}$ denotes a specific element from this collection of random fields.
Here $\cN$ denotes the normal distribution and $\Delta$ is the Laplacian operator defined on an extended domain $\S'\supset\S$ equipped with homogeneous Neumann boundary conditions; it has eigenfunctions $\varphi_{i,o}(x) = \cos(i\pi x_1)\cos(o\pi x_2)$ for $i,o$ positive integers.
This function-space definition of a Gaussian measure can be approximately sampled via the truncated Karhunen–Lo\`eve expansion \cite{dashti2011uncertainty}
\begin{align}
    s^{(\Bigcdot)}(x) = \sum_{i,o=1}^{J_\mathrm{KL}}\left((i^2+o^2)\pi^2+\ell^{-2}\right)^{-\nu/2}\xi^{(i,o)}\varphi_{i,o}(x);\quad \xi^{(i,o)}\sim\cN(0,1)\,\text{i.i.d.}
\end{align}
This Gaussian random field has a Gaussian distribution at every point $x \in \S.$ We now invert to find a spatially varying  copula field which is uniformly distributed
at every point $x \in \S$. The coupla is realized via $s^{(\Bigcdot)}$ as
\begin{align}
    \zeta^{(\Bigcdot)}(x)= F_{\mathsf{N}}( s^{(\Bigcdot)}(x))
\end{align}
where $F_{\mathsf{N}}$ is the scale-normalized CDF of the one-dimensional Gaussian, where the scaling is determined by the variance of the one-dimensional Gaussian.
Thus, $ \zeta^{(\Bigcdot)}$ will have, ignoring nonstationary boundary effects of~\eqref{eq:gauss_meas_field}, point-wise distribution $\Unif(0, 1)$, and will be spatially correlated with two-point statistics controlled by $\ell, \nu$.

We now use the copula field to construct a slowly varying Voronoi tesselation. To this end
we write $\zeta^{(\Bigcdot)}\sim\Xi(\ell,\nu; \S)$, where $\S$ is included as an argument to emphasize the restriction of the copula to $\S\subset\S'$. Now specify the Voronoi seeds, for $m=1\hdots, M$ and $k\in\{1,2\}$, 

\begin{subequations}
\begin{align}
    \zeta^{(m,k)}(x) &\sim\Xi(\ell,\nu; \S)\\
    v^{(j,m,k)}&=p_{j}+\sfw\zeta^{(m,k)}(p_j). 
\end{align}
\end{subequations}
Recall that each $\S_{j}$ a square of width $\mathsf{w}$ with  local origin at $p_{j}$ so that the construction centers and scales the seeds locations with respect to the $\S_j.$  Index $k \in \{1,2\}$ allows construction of a point in $\S_{j} \subset \bR^2$, $j$ indexes the different cells and $m$ indexes different random Voronoi seeds. The special case where $s^{(m,k)}(x)$ is constant with respect to $x\in\S$, rather than a Gaussian random field, yields a periodic Voronoi diagram on $\S$.

\begin{figure}[t] 
    \centering
    \hspace{1.1em}
    \begin{subfigure}[b]{0.31\linewidth}
        \centering
        \includegraphics[width=\linewidth]{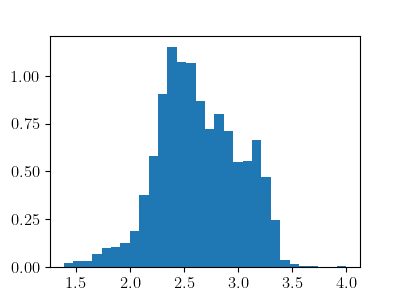}
        \caption{$\bP_{\Abar_{11}}$}
        \label{fig:alpha}
    \end{subfigure}
        \begin{subfigure}[b]{0.31\linewidth}
        \centering
        \includegraphics[width=\linewidth]{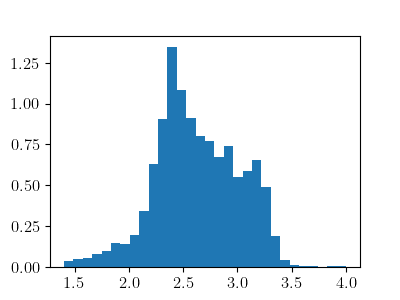}
        \caption{$\bP_{\Abar_{22}}$}
        \label{fig:hist}
    \end{subfigure}
    \begin{subfigure}[b]{0.31\linewidth}
        \centering
        \includegraphics[width=\linewidth]{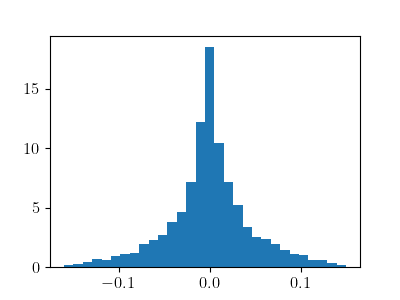}
        \caption{$\bP_{\Abar_{12}}$}
        \label{fig:loss}
    \end{subfigure}
    \begin{subfigure}[b]{0.31\linewidth}
        \centering
        \includegraphics[width=\linewidth]{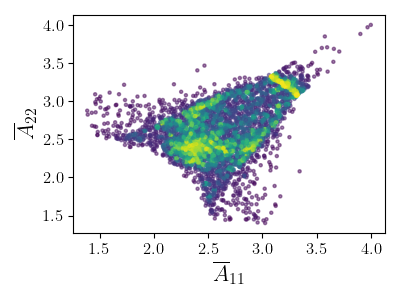}
        \caption{$\bP_{\Abar_{22},\Abar_{11}}$}
        \label{fig:loss}
    \end{subfigure}
        \begin{subfigure}[b]{0.31\linewidth}
        \centering
        \includegraphics[width=\linewidth]{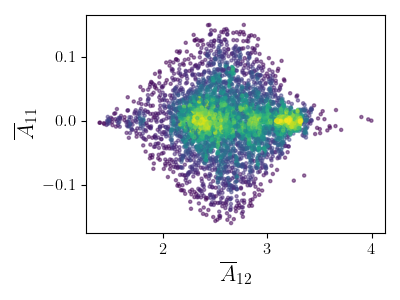}
        \caption{$\bP_{\Abar_{11}, \Abar_{12}}$}
        \label{fig:alpha}
    \end{subfigure}
        \begin{subfigure}[b]{0.31\linewidth}
        \centering
        \includegraphics[width=\linewidth]{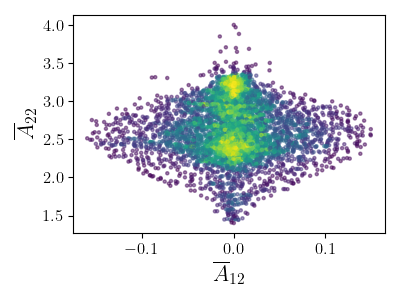}
        \caption{$\bP_{\Abar_{22}, \Abar_{12}}$}
        \label{fig:2D_LOCPer_hists}
    \end{subfigure}
    \caption{Histograms and scatter plots of the observed data-distribution (before adding noise for inference). In (d)-(f) the color is representative the local density of particles computed via the mean 10-neighbor distance to nearby particles.}
    \label{fig:LocPer_2D_Hists_scatter}
\end{figure}

We have just used the copula construction to specify Voronoi seed locations. Given the seeds, we use the Laguerre-Voronoi diagrams of Subsection~\ref{sec:2D_per_hom_i.i.d.}, and hence with $v^{(n,1:M,1:2)}$, $n$ in the index set of partition cells $\mathcal{I}$, we can compute $A^{(n)}$ and $\widetilde{A}^{(n)}$ the same way as in~\eqref{eq:per_vor_Afield_multWeight} and~\eqref{eq:cont_relax_A_crystalLag}. Figure~\ref{fig:loc_per_2D_vor_samples} shows the locally periodic construction: the first two plots is the zoomed in Voronoi diagram at two different locations showing the local periodicity, the third plot is a larger window of the Voronoi construction where the spatial variations become visible. Figure~\ref{fig:Homogenized_loc_per_2D_vor_samples} shows the three components of the periodically homogenized coefficient in each cell in Figure~\ref{fig:loc_per_2D_vor_samples:c}, exposing the spatial variability of the copula construction.

To mirror a physically testable setup, we assume the local $\Abar^{(n)}$ are noisily measured at sparse distant locations with the noise distribution $\bP_\xi:=\cN(0,\sigma_{\xi}^2\I_3)$. We attempt to estimate $(w,a)$ from sparsely observed homogenized coefficients using~\eqref{eq:objective_per}.

\textbf{Data and Inference:}
The correlation lengthscale is set to $\ell=1/100$ and $\nu=4$ (this information is not inferred from data or needed in any way for inference, the only assumption is that measurement locations are distant enough to be approximately i.i.d.) and we use $100$ KL expansion terms to sample from $\Xi(\ell, \nu;\S)$, the cell size $\Q$ is set to $5\cdot 10^{-5}$, meaning the full Voronoi diagram contains $4\cdot10^{8}$ cells. The noise $\sigma_\xi=10^{-2}$.
 We set $a^\dagger=(0.5, 2.25, 4),$ $w^\dagger=(0, 1.38\cdot 10^{-10}, 2.78\cdot 10^{-9}).$ To generate data we set $\tau=10^{-20}$ to have sharp numerically discontinuous Voronoi cells. We use $N=5\cdot10^3$ coefficient observations. We use $N=5\cdot10^3$ coefficient observations. Figure~\ref{fig:LocPer_2D_Hists_scatter} shows the histograms of the data distribution $\bP^{^N}_{\Abar}$ as scatter plots of combinations of the three data components.

\begin{figure}[t] 
    \centering
    \begin{subfigure}[b]{0.32\linewidth}
        \centering
        \includegraphics[width=\linewidth]{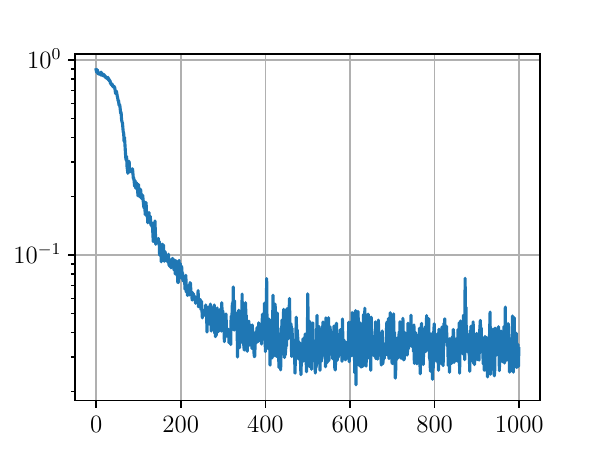}
        \caption{Objective}
        \label{fig:alpha}
    \end{subfigure}
        \begin{subfigure}[b]{0.32\linewidth}
        \centering
        \includegraphics[width=\linewidth]{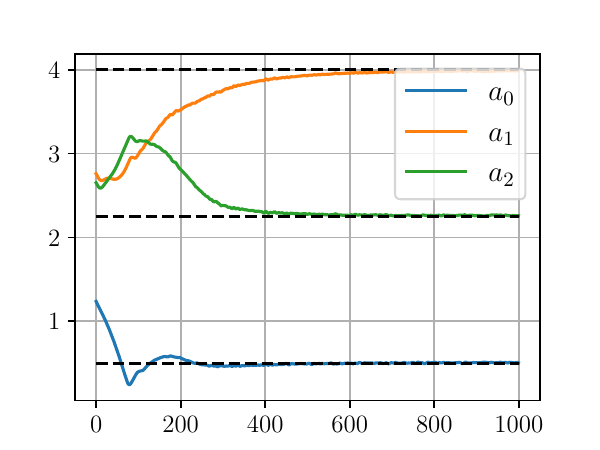}
        \caption{$a$}
        \label{fig:hist}
    \end{subfigure}
    \begin{subfigure}[b]{0.32\linewidth}
        \centering
        \includegraphics[width=\linewidth]{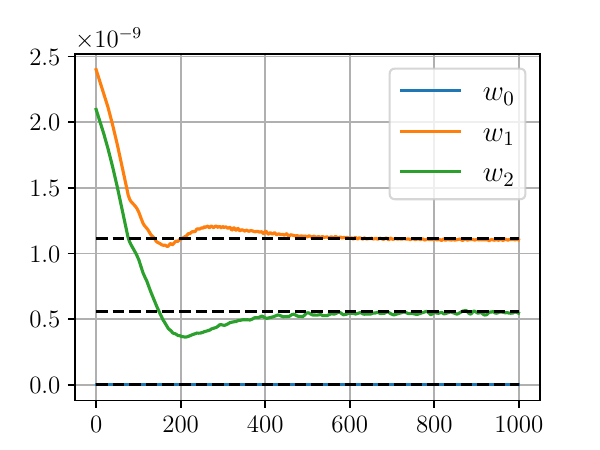}
        \caption{$w$}
        \label{fig:loss}
    \end{subfigure}
    \caption{Convergence of \eqref{eq:objective_per} and $(w, a)$  for the locally periodic microstructure.}
    \label{fig:Loc_Per_2D_conv}
\end{figure}

\textbf{Results:} For inference we set $\tau$ to $10^{-3}$ to allow gradients to flow through the Voronoi construction and estimate $w$. 
Figure~\ref{fig:Loc_Per_2D_conv} shows the convergence of the objective, and $(w, a)$ against $(a^\dagger, w^\dagger).$
The inference ran in $5\times9.33$ hours. The relative error on the recovered $w$ is 1.76\%, and for $a$ is $0.35\%$.
The relative error on $a$ is 0.383\% and the relative error on $w$ is 0.97\%.

\section{2D Stochastic Homogenization}\label{sec:stoch_hom}
In this section we explore applications of distributional inverse homogenization in 2D stochastic homogenization. As with the previous section, we focus on the Voronoi microstructure model. We first consider an i.i.d. data generating mode, followed by a locally stationary-ergodic Voronoi construction, again based on statistical copulas, supporting Contributions~\ref{con:C2} and \ref{con:C4}. To efficiently implement our inference scheme we propose an adaptive surrogate learning scheme, answering Contribution~\ref{con:C3}.

\subsection{Stationary-Ergodic i.i.d. Microstructures}

We first introduce the setup for the i.i.d. microstructure model, followed by the exposition of the surrogate modelling method, and we finish with details on the dataset and results.

\textbf{Setup:}
In this subsection we revisit Dirichlet-distributed volume fractions, $\rho$;
we make the identification $\varpi=(\rho, a)$ in~\eqref{eq:objective_stoch}
and attempt to invert the map $(\alpha, a)\mapsto\bP_{\,\Abar}$. Prescribing the exact volume fraction of a material type in a Voronoi construction requires direct manipulation of Voronoi seeds,\footnote{If keeping other parameter such as Voronoi-weights fixed.} we do not attempt this. Instead, we approximately prescribe volume fractions by randomly assigning cell material-types by sampling from a categorical distribution with category weights sampled from $\sDir(\alpha)$. This use of the Dirichlet distribution to specify the volume fractions mirrors Section~\ref{sec:1Ddirichlet}. In the following, for $\rho\in\Delta_{M-1}$, we have $r\sim\sCat(\rho)$ where $r=(0, \hdots, 1, \hdots, 0)$ is a $M$-dimensional vector of zeros with a $1$ in position $k$ with probability $\rho_k$. We define i.i.d. generative model for the material microstructure in the $\sfn$-cell $\Q$ via 
\begin{subequations}\label{eq:stoch_i.i.d._model}
\begin{align}
    \rho&\sim\sDir(\alpha),\label{eq:i.i.d.SotchA:1}\\
    v^{(j)}&\sim\Unif(\Q),\label{eq:i.i.d.SotchA:2}\\
    r^{(j)} &\sim \sCat(\rho),\label{eq:i.i.d.SotchA:3}\\
    A(x; a, \rho, v^{(1:\Jvor)}) &= \I_2\sum_{j=1}^{\Jvor} \langle a, r^{(j)}\rangle \bb1_{\Q_j(v^{1:\Jvor})}(x),\label{eq:i.i.d.SotchA:4}
\end{align}
\end{subequations}
where we define
\begin{align}
    \Q_j(v) &= \{x\in Q: \|x- v^{(j)}\|\leq \|x- v^{(k)}\|; k,j\in\{1, \hdots,\Jvor\};  k\neq j\}.\label{eq:i.i.d.SotchA:5}
\end{align}
This procedure can be explained informally as follows: \eqref{eq:i.i.d.SotchA:1} sample a volume fraction vector for a Dirichlet distribution; \eqref{eq:i.i.d.SotchA:2} sample $\Jvor$ Voronoi seeds for the cell $\Q$; \eqref{eq:i.i.d.SotchA:3} draw a categorical vector of all zeros expect for one 1 in position $m$ with probabilities given by $\rho$; \eqref{eq:i.i.d.SotchA:4} assemble the microstructure field by constructing the Voronoi diagram by~\eqref{eq:i.i.d.SotchA:5} and assigning material value to each cell via the inner product $\langle a, r^{(j)}\rangle$ (notice $r^{(j)}=e^m$ simply picks out the value $a_m$ from $a$).
Figure~\ref{fig:Stochi.i.d._2D_rnd_A11_samples} shows component $A_{11}(y)$ for two independent random coefficient fields as sampled per~\eqref{eq:stoch_i.i.d._model} with $\Jvor=3\cdot10^4$,

\begin{figure}[t] 
    \centering
    % \vspace{1em}
    \begin{subfigure}[b]{0.48\linewidth}
        \centering
        \includegraphics[width=\linewidth]{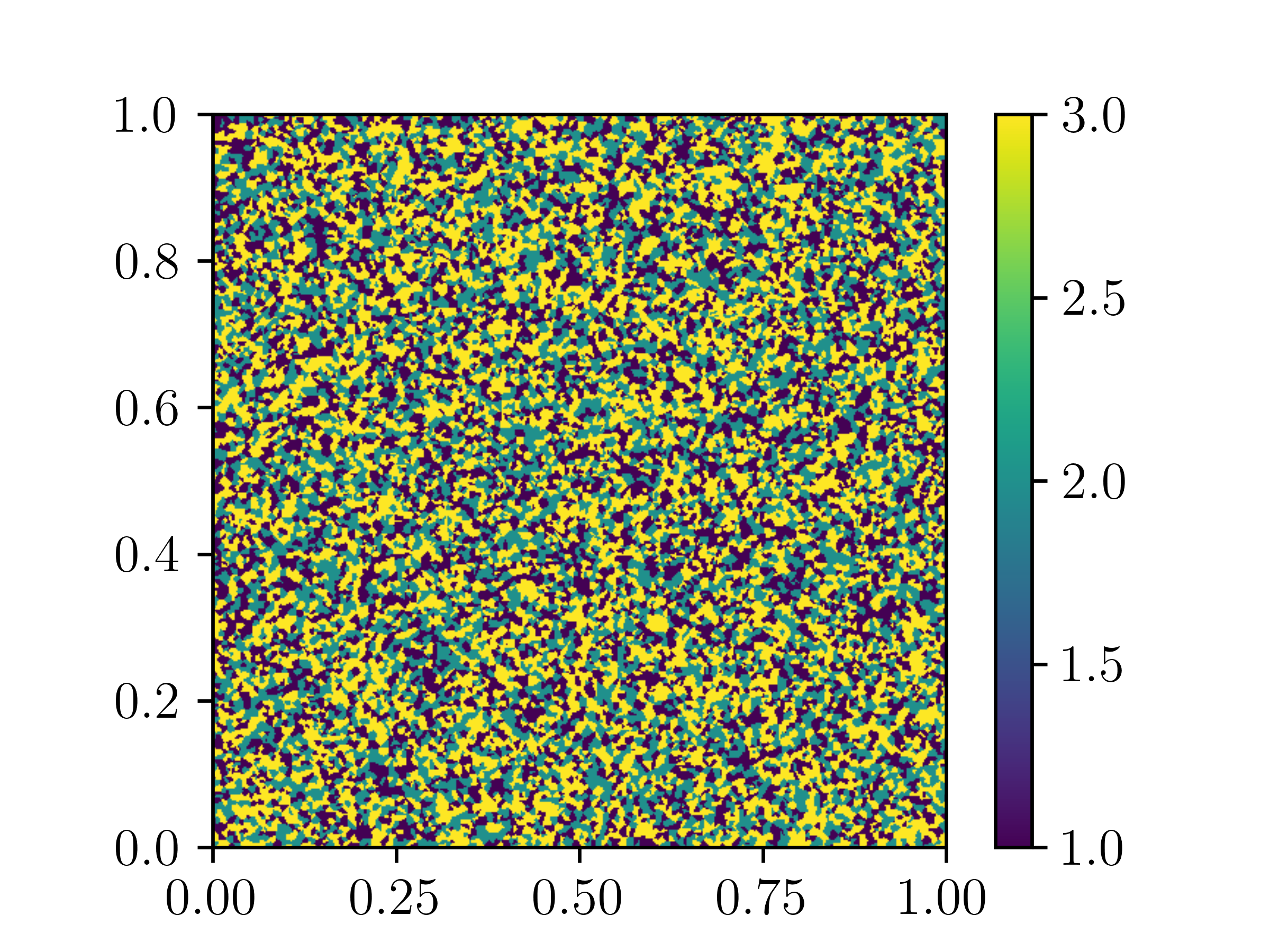}
        \caption{Random S.E. cell $A^{(1)}_{11}(y)$}
        \label{fig:alpha}
    \end{subfigure}
        \begin{subfigure}[b]{0.48\linewidth}
        \centering
        \includegraphics[width=\linewidth]{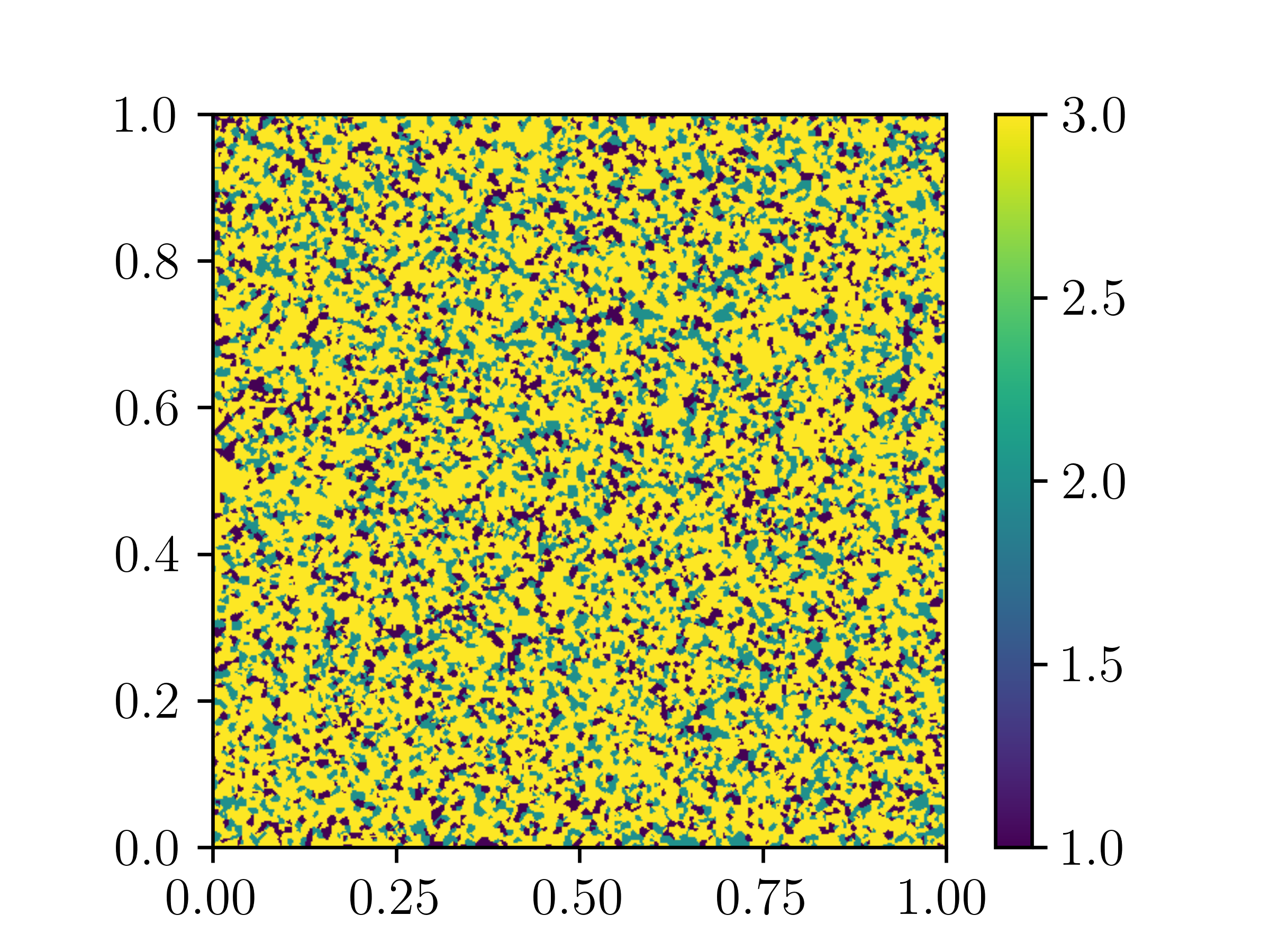}
        \caption{Random S.E. cell $A^{(2)}_{11}(y)$}
        \label{fig:hist}
    \end{subfigure}
    \caption{Randomly generated stationary-ergodic (S.E.) microstructure for $(a,\rho)$ randomizing over Voronoi seed locations in $\Q$ and $\rho\sim\sDir(\alpha)$. These are realizations of the i.i.d. stationary-ergodic material model.}
    \label{fig:Stochi.i.d._2D_rnd_A11_samples}
\end{figure}
\begin{figure}[t] 
    \centering
    \hspace{1.1em}
    \begin{subfigure}[b]{0.31\linewidth}
        \centering
        \includegraphics[width=\linewidth]{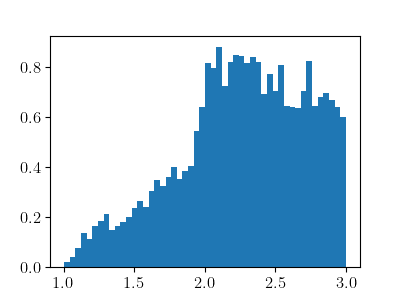}
        \caption{$\bP_{\Abar_{11}}$}
        \label{fig:alpha}
    \end{subfigure}
        \begin{subfigure}[b]{0.31\linewidth}
        \centering
        \includegraphics[width=\linewidth]{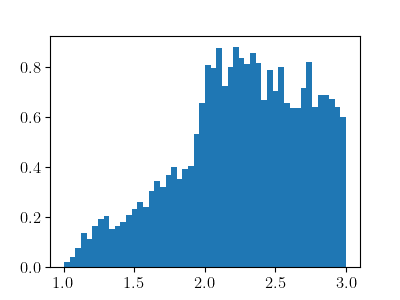}
        \caption{$\bP_{\Abar_{22}}$}
        \label{fig:hist}
    \end{subfigure}
    \begin{subfigure}[b]{0.31\linewidth}
        \centering
        \includegraphics[width=\linewidth]{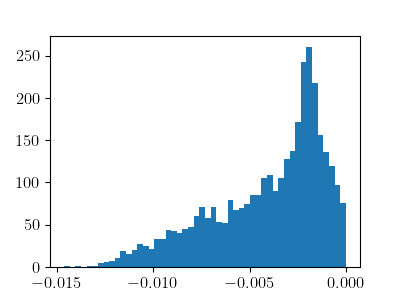}
        \caption{$\bP_{\Abar_{12}}$}
        \label{fig:loss}
    \end{subfigure}
    \begin{subfigure}[b]{0.31\linewidth}
        \centering
        \includegraphics[width=\linewidth]{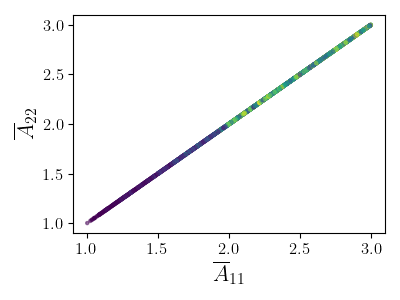}
        \caption{$\bP_{\Abar_{22},\Abar_{11}}$}
        \label{fig:loss}
    \end{subfigure}
        \begin{subfigure}[b]{0.31\linewidth}
        \centering
        \includegraphics[width=\linewidth]{figs/Stoch_Hom_2D/scatter_A11_A21_data.png}
        \caption{$\bP_{\Abar_{11}, \Abar_{12}}$}
        \label{fig:alpha}
    \end{subfigure}
        \begin{subfigure}[b]{0.31\linewidth}
        \centering
        \includegraphics[width=\linewidth]{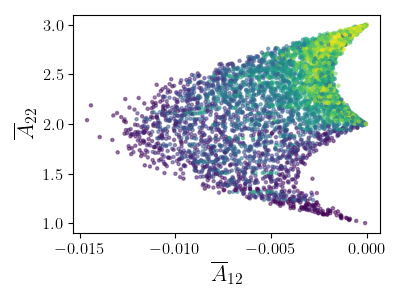}
        \caption{$\bP_{\Abar_{22}, \Abar_{12}}$}
        \label{fig:2D_Stoch_hists}
    \end{subfigure}
    \caption{Histograms and scatter plots of the observed data-distribution for the i.i.d. stationary-ergodic material model. In (d)-(f) the color is representative the local density of particles computed via the mean 10-neighbor distance to nearby particles.}
    \label{fig:Per_2D_Stoch_Hists_scatter}
\end{figure}

\textbf{Surrogate Training:}
The objective is to invert the map $(\alpha, a)\mapsto \bP_{\,\Abar}$.
We are now presented with two challenges for numerical implementation: i) the forward map is very computationally expensive as it requires a Monte-Carlo expectation over Voronoi seeds; ii) we require gradients w.r.t to $\alpha$ for a computation which passes via the discontinuous categorical distribution. Challenge ii) could be resolved by smoothing out discontinuities, similarly to Section~\ref{sec:2D_per_hom}, using~\cite{jang2017categorical,figurnov2018implicit},
but then we are still faced with i). The solution we explore in this section is to concurrently learn a neural network directly approximating $\cG^\dagger$ of~\eqref{sec:background:stoch_hom} with $\cG^{\phi^\star}:(\rho, a)\mapsto \Abar$ with adaptive surrogate training as outlined in Section~\ref{sec:background:surrogate}.
In estimating $\theta$ we run a gradient-flow on a pseudo-time $\st$, representing the steps in the optimization routine. At a given step $\st$ we update our surrogate model $\cG^{\phi_t^\star}$ with
\begin{subequations}\label{eq:cummul_emp_dataset_SH}
\begin{align}
     \phi^\star_\st &= \underset{\phi}{\argmin}\;\bE_{(\rho, a), \Abar\sim\bP^{\st}}\left[\|\cG^\phi(\rho, a) - \Abar\|^2 \right],\label{eq:cummul_emp_dataset_SH:1}\\
     \bP^\st &= \frac{1}{\st + N_\mathrm{pre.} }\sum_{\sfi=1-N_\mathrm{pre.}}^{\st} \delta_{((\rho, a)^{(\sfi)}, \cG^\dagger((\rho, a)^{(\sfi)}))},
     \;\begin{cases}
         (\rho^{(\sfi)}, a^0);\,\rho^{(\sfi)}\sim\sDir(\alpha^0),\quad \mathrm{for}\; \sfi\leq0,\\
         (\rho^{(\sfi)},a^\sfi);\,\rho^{(\sfi)}\sim\sDir(\alpha^\sfi),\quad\; \mathrm{for}\; 0<\sfi\leq \st.\end{cases}\label{eq:cummul_emp_dataset_SH:2}
\end{align}
\end{subequations}
In~\eqref{eq:cummul_emp_dataset_SH:1} we select $\phi^\star$ at iteration $\st$ to minimize the input-output error for pairs of training data drawn from the adaptive dataset which is defined via~\eqref{eq:cummul_emp_dataset_SH:2}.
Notice, we do not use the exact $\phi^\star(\theta)$ in~\eqref{eq:surrogate_active}  as we wish to avoid differentiating $\cG^\dagger$, hence the pseudo-time $\st$ formulation. This particular time stepping scheme is proposed, in a related context, in~\cite{vadeboncoeur2025efficient}. In implementing this scheme we withhold updating $\bP^\st$ until $100$ new samples $(\rho^{(\Bigcdot)}, a^{\Bigcdot})$ are ready for evaluation under $\cG^\dagger$, to make the most of hardware parallelization. The surrogate model, $\cG^\phi$, is a 4-layer, 32-neuron, feedforward network with ``swish'' activation functions~\cite{ramachandran2017searching} and a layer-wise Lipschitz number constrained to be under 50~\cite{gouk2021regularisation}.
\vspace{1em}

{
\begin{remark}
    In a similar spirit to Remark~\ref{rem:HS_Bounds_Per}, we comment on bounds on the distribution of homogenized coefficients. In this section, volume fractions are random vectors drawn from a Dirichlet distribution; for any parameter setting of the Dirichlet distribution, it is always possible to draw a vector of volume fractions which represents a ``pure material'' (volume fraction of the form $(1,0,0)$, $(0,1,0)$, or $(0,0,1)$). Hence, the lower and upper bounds over the distribution of bulk material properties in this setting are simply $a^-=\min(a^\dagger)$ and $a^+=\max(a^\dagger)$. This is readily visible from Figure~\ref{fig:Per_2D_Stoch_Hists_scatter} (d)-(f) by noting that the data-generating $a^\dagger=(1,2,3)$.
\end{remark}
}

\textbf{Data and Inference:}
We generate $5\cdot10^3$ data samples to be used for inference. We do not include noise is this experiment. We set $\alpha^\dagger=(0.66, 1.32, 2)$ and $a^\dagger=(1,2,3)$. The Voronoi construction is fully discontinuous. Figure~\ref{fig:Per_2D_Stoch_Hists_scatter} shows the data distribution in terms of histograms and scatter plots. We set $\Jvor=3\cdot10^4$ seeds per cell $\Q_\sfn$. The data is generated using single-sample Monte-Carlo estimate of $\Abar^{(n)}$, in contrast to this, the i.i.d. model used for inference uses $\Javg=10$ Monte-Carlo samples.

\begin{figure}[t] 
    \centering
    % \vspace{1em}
    \hspace{-1em}
        \begin{subfigure}[b]{0.42\linewidth}
        \centering
        \includegraphics[width=\linewidth]{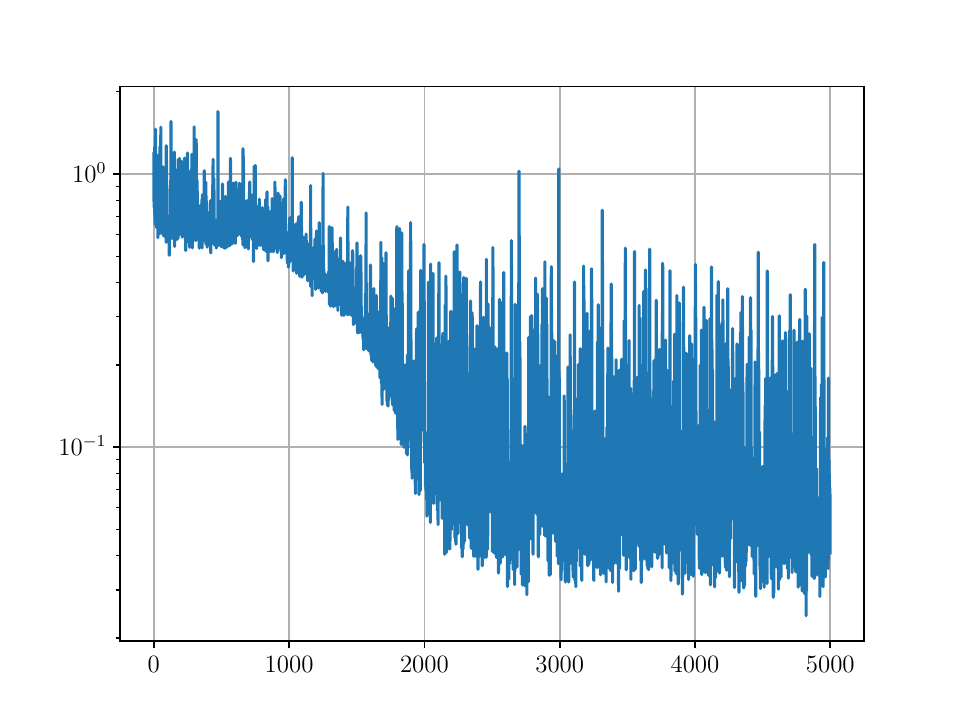}
        \caption{\eqref{eq:objective_stoch}}
        \label{fig:hist}
    \end{subfigure}
    % \hfill
    % \hspace{-1em}
    \begin{subfigure}[b]{0.42\linewidth}
        \centering
        \includegraphics[width=\linewidth]{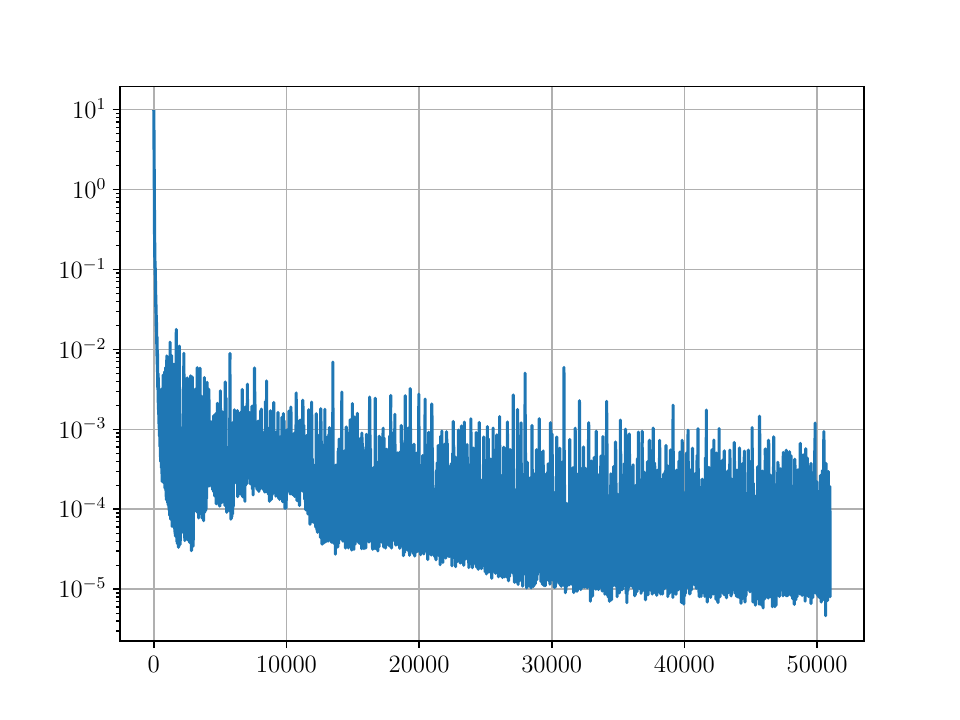}
        \caption{Surrogate objective~\eqref{eq:cummul_emp_dataset_SH:1}}
        \label{fig:loss}
    \end{subfigure}
    \begin{subfigure}[b]{0.42\linewidth}
        \centering
        \includegraphics[width=\linewidth]{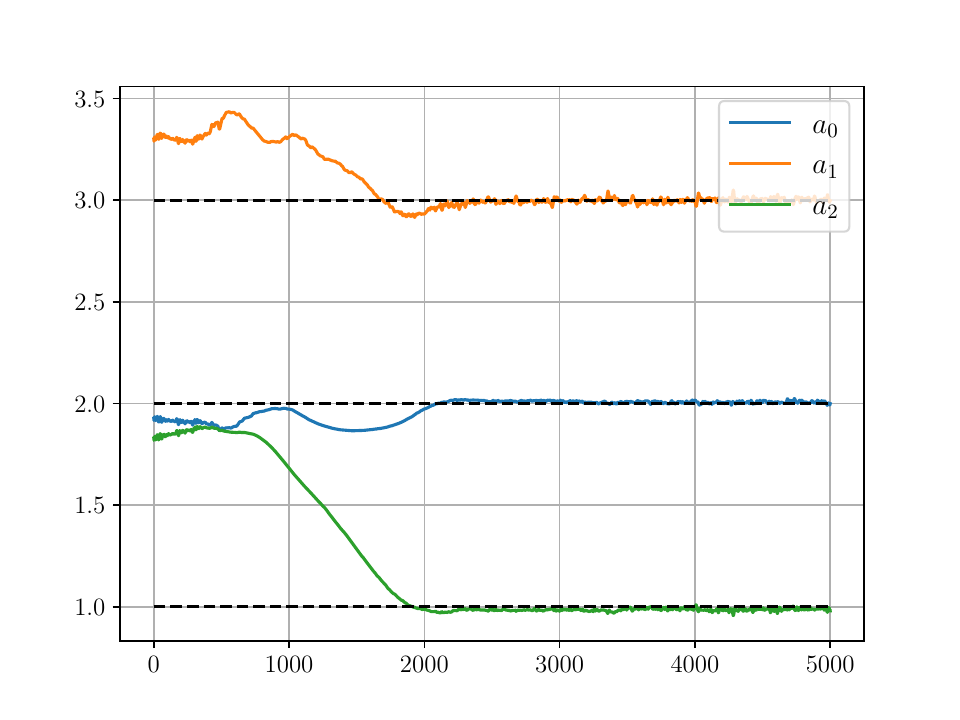}
        \caption{$a$}
        \label{fig:alpha}
    \end{subfigure}
    \begin{subfigure}[b]{0.42\linewidth}
        \centering
        \includegraphics[width=\linewidth]{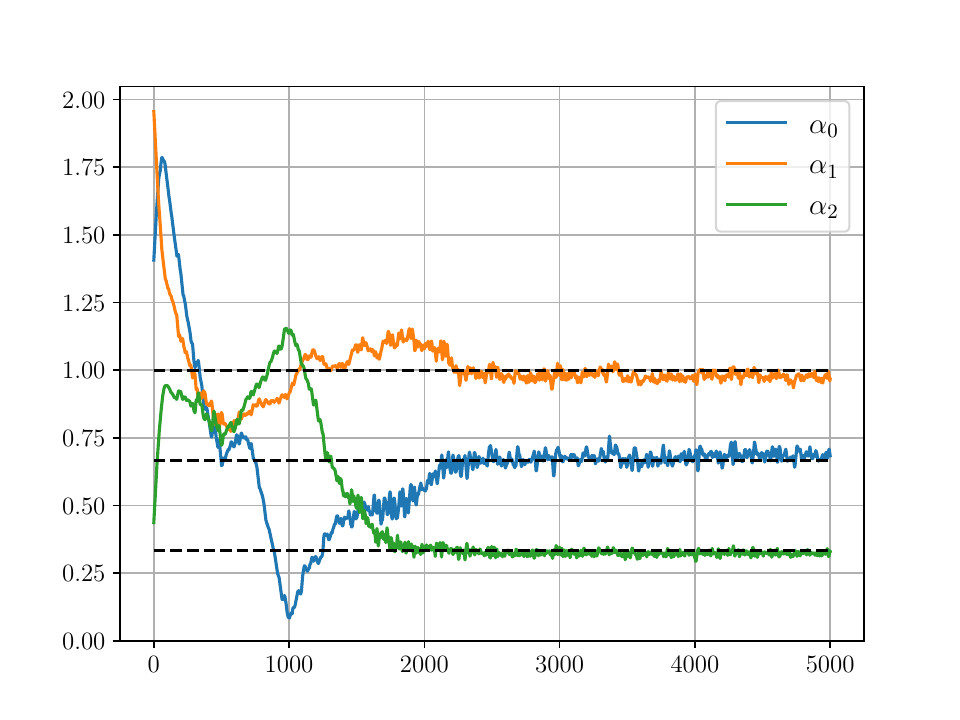}
        \caption{$\alpha$}
        \label{fig:alpha}
    \end{subfigure}
    \caption{Convergence of Objectives and $(\alpha, a)$ for the i.i.d. stationary-ergodic material model.}
    \label{fig:stoch_2D_vor_samples_conv}
\end{figure}

\textbf{Results:} The surrogate modelling dataset is limited to a size 2500 evaluations of $\cG^\dagger$. 
Every random restart includes acquiring the training dataset  of evaluations of $\cG^\dagger$ for accumulation in $\bP^\st$ along with training the surrogate $\cG^\phi$.
For initialization, $(w,a)$ are drawn independently from $\Unif(0.2, 2)$ and $\Unif(0.5, 4)$ respectively, and i.i.d. with respect to each component of $w$ and of $a$. This is on the same order of compute time as periodic homogenization case -- which did not use a surrogate model -- despite:
i) the computational mesh used in $\sF^\dagger$ being $500\times500$ for stochastic homogenization instead of $100\times100$ in the periodic case;
ii) each compute of $\Abar$ requiring 10 evaluations of $\sF^\dagger$ in the Monte-Carlo stochastic estimation. Figure~\ref{fig:stoch_2D_vor_samples_conv} shows the convergence of the algorithm on $(\alpha, a)$.
By using the surrogate construction with an adaptively acquired training set we only compute the expensive $\cG^\dagger$ map 2500 times and instead, in \eqref{eq:objective_stoch} evaluate the inexpensive $\cG^\phi$ map $2.5\cdot10^7$ times.
The relative error on $a$ is 2.98\% and the relative error on $w$ is $0.533\%$.

\subsection{Locally Stationary-Ergodic Microstructures}\label{sec:loc_se_stoch}

In this, our final example, we consider a material field with two approximately stationary-ergodic levels: a microscopic stationary-ergodic property which allows us to apply stochastic homogenization in a small cell locally, and a macroscopic stationary-ergodic property which allows us to estimate relevant statistics over the entire large specimen. In this context, distributional inverse homogenization can infer statistical properties of the microstructure, without needing to estimate or assume specific spatial correlation information, only that the measured bulk properties are separated well enough so as 
to be approximately i.i.d.\,. We now introduce this construction.

Using a copula, $\Xi(\ell, \nu;\S)$, defined in Section~\ref{sec:loc_per}, we consider $\rho(x)$, a spatially correlated random vector field describing local material volume fractions with $\sDir(\alpha)$ marginals. Such a field is constructed via

\begin{subequations}
\begin{align}
    \zeta^{(m)}(x) &\sim\Xi(\ell,\nu; \S)\label{eq:dircCop:1}\\
    Y^{(m)}(x) &= F^{-1}_{\mathrm{Gamma({\alpha_m, 1})}}(\zeta^{(m)}(x)),\label{eq:dircCop:2}\\
    \rho_{(m)}(x) &= \frac{Y^{(m)}(x)}{\sum_{j=1}^M Y^{(j)}(x)}, \quad m=1\hdots,M.\label{eq:dircCop:3}
\end{align}
\end{subequations}
We abbreviate this procedure as 
\begin{align}
    \rho \sim {\Lambda}(\alpha; \nu, \ell, \S).
\end{align}
We can now summarize this procedure informally as follows: \eqref{eq:dircCop:1} sample $M$ independent uniform copula fields; \eqref{eq:dircCop:2} for copula field value at location $x$, evaluate the inverse CDF of the Gamma distribution, this yields $M$ Gamma distributed values at $x$; \eqref{eq:dircCop:3} normalize each $M$ Gamma distributed values by their sum, the resulting values are Dirichlet distributed and, for varying $x$, spatially correlated.
Following this construction, 
$\rho(x)\sim\sDir(\alpha)\; \forall x\in\S$, and $\mathrm{Cov}(\rho(x), \rho(x'))$ is proportional to $(\ell, \nu)$. We note, the inverse CDF of the Gamma distribution at a value $y$, $F^{-1}_{\mathrm{Gamma({\alpha_m, 1})}}(y)$, is the root of $y-F_{\mathrm{Gamma({\alpha_m, 1})}}(x)$, where $F_{\mathrm{Gamma({\alpha_m, 1})}}$ is the CDF; we can approximately compute the inverse CDF via bisection iterations.
Now to construct an approximately locally and globally stationary-ergodic field on a specimen $\S$ and a subdomain $\S_{p}$, sample
\begin{subequations}
\begin{align}
    \rho&\sim\Lambda(\alpha, \ell, \nu;\S)\label{eq:locES:1}\\
    v^{(j)}&\sim\Unif(\S_{p}),\label{eq:locES:2}\\
    r^{(j)} &\sim \sCat(\rho(v^{(j)})),\label{eq:locES:3}\\
    A(x; a, \rho, v^{(1:\Jvor)}) &= \I_2\sum_{j=1}^{\Jvor} \langle a, r^{(j)}\rangle \bb1_{\Q_j}(x).\label{eq:locES:4}
\end{align}
\end{subequations}
We can now summarize this procedure informally as follows: \eqref{eq:locES:1} sample a random vector valued Dirichlet Copula field; \eqref{eq:locES:1} for a select sub-specimen $\S_p$ sample $\Jvor$ seeds; \eqref{eq:locES:3} at each seed location evaluate the Dirichlet copula field, this return a vector of entries which sum to 1, this the local volume fraction, draw a random categorical vector for this volume fraction vector; \eqref{eq:locES:3} assemble the microstructure field using the seeds to specify the Voronoi diagram and the categorical variable to assign material property to each Voronoi cell. In Figure~\ref{fig:DiriCop_loc_per_2D_vor_samples} we a show the three components of $\rho\sim\Lambda(\alpha, \ell,\nu)$ on a $1/20\times1/20$ subset of $\S$, notice the sum of the three fields is equal 1 at all spatial locations. Collections of evaluations of $\rho(x)$ at sparse locations will be Dirichlet distributed. Figure~\ref{fig:Homogenized_field_loc_per_2D_samples} shows the locally stochastic homogenization of random Voronoi diagrams where the volume fractions are specified by the field in Figure~\ref{fig:DiriCop_loc_per_2D_vor_samples}.
\begin{figure}[t] 
    \centering
    % \vspace{1em}
    \hspace{-1em}
        \begin{subfigure}[b]{0.32\linewidth}
        \centering
        \includegraphics[width=\linewidth]{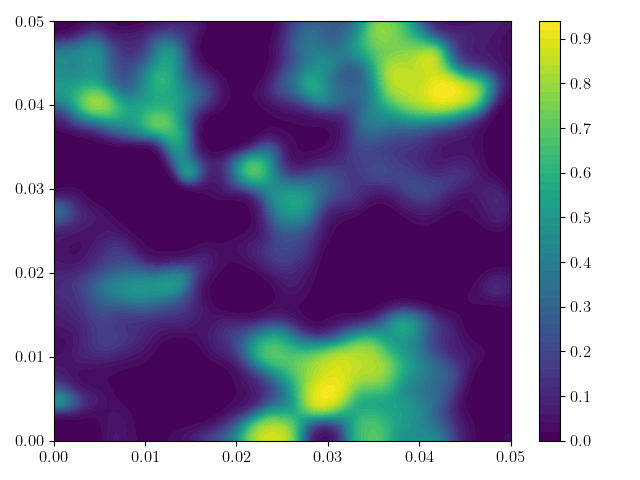}
        \caption{$\rho_1(x)$}
        \label{fig:hist}
    \end{subfigure}
    % \hfill
    % \hspace{-1em}
    \begin{subfigure}[b]{0.32\linewidth}
        \centering
        \includegraphics[width=\linewidth]{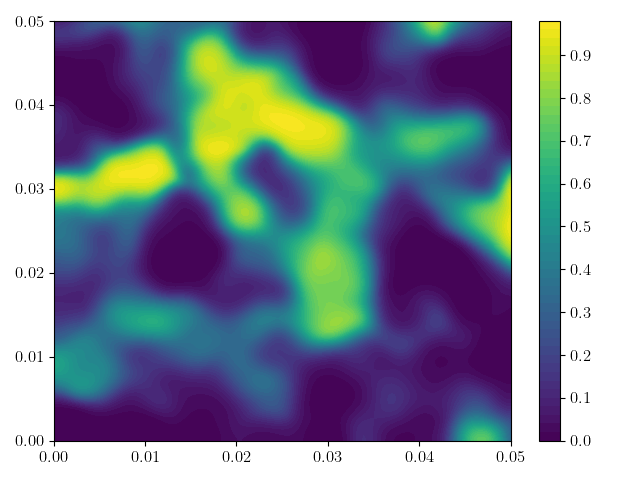}
        \caption{$\rho_2(x)$}
        \label{fig:loss}
    \end{subfigure}
    \begin{subfigure}[b]{0.32\linewidth}
        \centering
        \includegraphics[width=\linewidth]{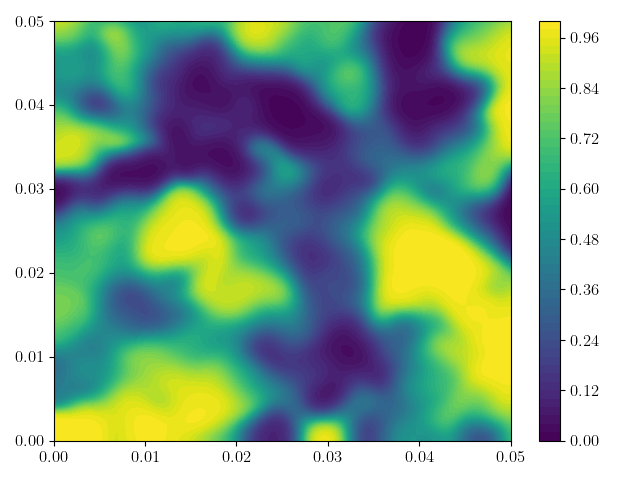}
        \caption{$\rho_3(x)$}
        \label{fig:alpha}
    \end{subfigure}
    \caption{Dirichlet marginal copula, $\Lambda(\alpha, \ell, \nu; \S)$,  components on a small $1/20\times 1/20$ subset of $\S$. Notice, $\sum_{m=1}^M\rho_m(x)=1$. This sub-specimen contains $100\times 100$ sub-specimen cells, $\Q_\sfn$, which appear in Figure~\ref{fig:Stochi.i.d._2D_rnd_A11_samples}.}
    \label{fig:DiriCop_loc_per_2D_vor_samples}
\end{figure}
\begin{figure}[t] 
    \centering
    \vspace{1em}
    \hspace{-1em}
        \begin{subfigure}[b]{0.34\linewidth}
        \centering
        \includegraphics[width=\linewidth]{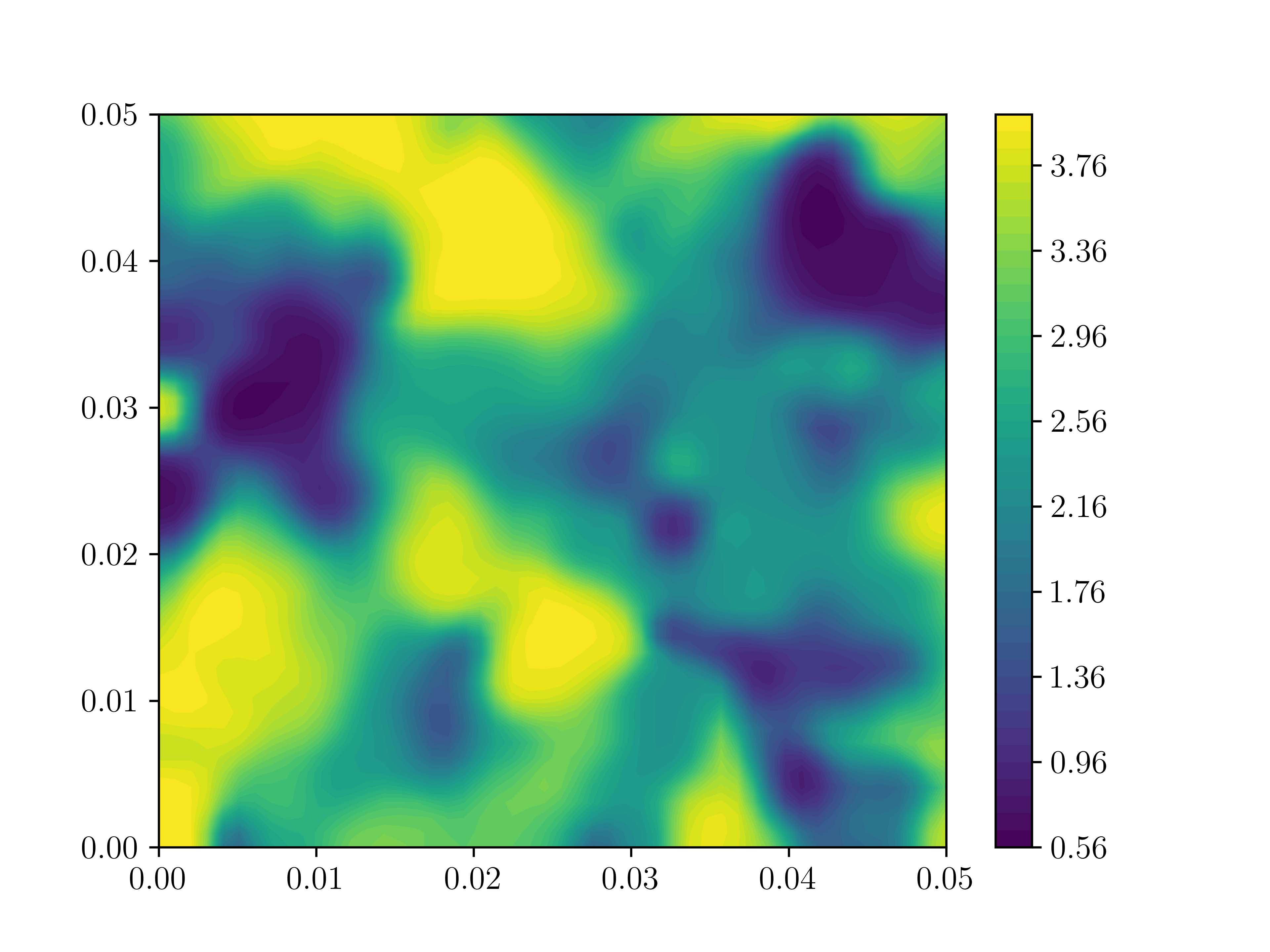}
        \caption{$\Abar_{11}(x)$}
        \label{fig:hist}
    \end{subfigure}
    % \hfill
    \hspace{-1em}
    \begin{subfigure}[b]{0.34\linewidth}
        \centering
        \includegraphics[width=\linewidth]{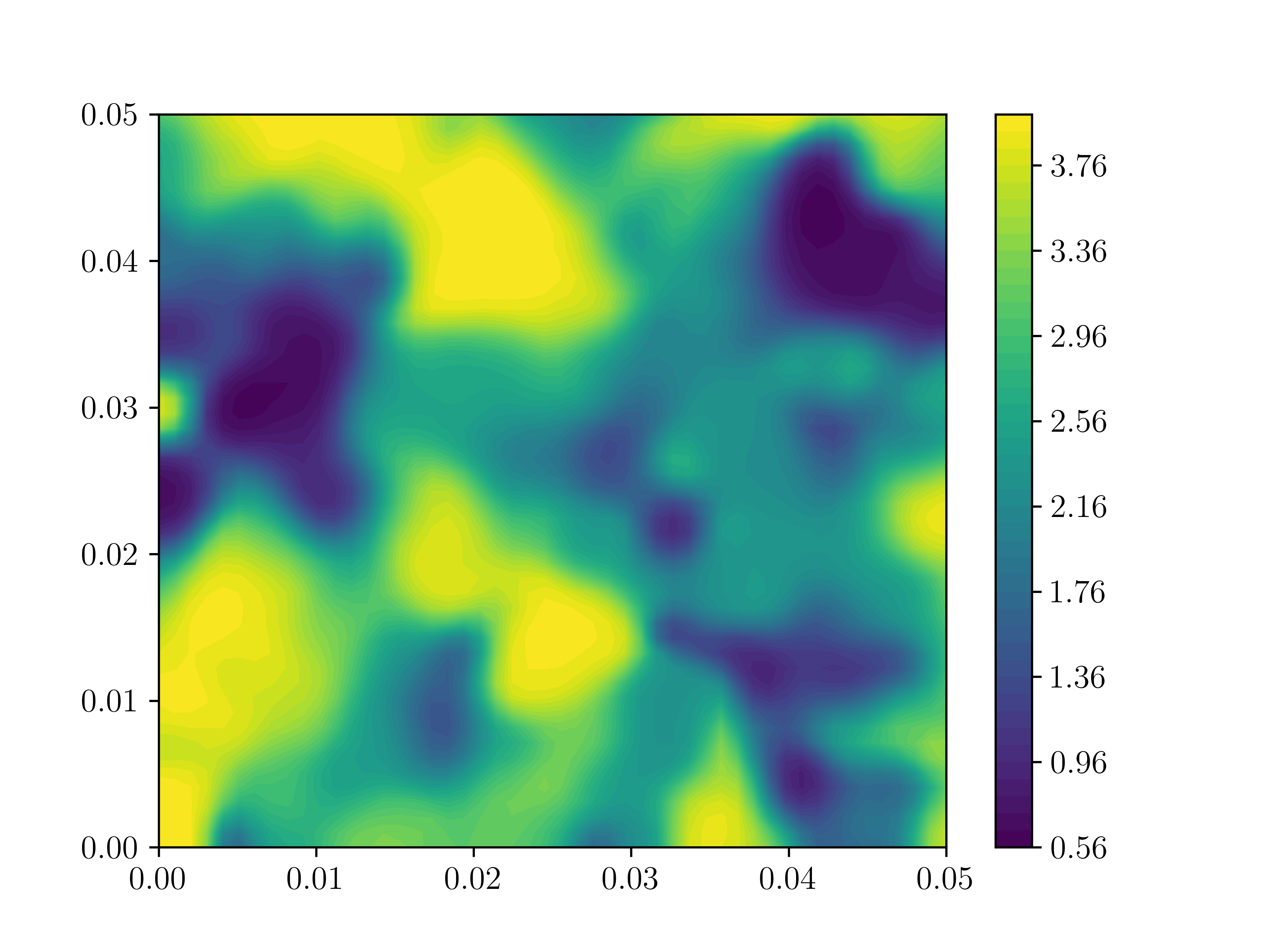}
        \caption{$\Abar_{22}(x)$}
        \label{fig:loss}
    \end{subfigure}
    \hspace{-1em}
    \begin{subfigure}[b]{0.34\linewidth}
        \centering
        \includegraphics[width=\linewidth]{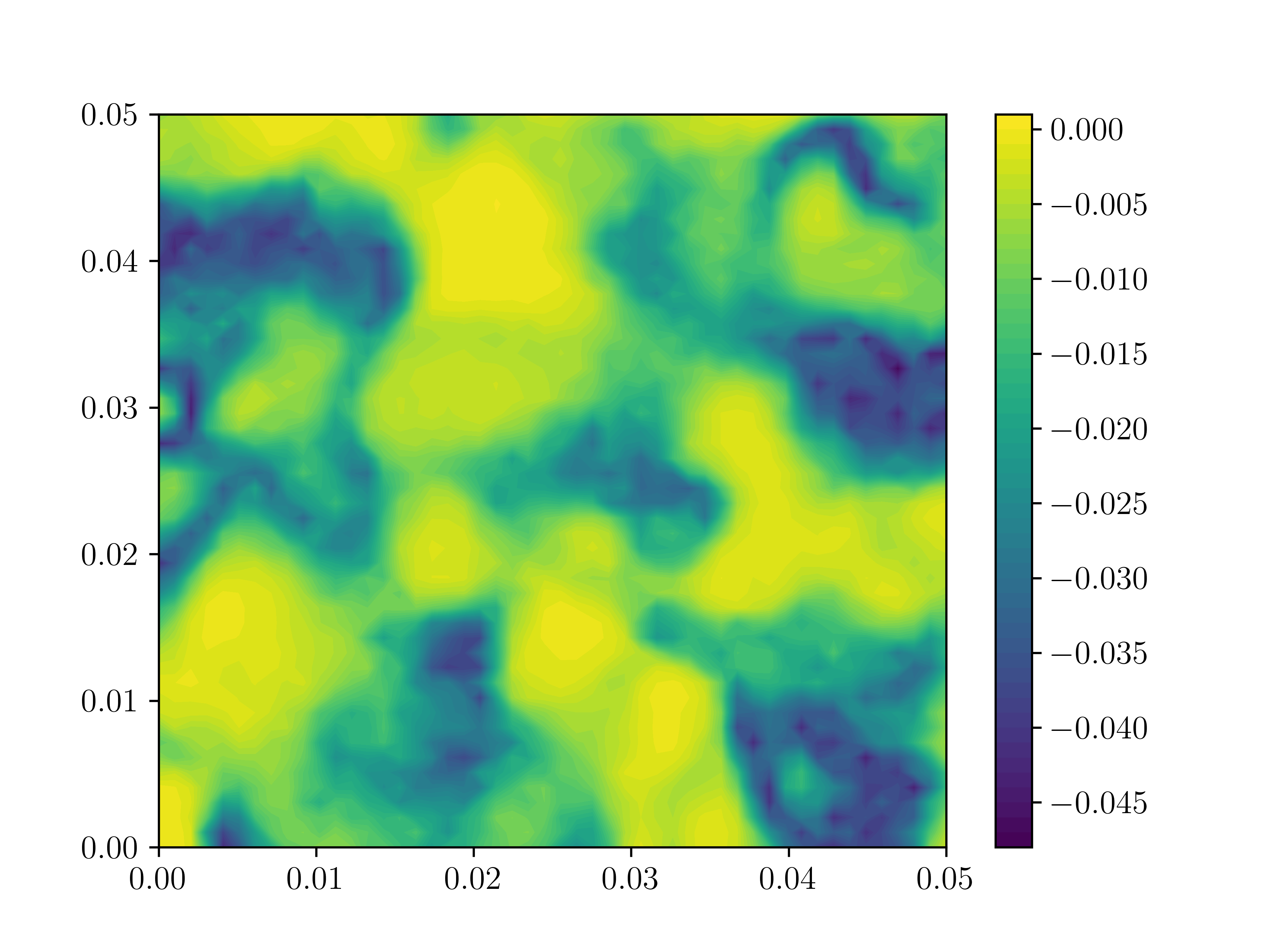}
        \caption{$\Abar_{12}(x)$}
        \label{fig:alpha}
    \end{subfigure}
    \caption{The independent components of the homogenized coefficient field on a subset of a $1/20\times 1/20$ subset of $\S$ where the volume fractions are specified by the random field in Figure~\ref{fig:DiriCop_loc_per_2D_vor_samples}. This sub-specimen contains $100\times 100$ computational cells $\Q_\sfn$.}
    \label{fig:Homogenized_field_loc_per_2D_samples}
\end{figure}
\begin{figure}[t] 
    \centering
    % \vspace{1em}
    \hspace{-1em}
        \begin{subfigure}[b]{0.42\linewidth}
        \centering
        \includegraphics[width=\linewidth]{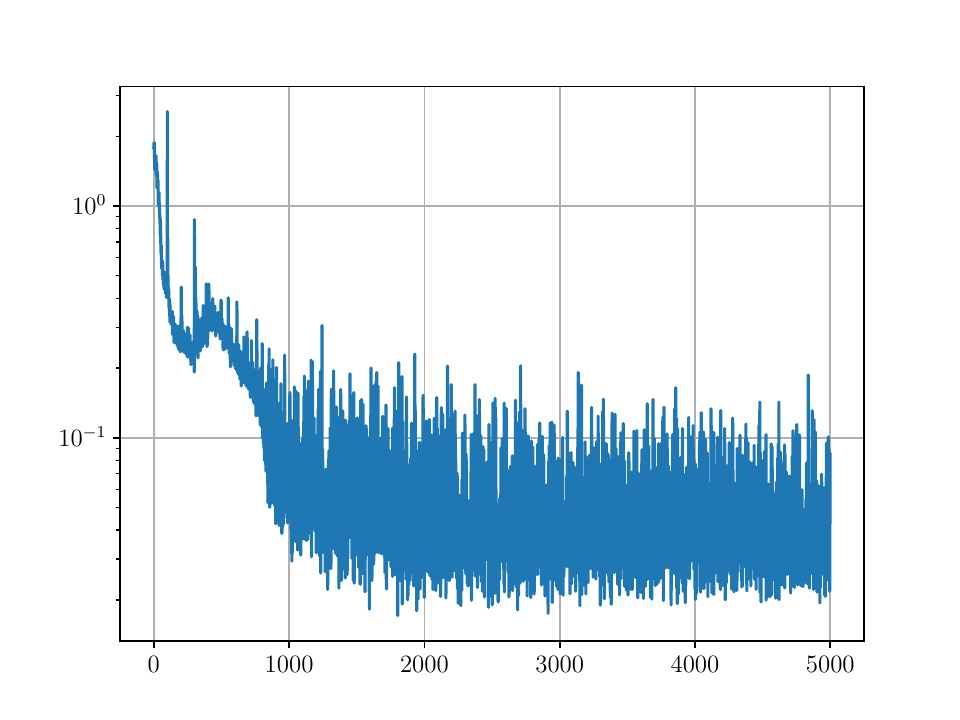}
        \caption{Objective~\eqref{eq:objective_stoch}}
        \label{fig:hist}
    \end{subfigure}
    % \hfill
    % \hspace{-1em}
    \begin{subfigure}[b]{0.42\linewidth}
        \centering
        \includegraphics[width=\linewidth]{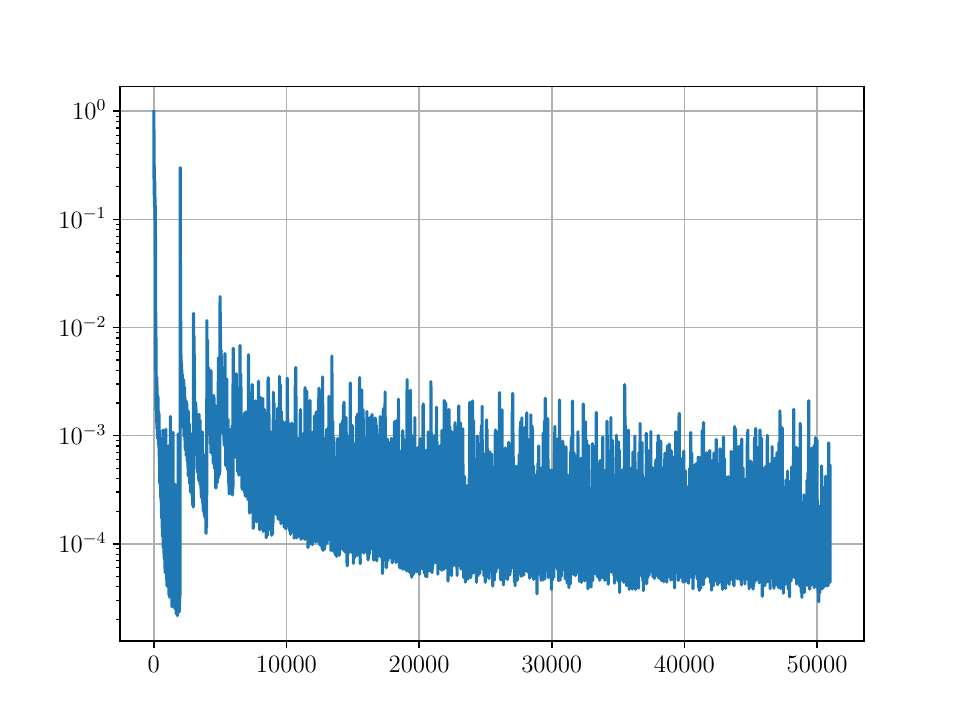}
        \caption{Surrogate Objective~\eqref{eq:cummul_emp_dataset_SH:1}}
        \label{fig:loss}
    \end{subfigure}
    \begin{subfigure}[b]{0.42\linewidth}
        \centering
        \includegraphics[width=\linewidth]{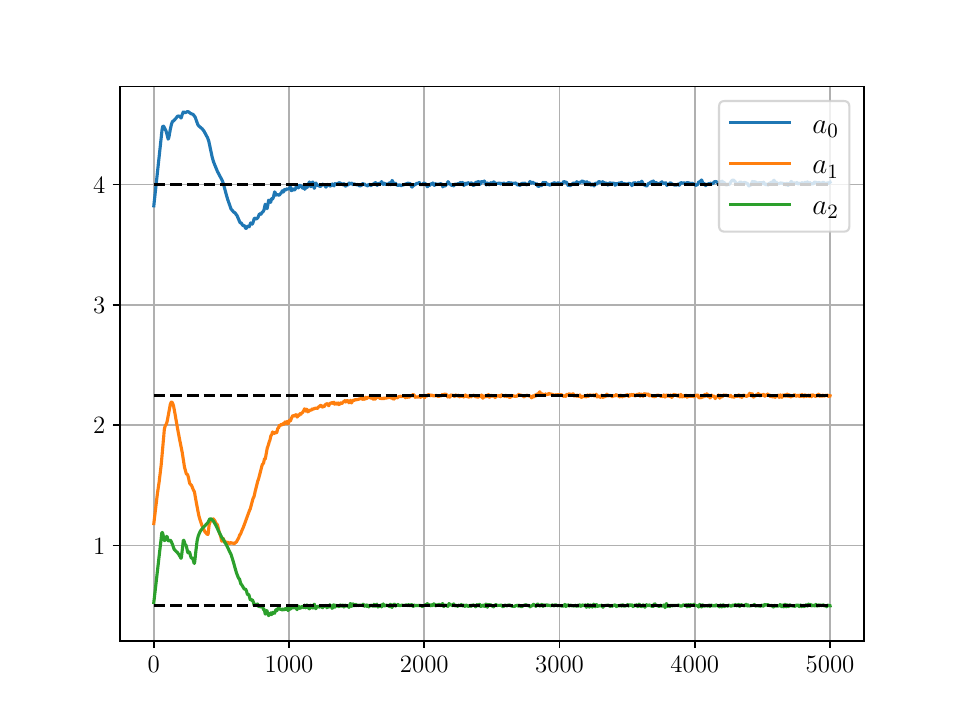}
        \caption{$a$}
        \label{fig:alpha}
    \end{subfigure}
    \begin{subfigure}[b]{0.42\linewidth}
        \centering
        \includegraphics[width=\linewidth]{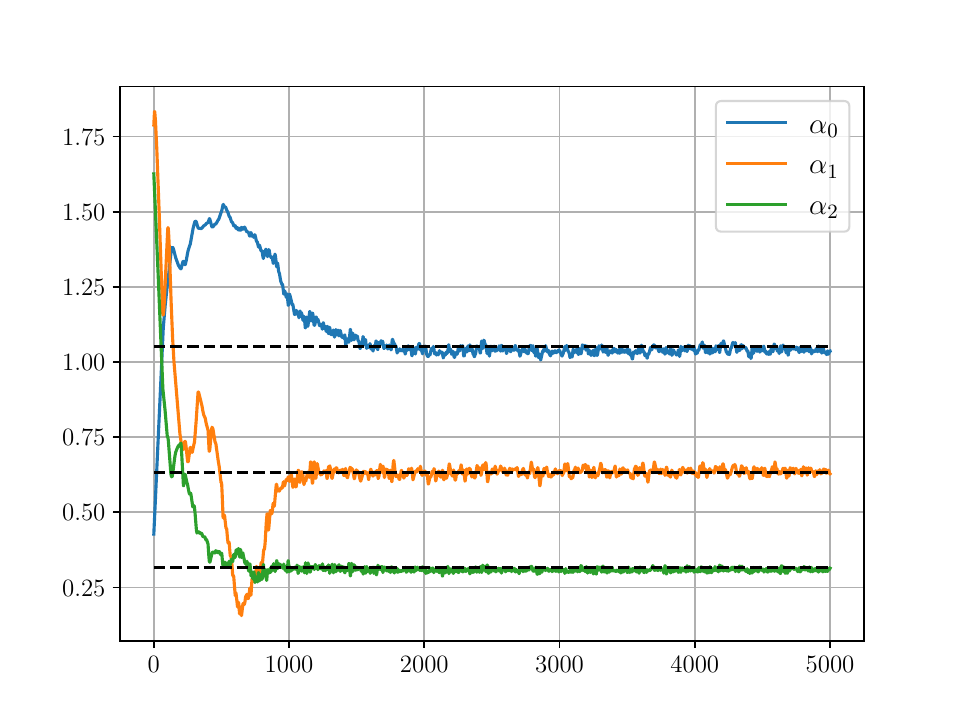}
        \caption{$\alpha$}
        \label{fig:alpha}
    \end{subfigure}
    \caption{Convergence of objectives and $(\alpha, a)$ for the locally stationary-ergodic microstructure model.}
    \label{fig:loc_stat_2D_vor_samples_conv}
\end{figure}

\textbf{Data and Inference:}
The number of Voronoi seeds per sub-specimen $\S_p$ is $3\cdot10^4$. The inter-seed distance in the Voronoi construction with respect the Specimen size is roughly 1 to $6\cdot10^{5}$. We set the number of observed homogenized coefficients to $N=5\cdot10^3$, the noise standard deviation $\sigma_\xi=5\cdot10^{-3}$, $(\ell,\nu)=(1/500, 4)$,
$\Jvor=3\cdot10^{4}$ for $\S_p$ in $\S$.
$N_\mathrm{srgt.}$ Each observed data $\smash{\Abar^{(n)}}$ is generated from a single $\Abar_\sfn$ defined on $\S_p$, hence it is a single-sample Monte-Carlo estimate of $\smash{\Abar^{(n)}}$; in contrast to this,  the i.i.d. model used for inference uses $\Javg=10$ Monte-Carlo samples.

\textbf{Results:} The surrogate modelling dataset is limited to a size 2500 evaluations of $\cG^\dagger$. We use 5 random restarts for different initializations of $(\alpha,a)$, these are drawn independently from $\Unif(0.2, 2)$ and $\Unif(0.5, 4)$ respectively. We report the run achieving the lowest value of~\eqref{eq:objective_stoch} over the last 100 iterations.
Here, again, the expensive map $\cG^\dagger$ is computed 2500 times and $\cG^\phi$ is evaluated $2.5\cdot10^7$ times.
The relative error on $a$ is 0.46\% and the relative error on $w$ is $1.32\%$.

{
\section{Elasticity}\label{sec:elast}
In this section we demonstrate that the ideas developed in the paper can be
applied to the complex, and computationally demanding, setting of elasticity. We apply distributional inversion to the case of elastic homogenization on an i.i.d. periodic microstructure similar to that of Section~\ref{sec:2D_per_hom}. We begin with a brief introduction to homogenization for elasticity, followed by a description of the inference task, a discussion of the results and a summary.

The elasticity problem is an elliptic vector-valued counterpart of the scalar elliptic \eqref{eq:darcy_mscale}. Given Dirichlet boundary conditions, it has the form
\begin{subequations}\label{eq:elastic_mscale}
\begin{alignat}{2}
    -\nabla\cdot({\sfC^\varepsilon}\,\nabla^\mathrm{sym} u^\varepsilon) &= f,\; &&\mathrm{for}\; x\in D\subset\bR^2,\\
    u^\varepsilon(x) &= 0,\; &&\mathrm{for}\; x\in\partial D.
\end{alignat}
\end{subequations}
where $u=(u_1, u_2)^\top$ is two-dimensional displacement,  and the symmetric gradient $\nabla^\mathrm{sym} u = \frac{1}{2}(\nabla u + (\nabla u)^\top)$.
Here $\sfC^\varepsilon: D \to \mathcal{V}^4$ is a fourth order tensor field.
Analogously to the scalar elliptic case, $\sfC^\varepsilon(x)=\sfC(x/\varepsilon)$ for small $\varepsilon$. Also analogously to the scalar elliptic case, the homogenization task is to identify a constant 4-tensor $\overline{\sfC}$ for which the solution to \eqref{eq:elastic_mscale}, with $\overline{{\sfC}}$ in place of $\sfC^\varepsilon$, approximates \eqref{eq:elastic_mscale}. In this work we will focus on the plane strain 2D variant of elasticity~\cite{cioranescu1999introduction}. For a given periodic 4-tensor material field, $\sfC$, one first solves the vector-valued corrector problems for $\chi^{kl}\in H^1_\per (\bT^2; \bR^2), kl\in\{1,2\}$
which weakly satisfy
\begin{subequations}\label{eq:elast_corrector}
\begin{align}
    % \frac{\partial}{\partial y_j}\left(\sfC_{ijmn}(y)\left(\left(\nabla^{\mathrm{sym}}\chi^{kl}\right)_{mn} + \frac{1}{2}(\bdelta_{mk}\bdelta_{nl} + \bdelta_{ml}\bdelta_{nk})\right)\right)&=0,\\
    -\frac{\partial}{\partial y_j}\left(\sfC_{ijop}(y)\frac{\partial\chi^{kl}_o}{\partial y_p}\right)&=\frac{\partial\sfC_{ijkl}}{\partial y_j},\\
    \int_{\bT^2}\chi^{kl}(y)\md y &= 0.
\end{align}
\end{subequations}
% where $\bdelta$ is the Kronecker delta.
The homogenized coefficient is then computed from the formula
\begin{align}
    \overline{\sfC}_{ijkl} = \int_{\bT^2} \left(\sfC_{ijkl}(y) + \sfC_{ijop}(y)\frac{\partial\chi^{kl}_o}{\partial y_p}\right)\md y.
\end{align}
The symmetries of $\sfCbar$ imply 6 independent tensor entries.

We specify the distributional inverse homogenization problem as follows. The constituent material properties $(a_1, a_2, a_3)$ and Voronoi weights $(w_1, w_2, w_3)$ parameterise a spatially varying Young's modulus 
\begin{align}\label{eq:elast_Young}
    E(y; a, w) = \sum_{m=1}^3 a_m\bb1_{\Q_m(v^{(1:M)};w)}(y),
\end{align} where the Voronoi cells, $\Q$, are defined as in~\eqref{eq:per_vor},~\eqref{eq:model_i.i.d._per}. The periodic field $E(y)$ and constant Poisson ratio (assumed known and fixed at 0.3 for data generation and inference) parametrise the periodic 4-tensor field $\sfC(y)$. To use automatic differentiation in minimizing~\eqref{eq:objective_per}, we apply the continuous relaxation of~\eqref{eq:cont_relax_A_crystalLag} to~\eqref{eq:elast_Young}. The inference task is to identify $(a,w)$ from collections of homogenized coefficients $\overline{\sfC}$ generated via randomisation of the Voronoi seeds.

\textbf{Data and Inference:}
We use $N=5\cdot10^3$ data samples. We set $a^\dagger=(1, 2, 4),$ $w^\dagger=(0,0.05, 0.1).$ To generate data we set $\tau$ to $10^{-20}$ to have sharp numerically discontinuous Voronoi cells. We do not include noise in this experiment.
We fix the computational mesh to be $100\times100$ when representing $\sfC(y)$ and in solving for the vector-valued corrector functions with finite elements in~\eqref{eq:elast_corrector}.

\textbf{Results:}
For inference we set $\tau$ to $10^{-3}$ to allow gradients to flow through the Voronoi construction and estimate $w$. 
Figure~\ref{fig:Elast_conv} shows the convergence of the objective, and $(w, a)$ against $(w^\dagger, a^\dagger).$ 
We run $5$ random restarts with $(w,a)$ independently drawn from $\Unif(0.01, 0.5)$ and $\Unif(0.5, 3)$ respectively, and independently component-wise, and extend the optimization of the initialization achieving lowest Objective value by an additional 1000 iterations.
The relative error on the recovered $a$ is $2.57\%$ and for $w$  it is $6.23\%$.

\begin{figure}[t] 
    \centering
    \begin{subfigure}[b]{0.32\linewidth}
        \centering
        \includegraphics[width=\linewidth]{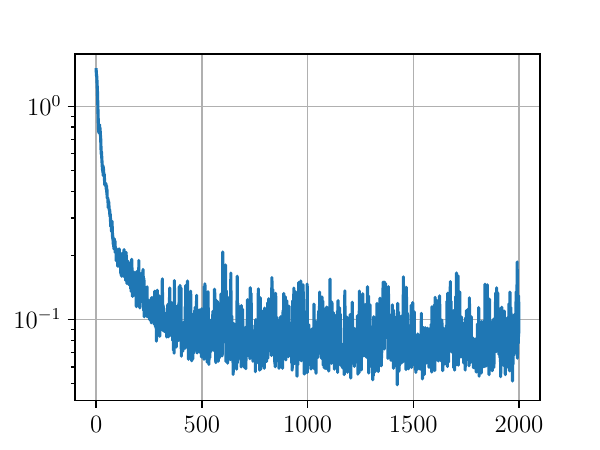}
        \caption{Objective}
        \label{fig:alpha}
    \end{subfigure}
        \begin{subfigure}[b]{0.32\linewidth}
        \centering
        \includegraphics[width=\linewidth]{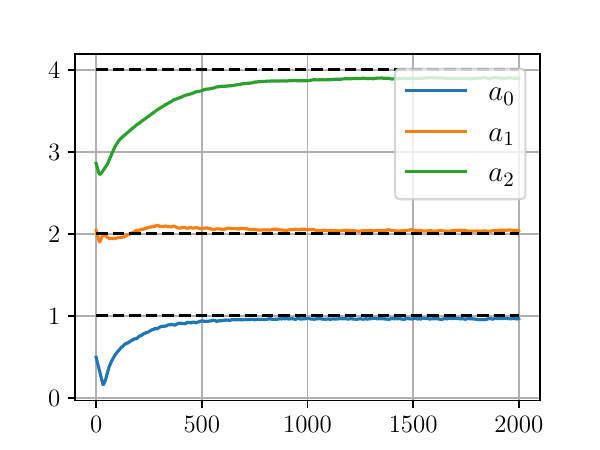}
        \caption{$a$}
        \label{fig:hist}
    \end{subfigure}
    \begin{subfigure}[b]{0.32\linewidth}
        \centering
        \includegraphics[width=\linewidth]{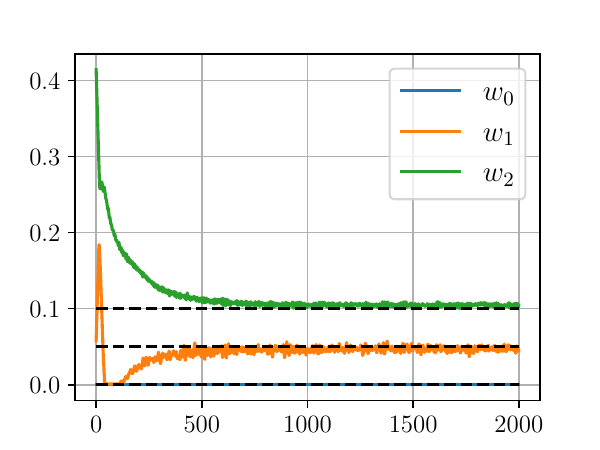}
        \caption{$w$}
        \label{fig:loss}
    \end{subfigure}
    \caption{Convergence of \eqref{eq:objective_per} and $(w, a)$  for the distributional inverse elastic homogenizaton.}
    \label{fig:Elast_conv}
\end{figure}

\textbf{Summary}
The work in this section shows that the ideas developed for scalar
thermo-mechanical problems earlier in the paper do extend
to the mechanical problems of elasticity. The concept that it is
possible to invert for the statistics of microstructure, and that
it is feasible to do so by means of distributional matching, have been shown to apply. The increased computational burden for the vector-valued
elasticity problems suggests that use of surrogate modelling, to evaluate the
microstructure to macroscopic response map, has great potential in this area;
the paper \cite{akyildiz2025efficient} provides a methodology that is
well-adapted to this task; and the paper \cite{Bhattacharyaetal2024} provides
theory concerning the ability to approximate the microstructure to macroscopic response map using neural operators.
}

\section{Conclusions and Future Work}\label{sec:conclusion}

In this paper we present a new approach for investigating the microscopic properties of materials by way of measurement of large collections of bulk material properties. Using this approach it is possible to leverage indirect, homogenized, observed quantities to characterize microstructure at a distributional level, both in the case of periodic and stochastic homogenization. In practice one may think of collecting the
bulk material property data by making measures at different, decorrelated, points in the same
specimen. Algorithmically we model this by using spatial copula. Interesting directions for future work include generalization to two and three dimensional elasticity; a starting point would be to first consider similar scalar-valued microstructures which results in 4-tensor homogenized coefficients. We anticipate that going from measured symmetric $2\times 2$ homogenized coefficient, $\Abar$, which has three independent components, to measuring up to 21 independent components for fourth order elasticity tensors, $\overline{\mathsf{C}}$, may provide the help needed to tackle this broader class of problems. The direction we describe and deploy in order to accelerate computations, using surrogate modelling, 
is also applicable to elasticity.

\addtocontents{toc}{\protect\setcounter{tocdepth}{1}}

\appendix
\section{A First Probability Distribution on the Simplex}\label{sec:proof_claim_ident}
In this section we prove Theorem~\ref{claim:exist_identif}, and also extend it to the
case $M>2$. In 1D it is more convenient to work with reciprocals of material
properties and so we define $c_m=a_m^{-1}$, $\overline{c}=\bigl(\overline{a}\bigr)^{-1}$ 
and write $c=(c_1, \hdots, c_M)$. Then \eqref{eq:to} may be written
\begin{align}
    \overline{c}=\langle c, \rho\rangle.
\end{align}
Using the fact that $\rho_M=1-\sum_{m=1}^M\rho_m$ we can write
\begin{align}\label{eq:vol_frac_1D_Mm1}
\cbar = c_M\sum_{m=1}^{M-1}(c_M^{-1}c_m-1)\rho_m+c_M.
\end{align}
Thus we have established the explicit dependence of $\abar$ (equivalently $\cbar$) on
the $M$ material parameters $c$ and the $M-1$ volume fraction parameters $(\rho_1,\cdots, \rho_{M-1})$. We now explore a volume-fraction distribution on the simplex and the map taking its parameters to the distribution of homogenized coefficients. We need a preliminary technical result. To state the result,
note that any uniform distribution on an interval may be written as
$\Unif(b-l,b+l)$ and we refer to $b$ as the \emph{shift} and $2l$ as the \emph{width}
of the probability distribution.\\

\begin{lemma}\label{lem:identif_sums_uniforms}
    The convolution of $n$ uniform distributions uniquely identifies the set of all $n$ widths and the sum of the $n$ shifts of the individual distributions.
\end{lemma}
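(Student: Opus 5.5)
Use characteristic functions, because convolution of distributions becomes a product of characteristic functions. For $X_i\sim\Unif(b_i-l_i,b_i+l_i)$ with $l_i>0$, write $\varphi_{X_i}(t)=e^{\mathrm{i} b_i t}\,\sin(l_i t)/(l_i t)$. The convolution $\bP_1\ast\cdots\ast\bP_n$ is the law of $S=\sum_i X_i$, so
\begin{align}
\varphi_S(t)=e^{\mathrm{i} B t}\prod_{i=1}^n \frac{\sin(l_i t)}{l_i t},\qquad B=\sum_{i=1}^n b_i .
\end{align}
The distribution of $S$ determines $\varphi_S$. We must show that $\varphi_S$ determines $B$ and the multiset $\{l_1,\dots,l_n\}$. (If some $l_i=0$ is allowed, that factor is $1$ and is simply absent. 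The statement is then about the positive widths together with the total shift.)

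\textbf{Step 1 (the shift).} The product $P(t)=\prod_i \sin(l_i t)/(l_i t)$ is real and even. Hence $B$ is identified from the mean, $\bE[S]=B$, since the uniforms have finite means and are symmetric about $b_i$. Equivalently, $e^{-\mathrm{i}Bt}\varphi_S(t)$ must be real. Dividing out $e^{\mathrm{i}Bt}$ leaves $P(t)$, which is a function known from the data.

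\textbf{Step 2 (the widths).} Extend $P$ to an entire function of $z\in\mathbb{C}$. Each factor $\sin(l_i z)/(l_i z)$ is entire with simple zeros exactly at $z=k\pi/l_i$, $k\in\mathbb{Z}\setminus\{0\}$. The entire extension is unique, since $P$ is real-analytic on $\bR$. Therefore the zero set of $P$, counted with multiplicity, is the multiset union $\bigcup_i\{k\pi/l_i:k\neq0\}$, and this multiset is determined by the law of $S$.

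We recover the $l_i$ inductively. The smallest positive zero is $\pi/l_{\max}$, which identifies $l_{\max}$, and its multiplicity $r$ gives the number of indices with $l_i=l_{\max}$. We then divide out $\bigl(\sin(l_{\max} z)/(l_{\max} z)\bigr)^r$, or equivalently remove the zeros $\{k\pi/l_{\max}\}$ with multiplicity $r$ from the multiset, and repeat on the remaining $n-r$ factors. After at most $n$ steps all widths $2l_i$ are determined, up to labelling, which is the claimed uniqueness of the set of widths.

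\textbf{Main obstacle.} The delicate point is the multiplicity bookkeeping when widths are commensurate. For example, with $l_1=2l_2$ the zeros of the $l_2$-factor sit inside those of the $l_1$-factor. Naive zero reading then fails to separate the factors, but the greedy removal from the smallest zero handles it, because the smallest remaining positive zero always belongs to the largest remaining width.

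An alternative route for this step avoids zero counting. Take $\log|P(t)|$ and use the product expansion $\sin(x)/x=\prod_k(1-x^2/k^2\pi^2)$. Near $t=0$ this gives $\log P(t)=-\sum_{j\ge1}\frac{\zeta(2j)}{j\pi^{2j}}\bigl(\sum_i l_i^{2j}\bigr)t^{2j}$, so the power sums $p_j=\sum_i l_i^{2j}$ are identified for all $j$. Newton's identities applied to $p_1,\dots,p_n$ then recover the multiset $\{l_i^2\}$, hence $\{l_i\}$ since $l_i>0$. I would present this Taylor-coefficient argument as the main proof, because it is clean and purely algebraic, and keep the zero-set argument as intuition.
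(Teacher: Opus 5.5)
Your proof is correct, and it takes a different route from the paper's. The paper also works with characteristic functions, but in the opposite order. It first compares the zero sets of $\prod_j \mathrm{sinc}(2\pi l_j\xi)$ to read off the widths. Only then does it cancel the sinc factors, obtaining $e^{-\mathrm{i}2\pi\xi\sum_j b_j}=e^{-\mathrm{i}2\pi\xi\sum_j b_j'}$ and hence the total shift. You get the total shift at once from $\mathbb{E}[S]$, which is simpler and does not depend on the width step.

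More importantly, the paper compares zero sets as plain sets, and that alone does not determine the widths. For example, half-widths $(1,\tfrac12)$ and $(1,1)$ give the same zero set $\{k\pi: k\neq 0\}$ of $\prod_i \sin(l_i t)/(l_i t)$. Your multiplicity bookkeeping, peeling greedily from the smallest positive zero, closes exactly this hole.

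Your power-sum route is cleaner still. It amounts to the fact that the even cumulants of $S$ are $\kappa_{2j}(S)=c_j\sum_i l_i^{2j}$, with explicit nonzero Bernoulli-number constants $c_j$. So, for known $n$, the first $2n$ moments of $S$ already determine both the total shift and the multiset of widths, with no analytic continuation or zero counting. The paper's zero-based picture instead uses the global structure of the characteristic function. It makes the widths visible as zero spacings, but it needs your multiplicity argument to be a complete proof.
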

\begin{proof}\label{proof:lem:identif_sums_uniforms}
        The Fourier transform (characteristic function) of $f:=\Unif(b-l,b+l)$ is given by~\cite{bradley2002distribution}~\footnote{Within this proof we use the symbol $\xi$ to denote wave number. This is not to be confused with its use as the noise variable in the main body of the paper.}
    \begin{align*}
        \widehat{f}(\xi) &= \int_{b-l}^{b+l}\frac{1}{2l}e^{-i2\pi\xi t}\md t=e^{-i2\pi b\xi}\mathrm{sinc}(2\pi l\xi)\nonumber,
    \end{align*}
    Now let $f_j := \Unif(b_j-l_j,b_j+l_j)$. Then
    \begin{align*}
        (f_1*\cdots * f_n\widehat{)}(\xi) = \prod_{j=1}^n e^{-i2\pi b_j\xi} \sinc(2\pi l_j\xi).
    \end{align*}
    On $L^2({\Rb})$, the Fourier transform is bijective, and as $e^{-i2\pi\xi}$ has no complex zeros,
    \begin{subequations}
        \label{eq:labelit}
    \begin{align}
        \prod_{j=1}^n e^{-i2\pi b_j\xi} \sinc(2\pi l_j\xi) &= \prod_{j=1}^n e^{-i2\pi b_j'\xi} \sinc(2\pi l_j'\xi)\\
        \implies \left\{\xi\in\Rb:\prod_{j=1}^n \sinc(2\pi l_j\xi)=0\right\} &=  \left\{\xi\in\Rb:\prod_{j=1}^n\sinc(2\pi l_j'\xi)=0\right\}.
    \end{align}
    \end{subequations}
    The zeros of $\prod_{j=1}^n \sinc(2\pi l_j\xi)$ are $\cup_{j=1}^n \{k/2l_j\}_{k\in\bN\backslash\{0\}}$. These uniquely identify $\cup_{j=1}^n\{l_j'\}=\cup_{j=1}^n\{l_j\}$. \\
    
    Since $l_j=l_j'$, the identity (\ref{eq:labelit}a) implies that
    \begin{align*}
    \prod_{j=1}^n e^{-i2\pi b_j\xi} &=  \prod_{j=1}^n e^{-i2\pi b_j'\xi}.
    \end{align*}
    Hence, $\exp(-i2\pi\xi\sum_{j=1}^n {b_j}) = \exp(-i2\pi\xi\sum_{j=1}^n {b_j'}), \forall\xi\in\Rb$. Thus we can identify $\sum_{j=1}^n {b_j}$.
\end{proof}

Now we address Theorem~\ref{claim:exist_identif}, and generalization to $M>2.$ Define, for $m=1, \cdots, M-1$,
\begin{align}
    \bP_{\rho_m} =\Unif(b_j-l_j, b_j+l_j).\nonumber
\end{align}
We assume that the $l_j>0$ so that the uniform distribution has support with positive Lebesgue
measure. To ensure positivity of the $\rho_j$ drawn from this distribution we require 
$b_j- l_j\geq0.$ We now construct a probability measure by taking the independent product of
these measures, to obtain
$$\bP_{\rho_{_{1:M-1}}} = \prod_{j=1}^{M-1}\,\bP_{\rho_j}.$$
It remains to specify the choice of $\rho_M$ in a manner which ensure $\rho \in \bsDelta_{M-1}$,
the probability simplex. To this end we add the assumption $\sum_{m=1}^{M-1}b_j+l_j\leq1$
and define
\begin{align}
    f(\rho_{_{1:M-1}})=1-\sum_{j=1}^{M-1}\rho_j.\nonumber
\end{align}
Then we set $\bP_\rho=(\Id.(\Bigcdot), f(\Bigcdot))_\#\bP_{\rho_{_{1:M-1}}}$, 
thereby creating a distribution on $\bsDelta_{m-1}.$  Define  $\theta=(\sum_{m=1}^{M-1}b_m, \{l_m\}_{m=1}^{M-1}).$
Recall formula \eqref{eq:stnal} for the homogenized diffusion coefficient.
The distribution over these coefficients implied by $\bP_\rho$ is thus
\begin{align}\label{eq:cbar_first_dist_simplex}
    \bP_{\abar}&=  \sF(\,\cdot\,;a)_\# \bP_\rho(\theta).\nonumber
\end{align}
Although the homogenization map, $\rho\mapsto\sF(\rho;a)$, is not injective,
we have the following form of distributional injectivity:

\begin{theorem}\label{th:identif_unif_simp}
Fix $a \in (0,\infty)^m.$ Then map $\theta\mapsto(\sF(\,\cdot\,; a))_\#\bP_\rho(\theta)$ is injective.
\end{theorem}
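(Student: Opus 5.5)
The plan is to reduce everything to Lemma~\ref{lem:identif_sums_uniforms}, working with reciprocals. Fix $a\in(0,\infty)^M$ and set $c=a^{-1}$ componentwise. Since $t\mapsto t^{-1}$ is a measurable bijection of $(0,\infty)$, the law of $\abar=\sF(\rho;a)$ determines the law of $\cbar=\abar^{-1}$, and conversely. So it suffices to show that the law of $\cbar$ determines $\theta=(\sum_{m=1}^{M-1}b_m,\{l_m\}_{m=1}^{M-1})$.

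By \eqref{eq:vol_frac_1D_Mm1}, $\cbar=c_M+\sum_{m=1}^{M-1}(c_m-c_M)\rho_m$, where the $\rho_m\sim\Unif(b_m-l_m,b_m+l_m)$ are independent. Each summand $(c_m-c_M)\rho_m$ is again uniform (I will check the sign case $c_m<c_M$ separately): its shift is $(c_m-c_M)b_m$ and its half-width is $|c_m-c_M|\,l_m$. The law of $\cbar$ is therefore a translate by the known constant $c_M$ of a convolution of $M-1$ uniforms. If some $c_m=c_M$, that term is a point mass with nothing to identify; the hypothesis $a_i\neq a_j$ from the surrounding setting rules this out, so I assume it.

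Next, suppose two parameters $\theta,\theta'$ give the same law of $\cbar$. Lemma~\ref{lem:identif_sums_uniforms} yields two facts:
\begin{itemize}
\item equality of the multisets $\{|c_m-c_M|\,l_m\}_{m}$ and $\{|c_m-c_M|\,l'_m\}_{m}$;
\item equality of the weighted shift sums $\sum_m(c_m-c_M)b_m=\sum_m(c_m-c_M)b'_m$.
\end{itemize}
Because $c$ is known, the scale factors $|c_m-c_M|$ are known and nonzero. The remaining work is to convert these two facts into $\{l_m\}=\{l'_m\}$ and $\sum b_m=\sum b'_m$.

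For the widths, I would first try to use the strict ordering $a_1>\dots>a_M$, which makes the factors $c_m-c_M$ distinct and ordered. I would then attempt to match each recovered scaled width to its index $m$, for example via the zero set $\bigcup_m\{k/(2|c_m-c_M|l_m)\}$ read off in the proof of the lemma.

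The hardest step, and the one I flag as the real obstacle, is passing from the weighted shift sum $\sum(c_m-c_M)b_m$ to the unweighted $\sum b_m$ that appears in $\theta$. My first attempt would be to exploit the support of the law of $\cbar$, namely $[\,c_M+\sum_m((c_m-c_M)b_m-|c_m-c_M|l_m),\;c_M+\sum_m((c_m-c_M)b_m+|c_m-c_M|l_m)\,]$, together with the breakpoints of its piecewise-polynomial density.

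I expect this attempt to fail. Both the support and the breakpoints depend on $b$ only through $\sum_m(c_m-c_M)b_m$, and the whole law depends on the widths only through the multiset of scaled widths. Two explicit counterexamples follow for $M\ge3$ with generic $c$:
\begin{itemize}
\item Perturbing $b$ along a direction orthogonal to $(c_m-c_M)_m$ but not to $(1,\dots,1)$ changes $\sum b_m$ while leaving the law of $\cbar$ unchanged.
\item Swapping $l_1$ and $l_2$ while rescaling so that $|c_1-c_M|l_1$ and $|c_2-c_M|l_2$ are exchanged also leaves the law unchanged.
\end{itemize}
The argument above therefore proves the statement cleanly when $M=2$, where the weighted and unweighted sums differ by the known factor $c_1-c_2$. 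For $M>2$, completing it requires either that all $c_m-c_M$ coincide or that $\theta$ be read as the pair (weighted shift sum, multiset of scaled widths). I would report that as the precise gap rather than paper over it.
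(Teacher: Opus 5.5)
Your analysis is correct, and it is more careful than the paper's proof. The paper takes exactly your route: it passes to $\cbar=\langle c,\rho\rangle$, writes $\cbar c_M^{-1}-1=\sum_{m<M}(c_M^{-1}c_m-1)\rho_m$ as a convolution of scaled uniforms, and applies Lemma~\ref{lem:identif_sums_uniforms}. It then simply asserts that this identifies ``the widths of all uniforms and the sum of centers of the uniforms on $\rho_{1:M-1}$.'' That is precisely the step you flag. The lemma, applied to the scaled uniforms, returns only the multiset $\{|c_m-c_M|\,l_m\}$ and the weighted sum $\sum_m(c_m-c_M)b_m$, and the paper gives no argument converting these into the individual $l_m$ and $\sum_m b_m$.

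Your counterexamples are valid. Take $M=3$ and $a=(1,\tfrac12,\tfrac13)$, so $\cbar=3-2\rho_1-\rho_2$.
With $l=(0.1,0.1)$, the shifts $b=(0.2,0.2)$ and $b'=(0.25,0.1)$ both satisfy $b_j\ge l_j$ and $\sum_j(b_j+l_j)\le 1$. They give the same $2b_1+b_2=0.6$, hence the same law of $\abar$, yet $\sum_m b_m$ is $0.4$ versus $0.35$.
With $b=(0.15,0.25)$, the widths $l=(0.05,0.2)$ and $l'=(0.1,0.1)$ give identical laws with identical $\sum_m b_m$ but different widths.
Worse, $b=(0.2,0.2)$ and $b''=(0.15,0.25)$ with $l=(0.1,0.1)$ share the same $\theta$ but give different laws of $\cbar$, since the means $3-2b_1-b_2$ differ. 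So for $M\ge3$ the map in the statement is not even well defined as a function of $\theta$, and no proof of the statement as written can exist.

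Your argument therefore establishes the correct state of affairs. The statement holds for $M=2$ (this is Theorem~\ref{claim:exist_identif}) provided $a_1\neq a_2$. Both theorem statements omit that hypothesis, but the standing assumption $a_1>\dots>a_M$ of Section~\ref{sec:1D_Per_Hom} supplies it, and you were right to import it. For $M\ge3$ the sharp identifiable quantity is the one you name: the weighted shift sum together with the multiset of scaled widths. Since $a_1>\dots>a_M$ forces the factors $c_m-c_M$ to be distinct, your first alternative repair (all $c_m-c_M$ equal) is vacuous for $M\ge3$, so reparametrizing $\theta$ is the only fix.

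Two small remarks. First, for $M=2$ you do not need the lemma at all: $\cbar$ is itself uniform with midpoint $c_2+(c_1-c_2)b$ and half-width $|c_1-c_2|\,l$, both read off from its support. Second, your reading of the lemma as returning a multiset is the right one, but the lemma's printed proof compares zero sets only. Zero sets cannot distinguish, for example, widths $\{1,\tfrac12\}$ from $\{1,1\}$, so that step needs zero multiplicities.
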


\begin{proof}\label{proof:th:identif_unif_simp}
The map ${\abar}\mapsto\cbar$ is bijective, so it suffices to study pushforward
     of $\bP_\rho(\theta)$ onto $\cbar$. Note from~\eqref{eq:vol_frac_1D_Mm1}, $\cbar c_M^{-1}-1=\sum_{m=1}^{M-1}(c_M^{-1}c_m-1)\rho_m$. Hence, we can re-write $\bP_{\cbar}=(\langle\,\cdot\,, c\rangle)_\#\bP_\rho(\theta)$ 
     as a convolution of scaled uniforms on $\rho_{1:M-1}$.
Using Lemma~\ref{lem:identif_sums_uniforms}, the distribution of $\cbar$ identifies the widths of all uniforms and the sum of centers of the uniforms on $\rho_{1:M-1}$. 
    Thus, the distributions of $\abar$ identifies $\theta$. 
    The distribution of $\rho_M$ is given by the distribution of $\sum_{m=1}^{M-1} \rho_m-1$.
\end{proof}

\section{Theorem~\ref{claim:1DDirichlet_identifiability}: Dirichlet Limit Identifiability}\label{sec:1DDirichlet_identifiability}

We find it useful to think of the  $M$-dimensional Dirichlet parameters $\alpha$ as $\alpha:=\beta\gamma$ where $\gamma\in\bsDelta_{M-1}$ and $\beta\in(0, \infty)$, noting that the condition $\sum_{m=1}^M\gamma_m=1$ implies that we still only have $M$ degrees-of-freedom. Recall the standard notation $\sDir(\alpha)$ for a Dirichlet distribution, introduced in \eqref{eq:Dirich_density}, and the definition $\sDir(\gamma, \beta):=\sDir(\gamma\beta)$, for $\gamma \in \bsDelta_{M-1}$, 
in the text that follows
 \eqref{eq:Dirich_density}. For our purposes it is useful to  extend the definition of the Dirichlet distribution to $\beta\in[0,\infty].$
To do this, we prove the existence of the limits $\beta=0, \beta=\infty$.\\

\begin{lemma}\label{lem:limits_dirichlet} Let $\gamma\in\bsDelta_{M-1}$ and $\beta \in (0,\infty).$
    Define $\sDir_0(\gamma):=\sum_{m=1}^M\gamma_m\delta_{e^m}$ and $\sDir_\infty(\gamma):=\delta_{\gamma}$.
    Then, $\sDir(\gamma, \beta) \xrightarrow[\beta\rightarrow0]{} \sDir_0(\gamma)$ and $\sDir(\gamma, \beta)\xrightarrow[\beta\rightarrow\infty]{}\sDir_\infty(\gamma)$. 
\end{lemma}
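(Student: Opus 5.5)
The plan is to represent the Dirichlet distribution through normalized Gamma variables, as recalled after \eqref{eq:Dirich_density}: if $Y_m\sim\sfGamma(\beta\gamma_m,1)$ are independent and $S=\sum_{m=1}^M Y_m$, then $\rho=Y/S\sim\sDir(\gamma,\beta)$. Both limits then reduce to one-dimensional Gamma facts. For convergence in distribution it suffices to work with moments. Since $\bsDelta_{M-1}$ is compact, the polynomials are dense in $C(\bsDelta_{M-1})$ (Stone--Weierstrass), so it is enough to show that every mixed moment $\bE[\rho^k]=\bE[\prod_m\rho_m^{k_m}]$ converges to the corresponding moment of the claimed limit. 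The moments come from the standard closed form
\begin{align*}
\bE\Bigl[\prod_{m=1}^M\rho_m^{k_m}\Bigr]=\frac{\Gamma(\beta)}{\Gamma(\beta+|k|)}\prod_{m=1}^M\frac{\Gamma(\beta\gamma_m+k_m)}{\Gamma(\beta\gamma_m)},
\end{align*}
with $|k|=\sum_m k_m$. This formula follows from the density \eqref{eq:Dirich_density} by recognising the integral as $B(\alpha+k)/B(\alpha)$.

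\textbf{Limit $\beta\to\infty$.} Write each ratio as a rising factorial:
\begin{align*}
\frac{\Gamma(x+k)}{\Gamma(x)}=x(x+1)\cdots(x+k-1)\sim x^k \quad \text{as } x\to\infty.
\end{align*}
For components with $\gamma_m>0$, the moment therefore behaves like $\prod_m(\beta\gamma_m)^{k_m}/\beta^{|k|}\to\prod_m\gamma_m^{k_m}$. This is exactly the corresponding moment of $\delta_\gamma$. If $\gamma_m=0$, then $\rho_m=0$ almost surely, by the convention that a zero parameter means a degenerate coordinate. Both sides then vanish for $k_m>0$, so those components can be dropped. Alternatively, the same limit follows from Chebyshev's inequality, since $\mathrm{Var}(\rho_m)=\gamma_m(1-\gamma_m)/(\beta+1)\to0$.

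\textbf{Limit $\beta\to0$.} This is the main step. Use $\Gamma(x+k)/\Gamma(x)=x\,(x+1)\cdots(x+k-1)$, which is $\sim x\,(k-1)!$ as $x\to0$ when $k\ge1$, and equals $1$ when $k=0$. Fix $k\neq0$ and let $s$ be the number of indices with $k_m\ge1$. The numerator product is of order $\beta^{s}\prod_{k_m\ge1}\gamma_m(k_m-1)!$, while the prefactor $\Gamma(\beta)/\Gamma(\beta+|k|)\sim 1/(\beta\,(|k|-1)!)$. Hence the moment is $O(\beta^{s-1})$, which tends to $0$ when $s\ge2$. When $s=1$, i.e.\ $k=k_m e^m$, it tends to $\gamma_m$. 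These values are exactly the moments of $\sDir_0(\gamma)=\sum_m\gamma_m\delta_{e^m}$: on that measure $\prod_j\rho_j^{k_j}$ is nonzero only at a vertex $e^m$, and only when all the mass of $k$ sits on coordinate $m$. The $k=0$ moment is trivially $1$. Moment convergence on the compact simplex then gives weak convergence in both limits.

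I expect the main obstacle to be bookkeeping rather than ideas: handling degenerate $\gamma_m=0$ coordinates, and justifying the moment-determinacy/density argument. The cleanest route for the latter is Stone--Weierstrass on the compact simplex, combined with the fact that all measures involved are supported on $\bsDelta_{M-1}$.
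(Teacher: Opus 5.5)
Your proof is correct, but it takes a different route from the paper's. The paper treats the two limits separately and somewhat heuristically.
\begin{itemize}
\item For $\beta\to0$ it combines two facts. First, the mean is invariant: $\bE[\rho_k]=\gamma_k$ for every $\beta>0$. Second, it claims, reading this off the density, that mass away from the vertices vanishes. Any limit is then a mixture of vertex masses, and the preserved mean forces the weights to be $\gamma_m$.
\item For $\beta\to\infty$ it uses the Gamma representation with a CLT approximation $x_j=\beta\gamma_j+\mathcal{O}(\sqrt{\beta})$, and then passes to the ratio.
\end{itemize}

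You handle both limits with one mechanism. You use the closed-form mixed moments $B(\alpha+k)/B(\alpha)$ and explicit rising-factorial asymptotics. Stone--Weierstrass on the compact simplex then upgrades moment convergence to weak convergence.

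What your route buys is rigor and uniformity. The paper's concentration claim is stated for any $A$ not containing a vertex. That is false as written: take $A=\bsDelta_{M-1}$ minus its vertices, which has probability one for every $\beta>0$. The claim needs $A$ bounded away from the vertices, and it is asserted rather than derived. Your bound that every moment with $s\ge2$ active indices is $\mathcal{O}(\beta^{s-1})$ supplies this concentration quantitatively, for instance $\bE[\rho_i\rho_j]\to0$ for $i\neq j$. The same computation identifies the vertex weights, so the appeal to mean preservation is no longer needed. Likewise, your variance identity $\mathrm{Var}(\rho_m)=\gamma_m(1-\gamma_m)/(\beta+1)$ with Chebyshev replaces the paper's CLT-plus-l'H\^opital step by a one-line argument.

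The paper's route, in turn, gives a more transparent picture of the mechanism: mass collapsing to the vertices as $\beta\to0$, and Gamma variables concentrating around their means as $\beta\to\infty$. Finally, your separate treatment of $\gamma_m=0$ is harmless but unnecessary. The rising factorial $x(x+1)\cdots(x+k-1)$ already vanishes at $x=0$, so those coordinates are handled automatically once $\sfGamma(0,1)$ is read as $\delta_0$.
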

\begin{proof}\label{proof:lem:limits_dirichlet}
   Let $\alpha=\gamma \beta.$ We start with the first limit. 
   Let $\rho \sim \sDir(\alpha).$ Then
   $\bE_{\sDir(\alpha)}[\rho_k]=\alpha_k/\sum_{m=1}^M\alpha_m.$ Thus i): $\bE_{\sDir(\beta, \gamma)}[\rho_k]=\gamma_k,$ $\forall \beta>0$.
    Also, ii): from~\eqref{eq:Dirich_density} we can see
   that for any $A\subset\bsDelta_{M-1}$ such that $e^m\notin A$ for any $m\in\{1, \hdots, M\}$, then, $$\lim_{\beta\rightarrow 0}\int_A\sDir(\rho;\gamma, \beta)\md\rho=0.$$ Thus, probability
   % the density of the Dirichlet 
   concentrates on the vertices of the simplex as $\beta\rightarrow0.$
Points i) and ii) can be used to conclude convergence in distribution: $\sDir(\beta\gamma)\rightarrow\sum_{m=1}^M\gamma_m\delta_{e^m}$ as $\beta$ goes to zero.

   We now prove the second limit.
    To sample from $\sDir(\beta\gamma), \beta>0,$ first sample $x_j\sim\sfGamma(\beta\gamma_j, 1)$ where 
   $$\sfGamma(x; \beta\gamma_j, 1)=\frac{x^{\beta\gamma_j-1}e^{-x}}{\Gamma(\beta\gamma_j)}.$$ Next set
   \begin{align*}
       \rho_j=\frac{x_j}{\sum_{k=1}^M x_k},
   \end{align*}
   thus producing $\rho\sim\sDir(\beta\gamma).$
   For $\beta\geq1$ the density $\sfGamma(x;\beta\gamma_j,1)$ has unique mode at $x^\star=(\beta\gamma_j-1)$ and approaches a Gaussian with mean and variance
   $\beta\gamma_j;$ this may be shown using the central limit theorem.
   As a consequence, as $\beta$ increases, the Dirichlet density behaves like
   \begin{align*}
       \rho_j = \frac{\beta\gamma_j+\mathcal{O}(\sqrt{\beta})}{\sum_{k=1}^M \beta\gamma_k+\mathcal{O}(\sqrt{\beta})}.
   \end{align*}
   Applying l'H\^opital reveals, for $\beta\rightarrow\infty$, $\rho_j=\gamma_j$. These calculations can be used to establish convergence in distribution: $\sfDir(\beta\gamma)\rightarrow\delta_\gamma.$
\end{proof}

We now define the 1D distributional homogenization map 
\begin{align}
    \bP_{\abar}(\gamma
    , \beta, a) = \sF(\,\cdot\,; a)_\#\sDir(\gamma, \beta).\nonumber
\end{align}
The following proves Theorem~\ref{claim:1DDirichlet_identifiability}.
\vspace{1em}

\begin{lemma}\label{lem:inj_at_beta0}
    The map $(\gamma, \beta, a)\mapsto \sF(\,\cdot\,; a)_\#\sDir(\gamma,\beta)$ is injective at $\beta=0$, and is not injective at $\beta=\infty$.
\end{lemma}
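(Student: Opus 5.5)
Both claims reduce, via Lemma~\ref{lem:limits_dirichlet}, to computing the pushforward of the two limiting measures $\sDir_0(\gamma)$ and $\sDir_\infty(\gamma)$ under $\sF(\,\cdot\,;a)$. The continuity of $\sF(\,\cdot\,;a)$ on $\bsDelta_{M-1}$ (it is a harmonic mean of positive numbers) and the continuous mapping theorem justify identifying the $\beta\to0$ and $\beta\to\infty$ limits of $\sF(\,\cdot\,;a)_\#\sDir(\gamma,\beta)$ with these pushforwards.

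\emph{Injectivity at $\beta=0$.} Since $\sF(e^m;a)=a_m$, we obtain
\begin{align*}
\sF(\,\cdot\,;a)_\#\sDir_0(\gamma)=\sum_{m=1}^M\gamma_m\delta_{a_m}.
\end{align*}
Suppose two parameter pairs $(\gamma,a)$ and $(\gamma',a')$ give the same measure. We impose the ordering convention $a_1>\cdots>a_M>0$ used in \eqref{eq:gen_1D_per}; it is needed because permuting labels of materials is otherwise a trivial symmetry, and with distinct $a_j$ it loses no generality. We also assume $\gamma_m>0$ for all $m$, since a material with zero weight is absent and its $a_m$ cannot be seen. The support of the measure is then exactly $\{a_1,\dots,a_M\}$, a set of $M$ distinct points. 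Equality of supports, combined with the ordering, gives $a=a'$. Evaluating the measure on the singleton $\{a_m\}$ then gives $\gamma_m=\gamma'_m$. Hence the map $(\gamma,a)\mapsto\bP_{\abar}(\gamma,0,a)$ is injective, and $\beta=0$ is itself identified as the unique parameter value at which the output measure is purely atomic with $M$ atoms at the constituent values.

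\emph{Non-injectivity at $\beta=\infty$.} The pushforward is $\delta_{\sF(\gamma;a)}=\delta_{\langle a^{-1},\gamma\rangle^{-1}}$, a single Dirac mass. It therefore carries only one real number, whereas $(\gamma,a)$ has $2M-1$ degrees of freedom. For an explicit counterexample with $M=2$, take $\gamma=(1/2,1/2)$ with $a=(1,3)$, and $\gamma'=(1/4,3/4)$ with $a'$ chosen so that $\tfrac14 (a_1')^{-1}+\tfrac34 (a_2')^{-1}=\tfrac12+\tfrac16=\tfrac23$, for instance $a'=(3/4+\epsilon, \cdot)$ solved accordingly. More simply, fix $a$ and pick any two $\gamma\neq\gamma'$ in the level set $\{\langle a^{-1},\gamma\rangle=\mathrm{const}\}$. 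This level set is a hyperplane slice of the simplex of dimension $M-2\geq0$, so nontrivial when $M\geq2$ once one also varies $a$. Both choices give the same Dirac measure, so the map is not injective.

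\emph{Main obstacle.} The substantive point is the rigor of passing to the limits. Lemma~\ref{lem:limits_dirichlet} gives only convergence in distribution, so one must be explicit that the statement concerns the limiting map defined by the extended family $\sDir_0,\sDir_\infty$. One must also handle the degenerate cases $\gamma_m=0$ and the permutation symmetry, which is why the ordering convention on $a$ is invoked.
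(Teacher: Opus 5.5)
Your proposal is correct and follows the paper's argument: push the limiting measures $\sDir_0(\gamma)$ and $\sDir_\infty(\gamma)$ forward, obtaining $\sum_m\gamma_m\delta_{a_m}$ (the paper writes this in reciprocals $c_m=a_m^{-1}$), which identifies $(\gamma,a)$, and a single Dirac mass, which does not. Your explicit handling of the label-permutation symmetry (via the ordering $a_1>\cdots>a_M$) and of the requirement $\gamma_m>0$ tightens the paper's one-line claim; your $M=2$ counterexample at $\beta=\infty$ is stated loosely, and a clean one is $\gamma=(1/2,1/2)$ with $a=(3,1)$ versus $a'=(2,6/5)$, which give the same Dirac mass.
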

\begin{proof}\label{proof:inj_at_beta0}
    We start with $\beta=0.$ We work with $\abar^{-1}=\cbar$ and $a_m^{-1}=c_m$.
    Note that the distribution $(\langle c,\,\cdot\,\rangle)_\#\sDir_0(\gamma)=\sum_{m=1}^M \gamma_m\delta_{c_me^m}$. This distribution identifies $(\gamma,\beta, c)$ which in turn identifies $(\gamma,\beta, a)$. 

    For $\beta=\infty$, $(\langle c,\,\cdot\,\rangle)_\#\sDir_\infty(\gamma)=\delta_{\langle c, \gamma\rangle}$, which does not identify $(\gamma,c)$ and so does not identify $(\gamma, \beta, a)$.
\end{proof}

\section*{Acknowledgments}
The authors are thankful to John R Willis for conversation on the topic of homogenization which helped improve this paper.

\bibliographystyle{elsarticle-num} 
\bibliography{biblio}

\end{document}